\documentclass[USenglish,cleveref, autoref, thm-restate]{lipics-v2019}
\usepackage{amsmath}
\usepackage{amsthm}
\usepackage{amsfonts}
\usepackage{amssymb}
\usepackage{verbatim}
\usepackage{enumerate}
\usepackage{stmaryrd}
\usepackage[numbers]{natbib}
\usepackage{mathtools}
\usepackage{ulem}
\usepackage{cancel} %
\usepackage[utf8]{inputenc} %
\usepackage{xcolor}
\usepackage[all]{xy} \SilentMatrices
\usepackage{fge} %
\usepackage{graphicx} 
\newcommand\rsetminus{\mathbin{\mathpalette\rsetminusaux\relax}}
\newcommand\rsetminusaux[2]{\mspace{-4mu}
	\raisebox{\rsmraise{#1}\depth}{\rotatebox[origin=c]{-20}{$#1\smallsetminus$}}
	\mspace{-4mu}
}
\newcommand\rsmraise[1]{%
	\ifx#1\displaystyle .8\else
	\ifx#1\textstyle .8\else
	\ifx#1\scriptstyle .6\else
	.45%
	\fi
	\fi
	\fi}
\newcommand{\minus}{\rsetminus}

\newcommand{\defi}{\mathrel{\mathop:}=} %
\newcommand{\A}{\mathcal{A}}
\newcommand{\R}{\mathbb{R}}

\newcommand{\Q}{\mathbb{Q}}

\newcommand{\Borel}{\mathcal{B}}
\newcommand{\Power}{\mathcal{P}}
\renewcommand{\emptyset}{\varnothing}
\DeclareMathOperator{\comp}{\mathrm{comp}}

\newcommand{\Rel}{\mathcal{R}}
\newcommand{\Op}{\mathcal{O}}
\newcommand{\G}{\mathcal{G}}
\newcommand{\Zh}{\mathfrak{Z}}

\newcommand{\Logic}{\mathcal{L}}
\newcommand{\sem}[1]{{\llbracket #1 \rrbracket}} %
\newcommand{\pos}[1]{{\langle #1 \rangle}} %
\newcommand{\posg}[1]{{\langle #1 \rangle_{>q}}} %
\newcommand{\posleq}[1]{{\langle #1 \rangle_{\leq q}}} %

\DeclareMathOperator{\cf}{cf}

\newcommand{\lmp}[1]{\mathbb{#1}}
\newcommand{\axiomas}[1]{\mathit{#1}}
\newcommand{\MA}{\axiomas{MA}}
\newcommand{\CH}{\axiomas{CH}}
\newcommand{\ZFC}{\axiomas{ZFC}}

\newcommand{\sm}{\minus}
\newcommand{\sbq}{\subseteq}
\newcommand{\seplmp}{\mathcal{S}}
\newcommand{\card}[1]{{\left|#1\right|}}

\newcommand{\leb}{\mathfrak{m}}

\newcommand{\Inter}{I}

\newcommand{\oet}{\triangledown} %
\newcommand{\lpet}{\dag} %

\DeclareMathSymbol{\rest}{\mathord}{AMSa}{"16} 

\newtheorem*{example*}{Example}

\title{The Zhou Ordinal of Labelled Markov Processes over Separable Spaces} %

\titlerunning{Zhou Ordinal of Separable Labelled Markov Processes} %

\author{Martín Santiago Moroni}{Universidad Nacional de C\'ordoba. 
    Facultad de Matem\'atica, Astronom\'{\i}a,  F\'{\i}sica y
    Computaci\'on. \and     Centro de Investigaci\'on y Estudios de Matem\'atica (CIEM-FAMAF),
    Conicet. C\'ordoba. Argentina}{moroni@famaf.unc.edu.ar}{}{}%

\author{Pedro Sánchez Terraf}{Universidad Nacional de C\'ordoba. 
    Facultad de Matem\'atica, Astronom\'{\i}a,  F\'{\i}sica y
    Computaci\'on. \and     Centro de Investigaci\'on y Estudios de Matem\'atica (CIEM-FAMAF),
    Conicet. C\'ordoba. Argentina
      \and \url{https://cs.famaf.unc.edu.ar/~pedro/}}{sterraf@famaf.unc.edu.ar}{https://orcid.org/0000-0003-3928-6942}{}

\authorrunning{M.\,S. Moroni and P. Sánchez Terraf} %

\Copyright{Martín Santiago Moroni and Pedro S\'anchez Terraf} %

\ccsdesc[100]{Theory of computation---Models of
  computation---Probabilistic computation, Theory of computation---Logic---Modal and temporal logics} %

\keywords{labelled Markov process,
  probabilistic bisimilarity,
  Zhou ordinal,
  non-measurable set,
  Martin's axiom,
  Non-classical logics} %

\category{} %

\relatedversion{} %

\supplement{}%

\funding{Supported by Secyt-UNC project 33620180100465CB.}%

\acknowledgements{We are very grateful for the referee's useful 
suggestions and detailed reading. The second author wants to thank 
Chunlai Zhou for early discussions on this material during Dagstuhl 
Seminar 12411 on Coalgebraic Logics.}%

\nolinenumbers %

\hideLIPIcs  %

\EventEditors{John Q. Open and Joan R. Access}
\EventNoEds{2}
\EventLongTitle{42nd Conference on Very Important Topics (CVIT 2016)}
\EventShortTitle{CVIT 2016}
\EventAcronym{CVIT}
\EventYear{2016}
\EventDate{December 24--27, 2016}
\EventLocation{Little Whinging, United Kingdom}
\EventLogo{}
\SeriesVolume{42}
\ArticleNo{23}

\begin{document}

\maketitle

\begin{abstract}
  There exist two notions of equivalence of behavior between states of
  a Labelled Markov Process (LMP): state bisimilarity and event
  bisimilarity. The first one can be considered as an appropriate
  generalization to continuous spaces of Larsen and Skou's
  probabilistic bisimilarity, while the second one is characterized by
  a natural logic. C.~Zhou expressed state bisimilarity as the greatest
  fixed point of an operator $\Op$, and thus introduced an ordinal
  measure of the discrepancy between it and event bisimilarity. We
  call this ordinal the \emph{Zhou ordinal} of $\lmp{S}$,
  $\Zh(\lmp{S})$. When $\Zh(\lmp{S})=0$, $\lmp{S}$ satisfies the
  Hennessy-Milner property. The second author proved the existence of
  an LMP $\lmp{S}$ with $\Zh(\lmp{S}) \geq 1$ and Zhou showed that there
  are LMPs having an infinite Zhou ordinal.  In this paper we show
  that there are LMPs $\lmp{S}$ over separable metrizable spaces
  having arbitrary large countable $\Zh(\lmp{S})$ and that it is
  consistent with the axioms of $\ZFC$ that there is such a process
  with an uncountable Zhou ordinal. 
\end{abstract}

\section{Introduction}\label{sec:intro}

Equivalence of behavior, or \emph{bisimilarity} in any of its flavors
is a fundamental concept in the study of processes, logic, and many other
areas of Computer Science and Mathematics. In the case of discrete
(countable) processes, many formalizations of the concept result to be
equivalent and it can be completely described by using some form of
modal logic--- the well-known Hennessy-Milner property.

As soon as one leaves the realm of discrete processes, the question of
defining and characterizing behavior turns into a problem with various
(sometimes unexpected) mathematical edges. For the case of
\emph{labelled Markov processes (LMP)} \cite{Desharnais}, 
the first issue to be taken care of
is that the concept of probability and measure cannot be defined for
all subsets of the state space. Hence the complexity of  state
spaces (in the sense of Descriptive Set Theory \cite{Kechris}) plays an important
role. 

A notable consequence is that an LMP admits two generally different
notions of equivalence of behavior between its states: \emph{state}
bisimilarity and \emph{event}
bisimilarity. The first one can be considered as an appropriate
generalization to continuous spaces of Larsen and Skou's
probabilistic bisimilarity. On the other hand, event bisimilarity can
be characterized by 
a very simple and natural modal logic $\Logic$ defined by the
following grammar,
\[
\phi \defi \top \mid \phi_1 \land \phi_2 \mid \langle a \rangle_{>q}\phi,
\]
where $a$ ranges over possible actions of the interpreting LMP and $q$
over rationals between $0$ and $1$; the formula $\langle a
\rangle_{>q}\phi$ holds on states at which the probability of 
reaching another state satisfying $\phi$ after an $a$ transition is 
greater than $q$.

Despite its simplicity, this
logic also characterizes state bisimilarity for wide classes of LMPs
(thus, the two types of bisimilarities coincide).
Desharnais, Edalat and Panangaden
\cite{DEP} showed (building on Edalat's categorical result
\cite{Edalat}) that the category of \emph{generalized} LMP over
analytic state spaces has the Hennessy-Milner property with respect to
$\Logic$. This result
was later strengthened by Doberkat
\cite{Doberkat:2005:SSR:1089905.1089907} in that it applies to the
original category of LMP. Recently, Pachl and the second author
extended the result to LMP over universally measurable state spaces
\cite{2017arXiv170602801P}.

But if regularity assumptions on the state spaces are omitted,
the Hennessy-Milner property is lost (see \cite{Pedro20111048} by the
second author). It is therefore of interest to
understand how state bisimilarity differs from the event one for LMP
over general measurable spaces. Zhou proposed in
\cite{DBLP:journals/entcs/Zhou13} one way to quantify this difference,
by expressing state bisimilarity as the greatest fixed point of an
operator $\Op$ and pointed out an LMP for which more than $\omega$
iterates of $\Op$ are needed to reach it.

In this paper, we study the operator $\Op$ in a general setting, a
dual version $\G$ of it, and the hierarchy of relations and
$\sigma$-algebras respectively induced by them. We then define the
\emph{Zhou ordinal} $\Zh(\lmp{S})$ of an LMP $\lmp{S}$ to be the number of
iterates needed to reach state bisimilarity when one starts from the
event one. After reviewing some basic material in
Section~\ref{sec:preliminaries}, we develop the general theory of the
operators $\Op$ and $\G$ in
Section~\ref{sec:operators-op-g}. In
Section~\ref{sec:zhou-ordinal} we focus on the class
$\seplmp$ of
LMP over separable metrizable spaces, ``separable LMP''
for short, and
the supremum of the Zhou ordinals of such processes,
$\Zh(\seplmp)$. One of our main results is that  $\Zh(\seplmp)$ is a limit
ordinal of uncountable cofinality (and hence at least $\omega_1$).
In Section~\ref{sec:example}, 
we construct a family of LMPs
$\{\lmp{S}(\beta) \mid \beta\leq\omega_1\}$ having
$\Zh(\lmp{S}(\beta))= \beta$ when $\beta$ is a limit ordinal; these
processes are separable for 
countable $\beta$. We also discuss the
consistency with the axioms of set theory that the bound $\omega_1$ is actually
attained by a separable LMP. Finally, some further directions are pointed out
in Section~\ref{sec:conclusion}.

\section{Preliminaries}
\label{sec:preliminaries}

An algebra over a set $S$ is a nonempty family of subsets of $S$ 
closed under finite unions and complementation. It is a 
\emph{$\sigma$-algebra} if it is also closed under countable unions. 
Given an arbitrary family $\mathcal{U}$ of subsets of $S$, we use 
$\sigma(\mathcal{U})$ to denote the least $\sigma$-algebra over $S$ 
containing $\mathcal{U}$.
Let $(S,\Sigma)$ be a \emph{measurable space}, i.e., a set $S$ with a 
$\sigma$-algebra $\Sigma$ over $S$. We say that $(S,\Sigma)$ (or 
$\Sigma$) is countably generated if there is some countable family 
$\mathcal{U}\sbq \Sigma$ such that $\Sigma=\sigma(\mathcal{U})$. A 
\emph{subspace} of the measurable space $(S,\Sigma)$ consists of a 
subset $Y\subseteq S$ with the \emph{relative $\sigma$-algebra} 
$\Sigma\restriction Y\defi\{A\cap Y\mid A\in \Sigma\}$. Notice that if 
$\Sigma=\sigma(\mathcal{U})$, then $\Sigma\restriction 
Y=\sigma(\mathcal{U}\restriction Y)$.
If $(S_1,\Sigma_1),(S_2,\Sigma_2)$ are two measurables spaces, we say 
that $f:S_1\to S_2$ is 
\emph{$(\Sigma_1,\Sigma_2)$-measurable} if $f^{-1}(A)\in \Sigma_1$ 
for all $A\in \Sigma_2$. 

Assume now that $V\sbq S$. We will use $\Sigma_V$ to denote 
$\sigma(\Sigma \cup \{V\})$, the extension of $\Sigma$ by the set V. 
It is immediate that $\Sigma_V=\{(B_1\cap V)\cup(B_2\cap V^c)\mid 
B_1,B_2\in \Sigma\}$. It is obvious that if $\Sigma$ is 
countably generated so is $\Sigma_V$.
The \emph{sum} of two measurable spaces $(S_1,\Sigma_1)$ and 
$(S_2,\Sigma_2)$ is $(S_1\oplus S_2,\Sigma_1\oplus \Sigma_2)$, with 
the following abuse of notation: $S_1\oplus S_2$ is the disjoint 
union (direct sum qua sets) and $\Sigma_1\oplus\Sigma_2\defi\{Q_1\oplus 
Q_2\mid Q_i \in \Sigma_i\}$.
If $Y$ is a topological space, $\Borel(Y)$ will denote the 
$\sigma$-algebra generated by the open sets in $Y$, hence 
$(Y,\Borel(Y))$ 
is a measurable space, \emph{the Borel space} of $Y$.
We say that a family of sets $\mathcal{F}\sbq \Power(S)$ 
\emph{separates
points} if for every pair of distinct points $x,y$ in $S$, there is some
$A\in \mathcal{F}$ with $x\in A$ and $y \notin A$. 
We have the following proposition
\begin{proposition}[{\cite[Prop. 
12.1]{Kechris}}]\label{prop:meas-space-equivalences}
  Let $(S,\Sigma)$ be a measurable space. The following are 
  equivalent:
  \begin{enumerate}
    \item $(S,\Sigma)$ is isomorphic to some $(Y,\Borel(Y))$, where 
    $Y$ is separable metrizable.
    \item $(S,\Sigma)$ is isomorphic to some $(Y,\Borel(Y))$ for 
    $Y\sbq [0,1]$.  
    \item $(S,\Sigma)$ is countably generated and separates points.
  \end{enumerate}
\end{proposition}

A class $\mathcal{M}$ of subsets of $S$ is \emph{monotone} if it is 
closed under the formation of monotone unions and intersections. 
Halmos' Monotone Class Theorem will be frequently used in this work. 

\begin{theorem}[{\cite[Thm.~3.4]{billingsley}}]
  If $\mathcal{F}$ is an algebra of sets and $\mathcal{M}$ is a 
  monotone class, then $\mathcal{F}\subseteq \mathcal{M}$ implies 
  $\sigma(\mathcal{F})\subseteq \mathcal{M}$.
\end{theorem}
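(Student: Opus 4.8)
The statement is the classical Monotone Class Theorem, and the plan is to reduce it to the identification of $\sigma(\mathcal{F})$ with the smallest monotone class containing $\mathcal{F}$. First I would note that an arbitrary intersection of monotone classes is again monotone and that $\Power(S)$ is a monotone class containing $\mathcal{F}$; hence a least monotone class $m(\mathcal{F})$ with $\mathcal{F} \sbq m(\mathcal{F})$ exists. Since $\mathcal{M}$ is a monotone class containing $\mathcal{F}$, minimality gives $m(\mathcal{F}) \sbq \mathcal{M}$ for free, so the whole theorem follows once we establish $\sigma(\mathcal{F}) \sbq m(\mathcal{F})$. The reverse inclusion $m(\mathcal{F}) \sbq \sigma(\mathcal{F})$ is immediate because every $\sigma$-algebra is a monotone class, so the genuine target is the identity $\sigma(\mathcal{F}) = m(\mathcal{F})$.

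To obtain $\sigma(\mathcal{F}) \sbq m(\mathcal{F})$ I would show that $m(\mathcal{F})$ is itself a $\sigma$-algebra; as it contains $\mathcal{F}$, this forces $\sigma(\mathcal{F}) \sbq m(\mathcal{F})$. Here the useful observation is that a monotone class which is also an algebra is automatically a $\sigma$-algebra: any countable union $\bigcup_n A_n$ equals the increasing union of the finite unions $A_1 \cup \dots \cup A_n$, which lie in the algebra, so closure under monotone unions promotes finite unions to countable ones. It therefore suffices to prove that $m(\mathcal{F})$ is closed under complementation and under finite unions.

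Both closures follow by the \emph{good sets} principle applied to auxiliary monotone classes. For complements, set $\mathcal{N} \defi \{A : A^{c} \in m(\mathcal{F})\}$; since complementation exchanges increasing and decreasing sequences and $m(\mathcal{F})$ is closed under both kinds of monotone limits, $\mathcal{N}$ is a monotone class, and $\mathcal{F} \sbq \mathcal{N}$ because $\mathcal{F}$ is an algebra, so minimality yields $m(\mathcal{F}) \sbq \mathcal{N}$, i.e.\ $m(\mathcal{F})$ is closed under complements. For unions, fix a set $A$ and put $\mathcal{K}(A) \defi \{B : A \cup B \in m(\mathcal{F})\}$, which is again a monotone class. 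The argument proceeds in two passes: when $A \in \mathcal{F}$ we have $\mathcal{F} \sbq \mathcal{K}(A)$ (as $\mathcal{F}$ is an algebra), hence $m(\mathcal{F}) \sbq \mathcal{K}(A)$, which says $A \cup B \in m(\mathcal{F})$ for every $A \in \mathcal{F}$ and $B \in m(\mathcal{F})$; fixing instead an arbitrary $B \in m(\mathcal{F})$, this first pass shows $\mathcal{F} \sbq \mathcal{K}(B)$, so once more $m(\mathcal{F}) \sbq \mathcal{K}(B)$, giving $A \cup B \in m(\mathcal{F})$ for all $A, B \in m(\mathcal{F})$.

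The main obstacle is precisely this two-pass bootstrapping for unions: in a single step one only sees that $A \cup B \in m(\mathcal{F})$ when one operand lies in the algebra $\mathcal{F}$, and it is the symmetry of union, together with the fact that each $\mathcal{K}(A)$ is a monotone class, that upgrades the conclusion from $\mathcal{F} \times m(\mathcal{F})$ to $m(\mathcal{F}) \times m(\mathcal{F})$. With closure under complements and finite unions secured, $m(\mathcal{F})$ is a $\sigma$-algebra, whence $\sigma(\mathcal{F}) = m(\mathcal{F}) \sbq \mathcal{M}$, as desired.
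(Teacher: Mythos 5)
Your proof is correct; note that the paper does not prove this statement at all but simply cites it as Theorem~3.4 of Billingsley, and your argument --- identifying $\sigma(\mathcal{F})$ with the minimal monotone class $m(\mathcal{F})$ and using the good-sets bootstrapping to show $m(\mathcal{F})$ is an algebra, hence a $\sigma$-algebra --- is essentially the classical proof from that reference. The only cosmetic difference is that you establish closure under complements and under unions in separate passes, whereas Billingsley's version handles both at once via a single symmetric auxiliary class $\mathcal{G}_A \defi \{B : A\cup B,\ A\cap B^c,\ A^c\cap B \in m(\mathcal{F})\}$; the two-pass symmetry trick is the same.
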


Given a measurable space $(S,\Sigma)$, a \emph{subprobability 
measure} on $S$ is a $[0,1]$-valued set function $\mu$ defined on 
$\Sigma$ such that $\mu(0)=0$ and for any pairwise disjoint 
collection $\{A_n\mid n\in \omega\}\sbq \Sigma$, we have 
$\mu(\bigcup_{n\in \omega}A_n)=\sum_{n\in \omega}\mu(A_n)$. In 
addition, for \emph{probability} measures we require $\mu(S)=1$. If 
$\Sigma\sbq \Sigma'$ and $\mu,\mu'$ are measures defined on 
$(S,\Sigma),(S,\Sigma')$ respectively, we say that \emph{$\mu'$ 
extends $\mu$} to $(S,\Sigma')$ when $\mu'\restriction \Sigma=\mu$.
A key idea in the construction of examples is the possibility of 
extending a measure in the following particular way:
\begin{theorem}\label{thm:measure-extension}
  Let $\mu$ be a finite measure defined in $(S,\Sigma)$ and let $V 
  \sbq S$ be a non-$\mu$-measurable set. Then there are extensions 
  $\mu_0$ and $\mu_1$ of $\mu$ to $\Sigma_V$ such that 
  $\mu_0(V)\neq 
  \mu_1(V)$.
\end{theorem}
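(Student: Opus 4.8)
The plan is to realise the two required extensions as the two \emph{extreme} ones, assigning to $V$ its inner and outer measures. Write $\mu^*(V)\defi\inf\{\mu(B):B\in\Sigma,\ V\sbq B\}$ and $\mu_*(V)\defi\sup\{\mu(B):B\in\Sigma,\ B\sbq V\}$ for the outer and inner measures of $V$; the hypothesis that $V$ is non-$\mu$-measurable says precisely that $\mu_*(V)<\mu^*(V)$. Since $\mu$ is finite, $V$ has a \emph{measurable cover} $C\in\Sigma$ (with $V\sbq C$ and $\mu(C)=\mu^*(V)$) and a \emph{measurable kernel} $K\in\Sigma$ (with $K\sbq V$ and $\mu(K)=\mu_*(V)$), obtained by intersecting a measure-minimising, resp. unioning a measure-maximising, sequence of $\Sigma$-sets approximating $V$ from outside, resp. inside. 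The property that drives the entire argument is the pair of null-set facts: every $D\in\Sigma$ with $D\sbq C\sm V$ satisfies $\mu(D)=0$ (the cover property), and every $D\in\Sigma$ with $D\sbq V\sm K$ satisfies $\mu(D)=0$ (the kernel property); both follow immediately from the extremality of $\mu(C)$ and $\mu(K)$.

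Using the description $\Sigma_V=\{(B_1\cap V)\cup(B_2\cap V^c)\mid B_1,B_2\in\Sigma\}$ recorded above, I would then define on $\Sigma_V$ the two set functions
\[
\mu_1\bigl((B_1\cap V)\cup(B_2\cap V^c)\bigr)\defi\mu(B_1\cap C)+\mu(B_2\cap C^c),
\]
\[
\mu_0\bigl((B_1\cap V)\cup(B_2\cap V^c)\bigr)\defi\mu(B_1\cap K)+\mu(B_2\cap K^c),
\]
both landing in $[0,\mu(S)]$. The crux of the proof, and the step I expect to require the most care, is showing that these are \emph{well defined}, i.e.\ independent of the representative pair $(B_1,B_2)$. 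If $(B_1\cap V)\cup(B_2\cap V^c)=(B_1'\cap V)\cup(B_2'\cap V^c)$, then intersecting with $V$ and with $V^c$ yields $B_1\triangle B_1'\sbq V^c$ and $B_2\triangle B_2'\sbq V$. For $\mu_1$, the set $(B_1\triangle B_1')\cap C$ is a measurable subset of $C\sm V$, hence $\mu$-null by the cover property, so $\mu(B_1\cap C)=\mu(B_1'\cap C)$; and $(B_2\triangle B_2')\cap C^c\sbq V\cap V^c=\emptyset$ because $C^c\sbq V^c$, so the second summand is literally unchanged. The verification for $\mu_0$ is symmetric, invoking the kernel property on the null subsets of $V\sm K$.

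It then remains to check, routinely, that each $\mu_i$ is a measure extending $\mu$. Taking $B_1=B_2=B$ gives $\mu_i(B)=\mu(B\cap C)+\mu(B\cap C^c)=\mu(B)$ (and likewise with $K$), so both restrict to $\mu$ on $\Sigma$. For countable additivity, given pairwise disjoint $A_n=(B_1^n\cap V)\cup(B_2^n\cap V^c)$, disjointness forces $B_1^n\cap B_1^m\sbq V^c$ and $B_2^n\cap B_2^m\sbq V$ for $n\neq m$; the same null-set properties make the families $\{B_1^n\cap C\}_n$ and $\{B_2^n\cap C^c\}_n$ pairwise disjoint modulo $\mu$-null sets, so the $\sigma$-additivity of $\mu$ transfers termwise to $\mu_1$, and analogously to $\mu_0$. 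Finally, evaluating at $B_1=S$, $B_2=\emptyset$ gives $\mu_1(V)=\mu(C)=\mu^*(V)$ and $\mu_0(V)=\mu(K)=\mu_*(V)$, which are distinct by non-measurability; hence $\mu_0$ and $\mu_1$ are the desired extensions with $\mu_0(V)\neq\mu_1(V)$. I would also remark that the convex combinations $\mu_\lambda\defi\lambda\mu_1+(1-\lambda)\mu_0$ give extensions with $\mu_\lambda(V)=\mu_*(V)+\lambda\bigl(\mu^*(V)-\mu_*(V)\bigr)$ realising every value in $[\mu_*(V),\mu^*(V)]$, although only the two endpoints are needed here.
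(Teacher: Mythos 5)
Your proof is correct and complete. Note that the paper itself states this theorem without proof, treating it as a classical fact, so there is no internal argument to compare against; what you have written is the standard argument (essentially Łoś--Marczewski): realise the two extensions as the extreme ones via a measurable cover $C$ and a measurable kernel $K$, use the nullity of all measurable subsets of $C\sm V$ and of $V\sm K$ to get well-definedness and to transfer $\sigma$-additivity modulo null sets, and read off $\mu_1(V)=\mu^*(V)>\mu_*(V)=\mu_0(V)$. Two points worth highlighting as exactly right: your reading of ``non-$\mu$-measurable'' as $\mu_*(V)<\mu^*(V)$ (i.e.\ $V$ not in the $\mu$-completion of $\Sigma$) is the correct one --- the statement would be false if it meant merely $V\notin\Sigma$, since a set differing from a $\Sigma$-set by a subset of a null set receives the same value under every extension; and the disjointness-modulo-null bookkeeping in the additivity step is the only place where care is genuinely needed, and you handle it properly. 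The closing observation that convex combinations realise every value in $[\mu_*(V),\mu^*(V)]$ is also correct, though not needed for the paper's purposes.
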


\begin{definition} 
  A \emph{Markov kernel} on $(S,\Sigma)$ is a 
  function $\tau:S\times \Sigma \rightarrow [0,1]$ such that 
  for each fixed $s \in S$, $\tau(s,\cdot):\Sigma \rightarrow [0,1]$
  is a  subprobability measure, and for each fixed set  $X \in
  \Sigma$, $\tau(\cdot,X):S \rightarrow [0,1]$ is  $(\Sigma,\Borel[0,1])$-measurable.
\end{definition}

These kernels  will play the role of transition functions in
the processes we define next. Let $L$ be a countable set.

\begin{definition}\label{def LMP}
  A \emph{labelled  Markov process (LMP)} with \emph{label set} $L$ is
  a  triple $\lmp{S}=(S,\Sigma,\{\tau_a
  \mid a \in L\})$, where $S$ is a set of \emph{states}, $\Sigma$ is a
  $\sigma$-algebra over   $S$, and for each  $a\in L$, $\tau_a:S
  \times \Sigma \rightarrow  [0,1]$ is a  Markov kernel. An LMP is
  said to be \emph{separable} if its state space is countably generated
  and separates points.
\end{definition}
By Proposition~\ref{prop:meas-space-equivalences}, the restriction to 
separable LMP is equivalent to
studying processes whose state space is a subset of Euclidean
space.

\begin{example}\label{exm:lmp-U}
  We will now present the LMP $\lmp{U}$, which was introduced (under 
  the name $\mathbf{S_3}$) in \cite{Pedro20111048}. This will be an 
  important example throughout this paper to illustrate concepts and 
  motivate constructions. From this point onwards, $\Inter$ will 
  denote the open interval $(0,1)$, $\leb$ will be the Lebesgue 
  measure on $\Inter$ and $\Borel_V$ will be the $\sigma$-algebra 
  $\sigma(\Borel(\Inter) \cup \{V\})$, where $V$ is a Lebesgue 
  non-measurable subset of $\Inter $. By 
  Theorem~\ref{thm:measure-extension} we have two extensions $\leb_0$ 
  and $\leb_1$ of $\leb$ such that $\leb_0(V)\neq\leb_1(V)$. Also, 
  let $\{q_n\}_{n\in \omega}$ be an enumeration of the rationals in 
  $\Inter$ and   define $B_n \defi (0,q_n)$; hence   $\{B_n\mid n\in 
  \omega\}$ is a countable generating family of $\Borel(\Inter )$.
  
  Let $s,t,x \notin \Inter $ be mutually distinct; we may view 
  $\leb_0$ and $\leb_1$ as measures defined on the sum $\Inter \oplus 
  \{s,t,x\}$, supported on $\Inter $. The label set will be $L\defi 
  \omega \cup \{\infty\}$. Now define $\lmp{U}=(U,\Upsilon,\{\tau_n 
  \mid n\in L\})$ such that
  \begin{align*}
    (U,\Upsilon)& \defi (\Inter \oplus\{s,t,x\},\Borel_V\oplus 
    \Power(\{s,t,x\})), \\
    \tau_n(r,A) &\defi \chi_{B_n}(r)\cdot \delta_x(A),\\  
    \tau_\infty(r,A)&\defi \chi_{\{s\}}(r)\cdot   
    \leb_0(A)+\chi_{\{t\}}(r)\cdot \leb_1(A)
  \end{align*}
  when $n\in \omega$ and $A\in \Upsilon$. This defines an LMP since 
  for all $r$, $0\leq
  \chi_{B_a}(r) \leq 1$ and  $0\leq
  \chi_{\{s\}}(r) + \chi_{\{t\}}(r)  \leq 1$ and we infer 
  measurability because $\tau_l(\cdot,A)$ is
  always a linear combination of measurable functions.
  
  The dynamics of this process goes intuitively as follows: The 
  states $s$ and $t$  can only make an $\infty$-labelled transition  
  to a ``uniformly distributed'' state in $\Inter $, but they 
  disagree on the probability of reaching $V\sbq \Inter$. Then, each 
  point of $B_n\subseteq \Inter$ can  make an $n$-transition to $x$. 
  Finally, $x$ can make no transition at all (see 
  Figure~\ref{fig:lmp-U}).
  \begin{figure}
    \begin{center}
      \includegraphics[scale=0.08]{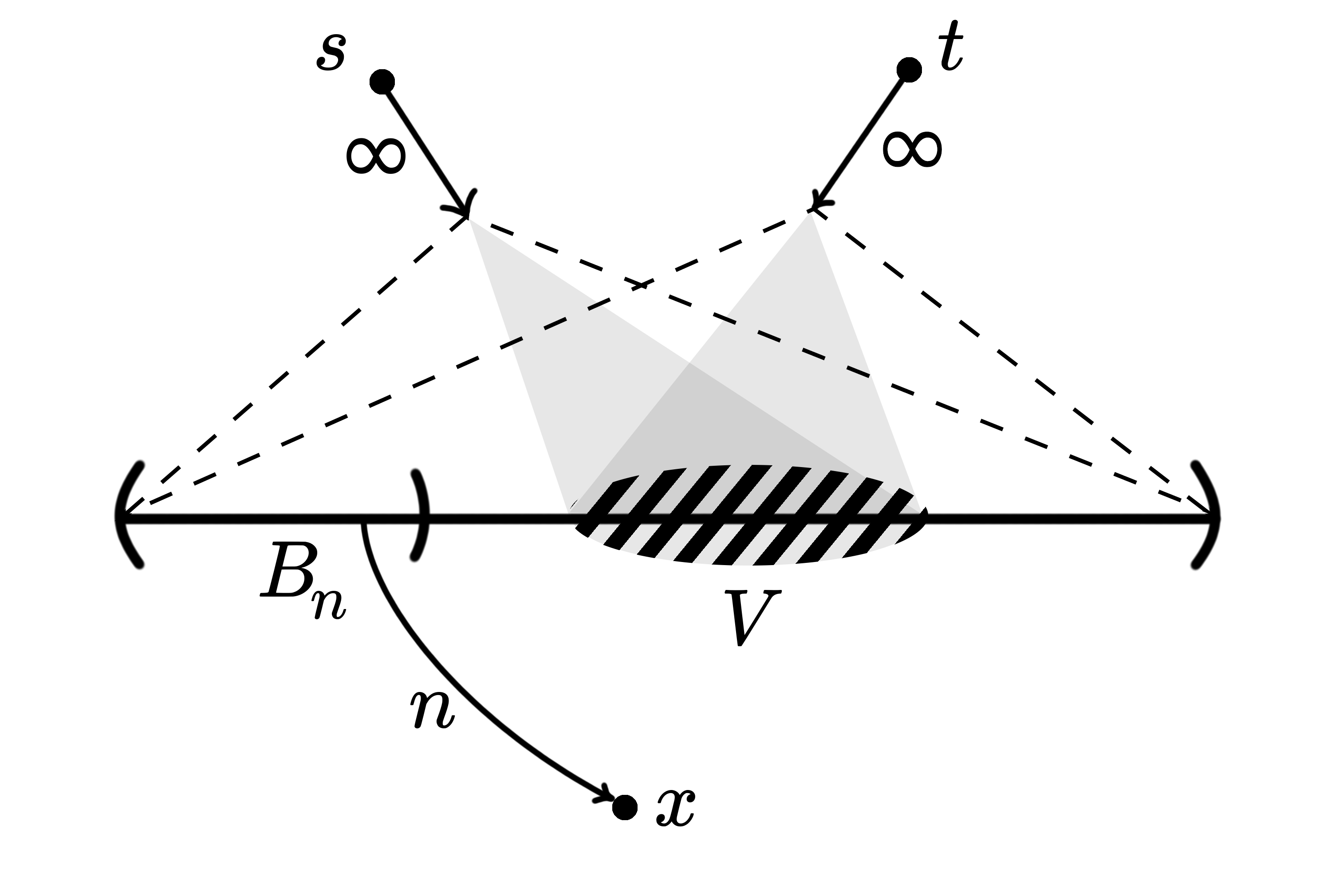}
    \end{center}
    \caption{The LMP $\lmp{U}$.}\label{fig:lmp-U}
  \end{figure}
\end{example}

For  $R$ a symmetric relation over $S$, we say that  $A 
\subseteq S$
is \emph{$R$-closed} if
$\{s\in S \mid \exists x\in A \; x\mathrel{R}s\} \subseteq A$. %
If $\Gamma\subseteq \Power(S)$, we denote by
$\Gamma(R)$ the family of all 
$R$-closed sets in $\Gamma$. Note that if
$\Gamma$ is a  $\sigma$-algebra then $\Gamma(R)$ is a 
sub-$\sigma$-algebra of $\Gamma$.  We also define a new relation
$\Rel(\Gamma)$ consisting of all pairs $(s,t)$ such that 
$\forall A \in \Gamma \; (s \in A \leftrightarrow t \in A)$. 

\begin{definition} 
  Fix an LMP $\lmp{S}=(S,\Sigma,\{\tau_a\mid a\in L\})$. A
  \emph{state bisimulation} $R$ on $\lmp{S}$ is a 
  symmetric relation on
  $S$ such that $\forall a\in L \; \tau_a(s,C)=\tau_a(t,C)$ whenever 
  $s\mathrel{R}t$ and $C \in \Sigma(R)$. We say 
  that  $s$  and  $t$ 
  are \emph{state bisimilar}, denoted by  $s\sim_s t$, if there exists
  some state bisimulation  $R$ such that $s\mathrel{R}t$. The 
  relation $\sim_s$ is called \emph{state bisimilarity}.
\end{definition}

\begin{definition}
  Let $\lmp{S}=(S,\Sigma,\{\tau_a \mid a\in L\})$ an LMP and
  $\Lambda\subseteq\Sigma$.  $\Lambda$ is \emph{stable} with respect
  to $\lmp{S}$ if for all $A \in \Lambda$, $r \in  [0,1]\cap \Q$ and
  $a \in L$, we have $\{s : \tau_a(s,A)>r\} \in \Lambda$.
\end{definition}

Note that for  a  sub-$\sigma$-algebra  $\Lambda\subseteq\Sigma$,
$\Lambda$ is  stable if and only if $(S,\Lambda,\{\tau_a\restriction 
S\times \Lambda \mid a\in
L\})$ is an LMP.

\begin{definition}
  Let $\lmp{S}=(S,\Sigma,\{\tau_a \mid a\in L\})$  be an LMP.
  A relation $R$ on $S$ will be called an \emph{event bisimulation}
  if  there exists a stable sub-$\sigma$-algebra 
  $\Lambda \subseteq \Sigma$ such that $R=\mathcal{R}(\Lambda)$.

  Two states $s$ and $t$ of an LMP are \emph{event bisimilar}, denoted by $s\sim_e t$, if
  there exists some event bisimulation  $R$ such that 
  $s\mathrel{R}t$. The relation $\sim_e$ is called \emph{event 
  bisimilarity}.
\end{definition}

To illustrate these concepts with the LMP $\lmp{U}$, one can show 
that $\Xi\defi \sigma(\Borel(I) \cup\{\{s,t\},\{x\}\})$ is a stable 
$\sigma$-algebra, hence $s$ and $t$ are event bisimilar. However 
they are not state bisimilar as $V$ is 
$\Rel(\sim_s)$-closed  and 
$\tau_{\infty}(s,V)\neq \tau_{\infty}(t,V)$. See 
\cite{Pedro20111048} for details.

Recall the modal logic  $\Logic$ presented in the Introduction.
We spell out the  formal interpretation of
the modalities: $s \models \langle a
\rangle_{>q}\phi$ on the LMP $(S,\Sigma,\{\tau_a \mid a\in L\})$ if 
and only
if there exists  $A \in \Sigma$ such that for all $s' \in A$, $s'
\models \phi$ and $\tau_a(s,A) > q$. Given a formula $\phi$ we write
$\llbracket \phi \rrbracket$ to denote the set of states satisfying 
$\phi$. It can be proved by induction that each of these sets is
measurable. We write $\llbracket \Logic \rrbracket$ for 
the collection of sets $\llbracket \phi \rrbracket$;  we have the
following logical characterization of event bisimilarity.

\begin{theorem}[{\cite[Prop.~5.5 and 
Cor.~5.6]{coco}}]\label{thm:sigma-logic-smallest-stable}
  For an LMP $(S,\Sigma,\{\tau_a \mid a\in L\})$, 
  $\sigma(\sem{\Logic})$ is
  the smallest stable $\sigma$-algebra included in
  $\Sigma$. Therefore the logic $\Logic$ characterizes event
  bisimilarity, in symbols 
  ${\sim_e}=\Rel(\sem{\Logic})=\Rel(\sigma(\sem{\Logic}))$.
\end{theorem}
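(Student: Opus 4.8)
The plan is to prove the two halves of the first assertion separately and then read off the characterization of $\sim_e$. Write $T_{a,q}(A)\defi\{s\in S:\tau_a(s,A)>q\}$ for $a\in L$ and $q\in[0,1]\cap\Q$, so that a family $\Lambda$ is stable exactly when $T_{a,q}(A)\in\Lambda$ whenever $A\in\Lambda$. By the semantics of the modalities one has $\sem{\langle a\rangle_{>q}\phi}=T_{a,q}(\sem\phi)$, together with $\sem\top=S$ and $\sem{\phi_1\wedge\phi_2}=\sem{\phi_1}\cap\sem{\phi_2}$; hence $\sem\Logic$ contains $S$, is closed under finite intersections (a $\pi$-system), and is closed under every $T_{a,q}$. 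The easy inclusion is that $\sem\Logic\sbq\Lambda$ for every stable $\sigma$-algebra $\Lambda\sbq\Sigma$, which then forces $\sigma(\sem\Logic)\sbq\Lambda$; this is a routine induction on formulas, since $\sem\top=S\in\Lambda$, intersections stay in $\Lambda$, and $\sem{\langle a\rangle_{>q}\phi}=T_{a,q}(\sem\phi)\in\Lambda$ by stability.

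The substantial half is that $\sigma(\sem\Logic)$ is itself stable. I would run a monotone-class argument on $\mathcal M\defi\{A\in\sigma(\sem\Logic):T_{a,q}(A)\in\sigma(\sem\Logic)\text{ for all }a,q\}$, whose equality with $\sigma(\sem\Logic)$ is precisely stability. Because each $\tau_a(s,\cdot)$ is a finite measure it is continuous both from below and from above, and this is exactly what lets $T_{a,q}$ cooperate with monotone limits: for $A_n\uparrow A$ one gets $T_{a,q}(A)=\bigcup_n T_{a,q}(A_n)$, so $\mathcal M$ is closed under increasing unions. For differences, if $A\sbq B$ then $\tau_a(s,B\sm A)=\tau_a(s,B)-\tau_a(s,A)$, and approximating the strict inequality by rationals gives $T_{a,q}(B\sm A)=\bigcup_{p}\bigl(\{s:\tau_a(s,A)<p\}\cap T_{a,p+q}(B)\bigr)$, a countable Boolean combination of sets of the form $T_{a,p'}(A)$ and $T_{a,p'}(B)$; hence $\mathcal M$ is closed under proper differences and contains $S$. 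Thus $\mathcal M$ is a Dynkin system containing the $\pi$-system $\sem\Logic$, and Dynkin's $\pi$--$\lambda$ theorem (the variant of the quoted monotone-class theorem suited to this situation) yields $\sigma(\sem\Logic)\sbq\mathcal M$, so $\mathcal M=\sigma(\sem\Logic)$ and stability follows.

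The step I expect to be the real obstacle is exactly this closure of $\mathcal M$ under differences and limits. The operator $T_{a,q}$ does not commute with finite unions or intersections, so one cannot instead pass to the algebra generated by $\sem\Logic$ and apply Halmos' theorem verbatim. What saves the argument is the subtraction identity $\tau_a(s,B\sm A)=\tau_a(s,B)-\tau_a(s,A)$, valid precisely because the kernels are finite measures, combined with rational approximation of strict inequalities; the only delicate bookkeeping is checking that the resulting countable combinations land back inside $\sigma(\sem\Logic)$, which is why the Dynkin-system framework (differences and increasing unions) rather than the algebra framework is the right tool here.

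Finally, the characterization of event bisimilarity follows from minimality. Agreement on a generating family propagates to the generated $\sigma$-algebra, so $\mathcal{R}(\sigma(\sem\Logic))=\mathcal{R}(\sem\Logic)$ is exactly logical equivalence in $\Logic$. Since $\sigma(\sem\Logic)$ is a stable sub-$\sigma$-algebra, $\mathcal{R}(\sigma(\sem\Logic))$ is an event bisimulation, so logically equivalent states are event bisimilar. Conversely, if $s\sim_e t$ is witnessed by $R=\mathcal{R}(\Lambda)$ with $\Lambda$ stable, then $\sigma(\sem\Logic)\sbq\Lambda$ by the first half, whence $R=\mathcal{R}(\Lambda)\sbq\mathcal{R}(\sigma(\sem\Logic))$ and $s,t$ are logically equivalent. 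Thus $\sim_e$ coincides with $\Logic$-equivalence, which is the asserted characterization.
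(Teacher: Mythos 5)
Your proof is correct; note, however, that the paper offers no proof of this statement to compare against --- the theorem is imported from Danos, Desharnais, Laviolette and Panangaden (the cited Prop.~5.5 and Cor.~5.6), and your argument is essentially the standard one, in the same spirit as that source. Both halves check out. The minimality half is the routine induction on formulas. For the stability half, your $\lambda$-system $\mathcal{M}$ does work: closure under increasing unions follows from continuity from below of the finite measures $\tau_a(s,\cdot)$, and closure under proper differences follows from the subtraction identity together with rational interpolation, since $\{s : \tau_a(s,A)<p\}$ is the complement of $\bigcap_{r<p,\, r\in\Q} T_{a,r}(A)$, so every piece of your decomposition of $T_{a,q}(B\sm A)$ lies in $\sigma(\sem{\Logic})$. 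Your diagnosis of the key obstacle is also accurate and worth emphasizing: $\sem{\Logic}$ is only a $\pi$-system (not closed under complements or unions), so Halmos' Monotone Class Theorem --- the tool this paper uses for its own stability results, e.g.\ that $\Sigma_\lambda$ is stable for limit $\lambda$, where the generating family $\bigcup_{\alpha<\lambda}\Sigma_\alpha$ \emph{is} an algebra --- cannot be applied verbatim, and Dynkin's $\pi$--$\lambda$ theorem is the correct replacement. Finally, your derivation of the logical characterization of $\sim_e$ from minimality is sound; the step showing that agreement on $\sem{\Logic}$ propagates to $\sigma(\sem{\Logic})$ is the same $\sigma$-algebra argument (the family $\mathcal{A}_{s,t}$) that the paper itself employs in the proof of Lemma~\ref{r=rt}.
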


The last equality follows easily from the fact that 
$\Rel(\sigma(\mathcal{F}))=\Rel(\mathcal{F})$ holds for any family of 
sets $\mathcal{F}$.

Regarding $\lmp{U}$, the $\sigma$-algebra $\Xi$ turns out to be 
$\sigma(\sem{\Logic})$, therefore 
${\sim_e}=\mathrm{id}_U\cup\{(s,t),(t,s)\}$. This further implies 
that 
${\sim_s}=\mathrm{id}_U$ given that it can be proved that 
${\sim_s}\sbq {\sim_e}$ is always the case and, as noted before, 
$s$ and $t$ are not state-bisimilar. Consequently, this is an 
example 
where state bisimilarity is properly contained in event 
bisimilarity. In fact, the LMP $\lmp{U}$ was introduced in 
\cite{Pedro20111048} to show that event bisimilarity and state 
bisimilarity differ in LMP over general measurable spaces. In this 
work it will serve as a seed for several constructions to be 
performed in Section~\ref{sec:zhou-ordinal}.

\section{The operators $\Op$ and $\G$}\label{sec:operators-op-g}

Fix a Markov process $\lmp{S}=(S,\Sigma,\{\tau_a\mid a\in L\})$. We 
will work 
with the operators
defined in 
\cite{DBLP:journals/entcs/Zhou13},  and we introduce a new one,
$\G$:

\begin{definition} 
  Let $\Lambda\subseteq\Sigma$ and $R \subseteq S\times S$,
  \begin{itemize}
  \item the relation $\Rel^T(\Lambda)$ is given by
    \[
    (s,t) \in \Rel^T(\Lambda) \iff \forall a\in L \;
    \forall E\in \Lambda \; \tau_a(s,E)=\tau_a(t,E).
    \]
  \item $\Op(R)\defi \Rel^T(\Sigma(R))$.
  \item $\G(\Lambda)\defi \Sigma(\Rel^T(\Lambda))$.
  \end{itemize}
\end{definition}

Note that  $\Op(R)$ is always an  equivalence relation for any 
$R$ and if $\Lambda$ is a $\sigma$-algebra, then
$\G(\Lambda)$ is too. 
The motivating idea behind the definition of $\Rel^T$ is to relate
states that are probabilistically indistinguishable with respect to a
fixed set of ``tests'', here given by the family $\Lambda$ of
events. An equivalence relation $R$ on the set of states induces
naturally  $\Sigma(R)$, the $R$-closed sets in $\Sigma$, as a family 
of tests. It follows that  $R$ is a state
bisimulation if and only if $R\sbq \Rel^T(\Sigma(R)) = \Op(R)$.

\begin{example}\label{exm:calculations-zhou-U}
  Let us do some simple calculations with the operator $\Op$ 
  concerning the LMP $\lmp{U}$. If $\nabla$ denotes the 
  total relation $U\times U$,
  \[
  \Op(\nabla)=\Rel^T(\Upsilon (\nabla))=\Rel^T(\{\emptyset,U\})=
  \mathrm{id}_U\cup\{(s,t),(t,s)\}={\sim_e}.
  \]
  Also, because $V \in \Upsilon(\sim_e)$, 
  \[
  \Op(\sim_e)=\Rel^T(\Upsilon(\sim_e))=\mathrm{id}_U={\sim_s}.
  \]
  We highlight the fact that a 
  single application of $\Op$ from the event bisimilarity $\sim_e$ 
  leads to state bisimilarity $\sim_s$.  
\end{example}

In the next proposition we collect  some basic facts on the operators
defined up to this point; most of them appear in 
\cite{DBLP:journals/entcs/Zhou13}.

\begin{proposition}\label{prop:basics-operator}
  Let $\Lambda, \Lambda'\subseteq \Sigma$ be sub-$\sigma$-algebras 
  and $R, R'\subseteq
  S\times S$.
  \begin{enumerate}
  \item\label{item:1} $\Lambda\subseteq \Sigma(\Rel(\Lambda))$.
  \item \label{item:2}$R\subseteq \Rel(\Sigma(R))$.
  \item\label{A3} If $\Lambda\subseteq \Lambda'$, then
    $\Rel(\Lambda)\supseteq\Rel(\Lambda')$ and
    $\Rel^T(\Lambda)\supseteq\Rel^T(\Lambda')$.
  \item\label{A4} If $R\subseteq R'$, then
    $\Sigma(R)\supseteq \Sigma(R')$.
  \item\label{A5}
    $\Rel(\Sigma(\Rel(\Lambda)))=\Rel(\Lambda)$
  \item \label{item:6}$\Op$ and $\G$ are monotone operators.
  \item \label{item:7} $R$ is a state bisimulation iff
    $(S,\Sigma(R),\{\tau_a\mid a \in L\})$ is an LMP.
  \item \label{item:8} If $\Lambda$ is stable then
    $\Rel(\Lambda)\subseteq \Rel^T(\Lambda)$.
  \end{enumerate}
\end{proposition}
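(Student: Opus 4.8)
My plan is to split the eight claims into two groups. Claims \ref{item:1}--\ref{item:6} are purely order-theoretic: they express that the three constructions $\Lambda\mapsto\Rel(\Lambda)$, $\Lambda\mapsto\Rel^T(\Lambda)$ and $R\mapsto\Sigma(R)$ interact like an antitone Galois connection, and each follows formally from monotonicity together with the two expansiveness properties \ref{item:1} and \ref{item:2}. Claims \ref{item:7} and \ref{item:8} carry the genuine measure-theoretic content, and both are handled by the same device: any strict inequality $\tau_a(s,E)\neq\tau_a(t,E)$ between two real values can be separated by a rational threshold $r$, and stability of a family is exactly what turns such a threshold set $\{u:\tau_a(u,E)>r\}$ back into a member of the family.

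I would first dispose of the antitonicity claims \ref{A3} and \ref{A4}, which are immediate: enlarging $\Lambda$ adds constraints to the conditions defining $\Rel$ and $\Rel^T$, hence shrinks those relations; and if $R\sbq R'$ then every $R'$-closed set is $R$-closed (the forward $R$-image of a set is contained in its $R'$-image), whence $\Sigma(R)\supseteq\Sigma(R')$. For \ref{item:1}, each $A\in\Lambda$ is $\Rel(\Lambda)$-closed, since two $\Rel(\Lambda)$-related points lie in exactly the same members of $\Lambda$; as $A\in\Sigma$ this gives $A\in\Sigma(\Rel(\Lambda))$. For \ref{item:2} I use that the relations in play are symmetric, so that $\Sigma(R)$ is genuinely closed under complementation (the Note preceding the proposition): given $s\mathrel{R}t$ and an $R$-closed $A\in\Sigma$, closedness gives $s\in A\Rightarrow t\in A$, and applying this to $A^{c}\in\Sigma(R)$ gives the converse, so $(s,t)\in\Rel(\Sigma(R))$. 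Claim \ref{A5} is then the standard Galois triangle identity: applying the antitone $\Rel$ to \ref{item:1} yields $\Rel(\Sigma(\Rel(\Lambda)))\sbq\Rel(\Lambda)$, while \ref{item:2} applied to the relation $\Rel(\Lambda)$ gives the reverse inclusion. Finally \ref{item:6} is the composition of two antitone maps: monotonicity of $\Op=\Rel^T\circ\Sigma$ and of $\G=\Sigma\circ\Rel^T$ follows by chaining \ref{A4} and \ref{A3}.

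For \ref{item:8}, suppose $\Lambda$ is stable and $(s,t)\in\Rel(\Lambda)$ but, toward a contradiction, $\tau_a(s,E)\neq\tau_a(t,E)$ for some $a$ and $E\in\Lambda$; choosing a rational $r$ strictly between the two values, the set $\{u:\tau_a(u,E)>r\}$ lies in $\Lambda$ by stability yet separates $s$ from $t$, contradicting $(s,t)\in\Rel(\Lambda)$. Hence $\Rel(\Lambda)\sbq\Rel^T(\Lambda)$. Claim \ref{item:7} I would prove by the chain: $(S,\Sigma(R),\{\tau_a\})$ is an LMP iff $\Sigma(R)$ is stable (the remark following the definition of stability), and $\Sigma(R)$ is stable iff $R\sbq\Rel^T(\Sigma(R))=\Op(R)$, which is exactly the condition for $R$ to be a state bisimulation noted in the text. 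The nontrivial direction of the middle equivalence reuses the threshold argument: if $R\sbq\Op(R)$ then for $A\in\Sigma(R)$ the set $\{s:\tau_a(s,A)>r\}$ is $R$-closed and measurable, hence in $\Sigma(R)$; conversely, stability makes every such threshold set $R$-closed, forcing $R$-related states to agree on all of $\Sigma(R)$.

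The only real obstacles are bookkeeping ones. The recurring technical point---used in both \ref{item:7} and \ref{item:8}---is the passage from (in)equality of measures to membership in threshold sets, which works precisely because stability is phrased with strict rational thresholds and the rationals are dense. The one place demanding care is \ref{item:2} (and the ``measurable $R$-closed'' half of \ref{item:7}): these implicitly require $\Sigma(R)$ to be a sub-$\sigma$-algebra, i.e.\ closed under complements, which holds for the symmetric relations to which these operators are applied but fails for arbitrary $R$; I would make the standing symmetry convention explicit before invoking it.
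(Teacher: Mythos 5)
Your proof is correct, but there is no proof in the paper to measure it against: Proposition~\ref{prop:basics-operator} is stated without proof, the authors remarking only that most of the items appear in Zhou's work \cite{DBLP:journals/entcs/Zhou13}. Your decomposition is the natural way to fill that gap and matches how the facts are used later: items \ref{item:1}--\ref{item:6} are the formal Galois-connection facts (with \ref{A5} as the triangle identity, obtained by applying the antitone $\Rel$ to \ref{item:1} and instantiating \ref{item:2} at $\Rel(\Lambda)$), while the rational-threshold argument carries all the measure-theoretic content of \ref{item:7} and \ref{item:8}. The most valuable part of your write-up is the caveat you raise at the end, and you are right that it is not cosmetic: as literally stated, for arbitrary $R\sbq S\times S$, items \ref{item:2} and \ref{item:7} are false. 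For \ref{item:2}, take $S=\{1,2\}$, $\Sigma=\Power(S)$, $R=\{(1,2)\}$: then $\Sigma(R)=\{\emptyset,\{2\},S\}$ and $(1,2)\notin\Rel(\Sigma(R))$. For \ref{item:7}, take the directed three-cycle $R=\{(1,2),(2,3),(3,1)\}$ on a three-point space with all kernels identically zero: $\Sigma(R)=\{\emptyset,S\}$ is a stable $\sigma$-algebra, so $(S,\Sigma(R),\{\tau_a\})$ is an LMP, yet $R$ is not a state bisimulation because it is not symmetric. The paper silently works under the convention that the relations fed to $\Sigma(\cdot)$ are symmetric (every relation it actually uses---$\Op(R)$, $\Rel(\Lambda)$, $\Rel^T(\Lambda)$, ${\sim_e}$, ${\sim_s}$---is an equivalence relation), and the same convention is already needed for the claim just before the proposition that $R$ is a state bisimulation iff $R\sbq\Op(R)$, and for the Note asserting that $\Gamma(R)$ is a sub-$\sigma$-algebra; making it explicit, as you propose, is the right repair. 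One small simplification: in \ref{item:2} the complementation detour is unnecessary, since if $t\in A$ with $A$ being $R$-closed, symmetry gives $t\mathrel{R}s$ and closedness yields $s\in A$ directly.
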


We will also need some basic material on fixpoint theory. We work with
von Neumann ordinals, viz.\ $\alpha = \{\gamma : \gamma < \alpha\}$. If
$F:A\to A$ is a function on a complete lattice $A$, we
define the iterates of $F$ by $F^0(x)\defi x$, 
$F^{\alpha+1}(x)\defi F(F^\alpha(x))$,
$F^\lambda(x)\defi\bigwedge_{\alpha<\lambda}F^\alpha(x)$ if $\lambda$
is a limit ordinal, and
$F^\infty(x)=\bigwedge_{\lambda}F^\lambda(x)$. We say that $x$ is a
\emph{pre-fixpoint} (resp., \emph{post-fixpoint}) of $F$ if $F(x)\leq
x$ (resp., $x\leq F(x)$).
\begin{proposition}[{\cite[Exr.~2.8.10]{sangiorgi}}]
  \label{pre-punto fijo}
  Let $F:A\to A$ be a monotone function on a complete lattice 
  $A$. If $x$
  is a  pre-fixpoint of $F$, then $F^\infty(x)$ is the greatest
  fixpoint of  $F$ below $x$. Furthermore, this fixed point is 
  reached at an ordinal $\alpha$ such that $|\alpha|\leq|A|$.
\end{proposition}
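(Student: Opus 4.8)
The plan is to prove the classical result on iterating a monotone map on a complete lattice, specializing the starting point to a pre-fixpoint. The statement has two parts: first, that $F^\infty(x)$ is the greatest fixpoint of $F$ that lies below $x$; and second, that the descending iteration stabilizes at some ordinal $\alpha$ with $|\alpha| \leq |A|$.

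First I would establish that the iterates form a decreasing transfinite chain \emph{below} $x$. Since $x$ is a pre-fixpoint, $F(x) \leq x$, and by a straightforward transfinite induction using monotonicity of $F$ one shows $F^{\beta}(x) \leq F^{\alpha}(x)$ whenever $\alpha \leq \beta$. The successor step uses $F^{\alpha+1}(x) = F(F^{\alpha}(x)) \leq F(F^{\alpha - 1}(x)) = F^{\alpha}(x)$ when $\alpha$ is a successor, and the base pre-fixpoint inequality when $\alpha = 0$; the limit step is immediate from the definition as an infimum. Thus the sequence $\langle F^{\alpha}(x) \rangle_{\alpha}$ is (weakly) decreasing and every term is $\leq x$.

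Next I would argue that the chain stabilizes and identify the limit as a fixpoint. A decreasing transfinite sequence in a complete lattice cannot strictly decrease through more than $|A|$ many distinct stages; more carefully, if the sequence never stabilized, one could inject an ordinal of cardinality exceeding $|A|$ into $A$ via the map $\alpha \mapsto F^{\alpha}(x)$ restricted to the strictly decreasing indices, a contradiction with Hartogs' theorem. Hence there is an ordinal $\alpha$ with $|\alpha| \leq |A|$ at which $F^{\alpha+1}(x) = F^{\alpha}(x)$; let $p \defi F^{\alpha}(x) = F^{\infty}(x)$. Applying $F$ gives $F(p) = F^{\alpha+1}(x) = p$, so $p$ is a fixpoint. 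Then I would verify it is the greatest fixpoint below $x$: if $y = F(y)$ and $y \leq x$, another transfinite induction shows $y \leq F^{\beta}(x)$ for all $\beta$ (successor step: $y = F(y) \leq F(F^{\beta}(x)) = F^{\beta+1}(x)$; limit step: $y$ is below each term, hence below their infimum), and in particular $y \leq p$.

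The main obstacle, or rather the only point requiring genuine care, is the cardinality bookkeeping that yields the bound $|\alpha| \leq |A|$: one must phrase the stabilization argument so that the witnessing ordinal is controlled by $|A|$ rather than merely asserting termination. I would handle this by considering the least ordinal at which stabilization occurs and noting that before that point the map $\alpha \mapsto F^{\alpha}(x)$ is injective on a strictly decreasing segment, so that segment embeds into $A$ and its order type therefore has cardinality at most $|A|$. Everything else is routine transfinite induction on the lattice operations, and since the result is cited as \cite[Exr.~2.8.10]{sangiorgi}, I would keep the exposition brief and lean on monotonicity at each inductive step.
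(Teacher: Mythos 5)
The paper gives no proof of this proposition at all—it is delegated to the cited exercise in Sangiorgi's book—and your argument is precisely the standard one that citation points to: it is correct, including the cardinality bound, which you rightly obtain by noting that the map $\beta\mapsto F^{\beta}(x)$ is injective (strictly decreasing) below the least stabilization ordinal, so that ordinal embeds into $A$. The one spot deserving an extra line is the inequality $F^{\lambda+1}(x)\leq F^{\lambda}(x)$ at limit $\lambda$, which is not literally \emph{immediate from the definition as an infimum}: it follows from $F^{\lambda+1}(x)=F(F^{\lambda}(x))\leq F(F^{\gamma}(x))=F^{\gamma+1}(x)$ for each $\gamma<\lambda$, together with the observation that the ordinals $\gamma+1$ are cofinal in $\lambda$ and the sequence is already known to be decreasing below $\lambda$, so $\bigwedge_{\gamma<\lambda}F^{\gamma+1}(x)=F^{\lambda}(x)$.
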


As in Zhou's work \cite{DBLP:journals/entcs/Zhou13} we will construct
chains of relations  and of $\sigma$-algebras using the operators
$\Op$ and $\G$. 
The next result will be an aid in showing that $\sigma(\sem{\Logic})$ 
(respectively,
${\sim_e}=\mathcal{R}(\sigma(\sem{\Logic}))$) is a post(pre)-fixpoint 
of
$\G$ ($\Op$).

\begin{lemma}\label{r=rt}
  $\mathcal{R}^T(\sigma(\sem{\Logic}))=\mathcal{R}(\sigma(\sem{\Logic}))$.
\end{lemma}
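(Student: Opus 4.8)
The plan is to prove the two inclusions separately. Recall that $\sem{\Logic}$ is the collection of sets defined by formulas of $\Logic$, and write $\Lambda_0 \defi \sigma(\sem{\Logic})$. By \cref{thm:sigma-logic-smallest-stable}, $\Lambda_0$ is the smallest \emph{stable} sub-$\sigma$-algebra of $\Sigma$. I want to show $\Rel^T(\Lambda_0) = \Rel(\Lambda_0)$.

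The inclusion $\Rel(\Lambda_0) \subseteq \Rel^T(\Lambda_0)$ is the easy one, and it follows directly from \cref{prop:basics-operator}\eqref{item:8}: since $\Lambda_0$ is stable, that item gives exactly $\Rel(\Lambda_0) \subseteq \Rel^T(\Lambda_0)$. So the content of the lemma is in the reverse inclusion. For that, suppose $(s,t) \in \Rel^T(\Lambda_0)$, meaning that $\tau_a(s,E) = \tau_a(t,E)$ for all labels $a$ and all $E \in \Lambda_0$. I must show $(s,t) \in \Rel(\Lambda_0)$, i.e.\ that $s$ and $t$ belong to exactly the same sets of $\Lambda_0$.

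The natural approach is induction on the structure of formulas, or rather on the construction of $\Lambda_0$ as a $\sigma$-algebra. First I would observe that it suffices to show $s \in A \iff t \in A$ for every $A$ in the generating family $\sem{\Logic}$, since the collection of sets on which $s$ and $t$ agree is itself a $\sigma$-algebra (closed under complements and countable unions, and containing $S$), so if it contains $\sem{\Logic}$ it contains $\Lambda_0 = \sigma(\sem{\Logic})$. For the generators I proceed by induction on $\phi$. The cases $\phi = \top$ and $\phi = \phi_1 \land \phi_2$ are immediate, the latter because $\sem{\phi_1 \land \phi_2} = \sem{\phi_1} \cap \sem{\phi_2}$ and the agreement property is closed under intersection. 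The crucial case is $\phi = \posg{a}\psi$, where $\sem{\posg{a}\psi} = \{u : \tau_a(u, \sem{\psi}) > q\}$ using that $\sem{\psi} \in \Lambda_0$. Here the induction hypothesis gives nothing new, because agreement on $\sem{\psi}$ is already ensured by $\sem{\psi} \in \Lambda_0$; what I actually use is the hypothesis $(s,t) \in \Rel^T(\Lambda_0)$ applied to the specific set $E = \sem{\psi} \in \Lambda_0$, which yields $\tau_a(s,\sem{\psi}) = \tau_a(t,\sem{\psi})$. Consequently $\tau_a(s,\sem{\psi}) > q$ iff $\tau_a(t,\sem{\psi}) > q$, so $s \in \sem{\posg{a}\psi} \iff t \in \sem{\posg{a}\psi}$, closing the induction.

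I expect the main subtlety to be purely organizational rather than deep: one must be careful that the induction is over the \emph{formulas} generating $\sem{\Logic}$ and that the modal step invokes the $\Rel^T$ hypothesis (not an induction hypothesis) at the already-measurable set $\sem{\psi}$. In fact, the modal case never needs the inductive agreement on $\sem{\psi}$ itself, only the fact that $\sem{\psi}$ lies in $\Lambda_0$, so that $\Rel^T(\Lambda_0)$ controls the transition probabilities to it. Once the generators are handled, the monotone-class style closure argument (or simply the remark that the agreement family is a $\sigma$-algebra) finishes the reverse inclusion, and together with the easy direction this establishes the claimed equality.
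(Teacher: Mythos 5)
Your proof is correct and follows essentially the same route as the paper: the easy inclusion from stability via Proposition~\ref{prop:basics-operator}(\ref{item:8}), then structural induction on formulas where the modal case uses the $\Rel^T$ hypothesis at $\sem{\psi}\in\sigma(\sem{\Logic})$ rather than the inductive hypothesis, finished off by noting that the agreement family $\mathcal{A}_{s,t}$ is a $\sigma$-algebra containing $\sem{\Logic}$. Your remark that the modal step is not really inductive is accurate and matches the structure of the paper's own argument.
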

\begin{proof}
  Since  $\sigma(\sem{\Logic})$ is stable, we have 
  $\Rel(\sigma(\sem{\Logic}))\subseteq\Rel^T(\sigma(\sem{\Logic}))$ 
  by Proposition~\ref{prop:basics-operator}(\ref{item:8}). We prove 
  the other
  inclusion by structural induction on formulas. Suppose that 
  $(s,t) \in  \Rel^T(\sigma(\sem{\Logic}))$. 
  If $A\defi \llbracket \top \rrbracket
  = S$ then $s \in A \Leftrightarrow t \in A$. 
  The case
  $A= \llbracket \phi \wedge \psi \rrbracket = \llbracket
  \phi \rrbracket \cap \llbracket \psi \rrbracket$ is also trivial 
  from the IH. For the case
  $A= \llbracket \langle a \rangle_{>q}\phi \rrbracket$, 
  observe that the hypothesis $(s,t) \in \Rel^T(\sigma(\sem{\Logic}))$
  implies $s\in A \Leftrightarrow \tau_a(s,\llbracket \phi
  \rrbracket)>q \Leftrightarrow\tau_a(t,\llbracket \phi
  \rrbracket)>q \Leftrightarrow t\in A$. Then, the
  $\sigma$-algebra $\mathcal{A}_{s,t}\defi \{A\in \Sigma\mid
  s\in A \Leftrightarrow t\in A\}$ includes $\llbracket \Logic
  \rrbracket$. We conclude that  $\sigma(\sem{\Logic})\subseteq 
  \mathcal{A}_{s,t}$,
  i.e., $(s,t)\in \Rel(\sigma(\sem{\Logic}))$.
\end{proof}

\begin{corollary}\label{cor:event-bisim-prefixpoint}
  $\sigma(\sem{\Logic})\subseteq \G(\sigma(\sem{\Logic}))$ and 
  $\Op(\sim_e)\subseteq {\sim_e}$.
\end{corollary}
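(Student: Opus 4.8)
The plan is to prove both containments in Corollary~\ref{cor:event-bisim-prefixpoint} directly from Lemma~\ref{r=rt} together with the general monotonicity facts in Proposition~\ref{prop:basics-operator}, since the lemma is precisely the bridge that lets the two assertions reduce to routine applications of the definitions.

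For the first containment, $\sigma(\sem{\Logic})\subseteq \G(\sigma(\sem{\Logic}))$, I would unfold the definition $\G(\Lambda)=\Sigma(\Rel^T(\Lambda))$ with $\Lambda=\sigma(\sem{\Logic})$. By Proposition~\ref{prop:basics-operator}(\ref{item:1}) applied to $\Lambda$, we already have $\sigma(\sem{\Logic})\subseteq\Sigma(\Rel(\sigma(\sem{\Logic})))$. The point is then to replace $\Rel$ by $\Rel^T$ inside, which is exactly what Lemma~\ref{r=rt} delivers: since $\Rel^T(\sigma(\sem{\Logic}))=\Rel(\sigma(\sem{\Logic}))$, we get $\Sigma(\Rel^T(\sigma(\sem{\Logic})))=\Sigma(\Rel(\sigma(\sem{\Logic})))\supseteq\sigma(\sem{\Logic})$, which is the desired post-fixpoint statement for $\G$.

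For the second containment, $\Op(\sim_e)\subseteq{\sim_e}$, I would start from the definition $\Op(R)=\Rel^T(\Sigma(R))$ with $R={\sim_e}=\Rel(\sigma(\sem{\Logic}))$. The natural strategy is to show $\Sigma(\sim_e)\supseteq\sigma(\sem{\Logic})$ so that the antitone behavior of $\Rel^T$ in its argument (Proposition~\ref{prop:basics-operator}(\ref{A3})) yields $\Rel^T(\Sigma(\sim_e))\subseteq\Rel^T(\sigma(\sem{\Logic}))$. The inclusion $\sigma(\sem{\Logic})\subseteq\Sigma(\Rel(\sigma(\sem{\Logic})))=\Sigma(\sim_e)$ is again Proposition~\ref{prop:basics-operator}(\ref{item:1}). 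Applying $\Rel^T$ and then invoking Lemma~\ref{r=rt} once more gives $\Op(\sim_e)=\Rel^T(\Sigma(\sim_e))\subseteq\Rel^T(\sigma(\sem{\Logic}))=\Rel(\sigma(\sem{\Logic}))={\sim_e}$, completing the argument.

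I do not expect a genuine obstacle here, as the corollary is a formal consequence of the lemma plus the lattice-theoretic bookkeeping; the only point requiring care is keeping the variance of $\Rel$, $\Rel^T$, and $\Sigma$ straight—$\Rel$ and $\Rel^T$ reverse inclusions while $\Sigma$ also reverses inclusions on relations—so that the chain of containments points the right way. The essential mathematical content, namely that the test-based relation $\Rel^T$ and the set-separation relation $\Rel$ agree on the stable $\sigma$-algebra $\sigma(\sem{\Logic})$, has already been isolated in Lemma~\ref{r=rt}, so the corollary is really just recording its two immediate fixpoint-theoretic consequences.
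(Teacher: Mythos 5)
Your proposal is correct and follows essentially the same route as the paper's proof: both parts rest on the expansiveness of $\Sigma\circ\Rel$ (Proposition~\ref{prop:basics-operator}(\ref{item:1})), the identity $\Rel^T(\sigma(\sem{\Logic}))=\Rel(\sigma(\sem{\Logic}))$ from Lemma~\ref{r=rt}, and the antimonotonicity of $\Rel^T$ for the second containment. The chain of inclusions you write for $\Op(\sim_e)\subseteq{\sim_e}$ is exactly the one displayed in the paper.
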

\begin{proof}
  Since  $\Sigma\circ \Rel$ is expansive by
  Proposition~\ref{prop:basics-operator}(\ref{item:1}), the previous
  Lemma implies that  $\sigma(\sem{\Logic})\subseteq
  \Sigma(\Rel(\sigma(\sem{\Logic})))=
  \Sigma(\Rel^T(\sigma(\sem{\Logic})))=
  \mathcal{G}(\sigma(\sem{\Logic}))$.
   Moreover,
  by antimonotonicity of  $\Rel^T$ we obtain the result for 
   $\Op$:
  \[
  \Op(\sim_e)=\Rel^T(\Sigma(\Rel(\sigma(\sem{\Logic}))))\subseteq
  \Rel^T(\sigma(\sem{\Logic}))=\Rel(\sigma(\sem{\Logic}))={\sim_e}. 
  \qedhere
  \]
\end{proof}

The inclusions 
$\Op(R)\subseteq R$ and $\Lambda \subseteq\mathcal{G}(\Lambda)$ do not
hold in  general for arbitrary  $R$ and $\Lambda$. 
For example, if  $\tau$ is null for all arguments,  $\Op(R) =S\times S$ for any
relation $R$ and analogously for $\G$. 

Given a  relation  $R$ we define a  transfinite sequence of
equivalence relations  using the  operator $\Op$:
\begin{itemize}
\item $\Op^0(R)\defi R$;
\item
  $\Op^{\alpha+1}(R)\defi \Op(\Op^\alpha(R))$;
\item $\Op^{\lambda}(R)\defi \bigcap_{\alpha <
  \lambda}\Op^{\alpha}(R)$ if  $\lambda$ is  a
  limit.
\end{itemize}
Similarly, if $\Lambda \subseteq \Sigma$ is a  
$\sigma$-algebra,  $\G$ generates a family of
$\sigma$-algebras given by:
\begin{itemize}
\item $\G^0(\Lambda)\defi \Lambda$;
\item $\G^{\alpha+1}(\Lambda)\defi \G(\G^\alpha(\Lambda))$;
\item $\G^{\lambda}(\Lambda)\defi \sigma(\bigcup_{\alpha < \lambda}
  \G{^\alpha}(\Lambda))$ if $\lambda$ is a  limit ordinal.
\end{itemize}

Note that in the limit case of this last definition we must take the
generated $\sigma$-algebra since the union of a countable chain of 
$\sigma$-algebras is not in general a $\sigma$-algebra.

Let $\Sigma_0 \subseteq \Sigma$ be a  sub-$\sigma$-algebra and 
$R_0\subseteq S\times S$ a relation. From the iterates of 
$\Op$ and $\G$ we define  new $\sigma$-algebras and relations.
\begin{definition}
  For every  ordinal $\alpha$ let
  $\Sigma_\alpha\defi \mathcal{G}^\alpha(\Sigma_0)$ and
  $R_\alpha\defi \Op^\alpha(R_0)$.
\end{definition}

It is clear that if  $\alpha<\lambda$ then 
$R_\lambda\subseteq R_\alpha$ since 
$R_\lambda=\Op^\lambda(R_0)=\bigcap_{\beta <
  \lambda}\Op^\beta(R_0)=\bigcap_{\beta < \lambda}R_\beta\subseteq
R_\alpha$. 
It is also easy to verify from the definitions that
$\Sigma_\alpha \subseteq \Sigma_\lambda$. We are interested in
determining what other relationships hold among these
relations and $\sigma$-algebras. 
We are mainly concerned in the case in which  
$\Sigma_0=\sigma(\sem{\Logic})$ and
when $R_0$ is the relation of event bisimilarity; then by 
Lemma~\ref{r=rt} we have $R_0=\Rel^T(\Sigma_0)$.

\begin{proposition}\label{hip alfa}
  If $R_0=\Rel^T(\Sigma_0)$, then for all $\alpha$,
  $R_\alpha = \Rel^T(\Sigma_\alpha)$.
\end{proposition}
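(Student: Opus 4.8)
The plan is to argue by transfinite induction on $\alpha$. The base case $\alpha=0$ is exactly the hypothesis $R_0=\Rel^T(\Sigma_0)$. For the successor step I would simply unwind the definitions of the iterates together with the induction hypothesis: since $R_{\alpha+1}=\Op(R_\alpha)=\Rel^T(\Sigma(R_\alpha))$ and $\Sigma_{\alpha+1}=\G(\Sigma_\alpha)=\Sigma(\Rel^T(\Sigma_\alpha))$, the hypothesis $R_\alpha=\Rel^T(\Sigma_\alpha)$ gives $\Rel^T(\Sigma_{\alpha+1})=\Rel^T(\Sigma(R_\alpha))=\Op(R_\alpha)=R_{\alpha+1}$. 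So the successor case is purely formal manipulation of the operator definitions.

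The real content lies in the limit case. Let $\lambda$ be a limit ordinal and assume $R_\alpha=\Rel^T(\Sigma_\alpha)$ for all $\alpha<\lambda$. By definition $R_\lambda=\bigcap_{\alpha<\lambda}R_\alpha=\bigcap_{\alpha<\lambda}\Rel^T(\Sigma_\alpha)$, while $\Sigma_\lambda=\sigma(\bigcup_{\alpha<\lambda}\Sigma_\alpha)$, so I need to establish
\[
\Rel^T\bigl(\sigma(\textstyle\bigcup_{\alpha<\lambda}\Sigma_\alpha)\bigr)=\bigcap_{\alpha<\lambda}\Rel^T(\Sigma_\alpha).
\]
The inclusion $\subseteq$ is immediate from the antimonotonicity of $\Rel^T$ (Proposition~\ref{prop:basics-operator}(\ref{A3})) together with $\Sigma_\alpha\subseteq\Sigma_\lambda$ for each $\alpha<\lambda$.

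For the reverse inclusion I would fix a pair $(s,t)$ lying in every $\Rel^T(\Sigma_\alpha)$; unwrapping the definition of $\Rel^T$, this says $\tau_a(s,E)=\tau_a(t,E)$ for every label $a$ and every $E\in\bigcup_{\alpha<\lambda}\Sigma_\alpha$. Fixing $a$, consider the agreement family $\mathcal{D}\defi\{E\in\Sigma_\lambda\mid\tau_a(s,E)=\tau_a(t,E)\}$ and show that it swallows all of $\Sigma_\lambda$. Two ingredients are needed: first, since the chain $(\Sigma_\alpha)_{\alpha<\lambda}$ is increasing, $\bigcup_{\alpha<\lambda}\Sigma_\alpha$ is an \emph{algebra} (any two of its members lie in a common $\Sigma_\alpha$, which is closed under finite unions and complements); second, because $\tau_a(s,\cdot)$ and $\tau_a(t,\cdot)$ are finite measures, continuity from below and from above shows that $\mathcal{D}$ is a \emph{monotone class}. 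Halmos' Monotone Class Theorem then yields $\Sigma_\lambda=\sigma(\bigcup_{\alpha<\lambda}\Sigma_\alpha)\subseteq\mathcal{D}$, i.e.\ $\tau_a(s,E)=\tau_a(t,E)$ for all $E\in\Sigma_\lambda$. As $a$ was arbitrary, $(s,t)\in\Rel^T(\Sigma_\lambda)$, closing the limit step.

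I expect the main obstacle to be precisely this passage through the generated $\sigma$-algebra at limit stages: one cannot commute $\Rel^T$ past the $\sigma$-operation by mere (anti)monotonicity, so the crux is verifying that $\mathcal{D}$ is a genuine monotone class — which rests on the finiteness of the subprobability measures $\tau_a(s,\cdot)$ for continuity from above — and that $\bigcup_{\alpha<\lambda}\Sigma_\alpha$ is an algebra, after which Halmos' theorem does the work. Everything outside this step is bookkeeping with the definitions of $\Op$, $\G$, and their iterates.
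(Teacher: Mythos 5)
Your argument is, step for step, the paper's own: the base and successor cases are the same formal unwinding of $\Op$ and $\G$, and at limits you make the same reduction to Halmos' Monotone Class Theorem applied to the agreement family, with continuity from above of the finite measures supplying the monotone-class property. The comparison therefore turns entirely on the one point where you say \emph{more} than the paper does, namely your justification that $\bigcup_{\alpha<\lambda}\Sigma_\alpha$ is an algebra, and unfortunately that is exactly where the gap lies.

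You derive the algebra property from the claim that the chain $(\Sigma_\alpha)_{\alpha<\lambda}$ is increasing. Under the stated hypothesis --- $R_0=\Rel^T(\Sigma_0)$ and nothing else --- this is not available and is in general false: already $\Sigma_0\subseteq\Sigma_1=\Sigma(\Rel^T(\Sigma_0))$ says that every $\Sigma_0$-set is $\Rel^T(\Sigma_0)$-closed, which is precisely the expansiveness the paper warns can fail (with null kernels, $R_0=\Rel^T(\Sigma_0)=S\times S$ while $\G(\Sigma_0)=\{\emptyset,S\}\not\supseteq\Sigma_0$). Monotonicity of the chain is Corollary~\ref{cor:Sigma-monotonous}, which is proved \emph{later}, under the stronger hypothesis $R_0=\Rel(\Sigma_0)=\Rel^T(\Sigma_0)$, and whose proof invokes the present Proposition, so quoting it here would be circular. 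The gap is not cosmetic: the statement itself fails under the bare hypothesis. Take $S=\{s,t,1,2,3,4\}$, $\Sigma=\Power(S)$, one label with $\tau(s,\cdot)=\tfrac12(\delta_1+\delta_3)$, $\tau(t,\cdot)=\tfrac12(\delta_2+\delta_4)$, $\tau(1,\cdot)=\tau(4,\cdot)=0$, $\tau(2,\cdot)=\tau(3,\cdot)=\tfrac12\delta_1$, and $\Sigma_0\defi\sigma(\{\{1,2\}\})$. Then $R_0=\Rel^T(\Sigma_0)$ has classes $\{s,t\},\{1,4\},\{2,3\}$; a direct check gives $R_1=R_0$, hence $R_n=R_0$ and $\Sigma_n=\Sigma_1$ for all $n\geq 1$, so $(s,t)\in R_\omega$; yet $\{1\}=\{1,2\}\cap\{1,4\}\in\Sigma_\omega$ and $\tau(s,\{1\})=\tfrac12\neq 0=\tau(t,\{1\})$, so $(s,t)\notin\Rel^T(\Sigma_\omega)$. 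In fairness, the paper's own proof has exactly the same defect --- it asserts without justification that $\bigcup_{\alpha<\lambda}\Sigma_\alpha$ ``is an algebra of subsets of $S$'' --- so your write-up only makes explicit what the paper glosses over. The repair is to assume $R_0=\Rel(\Sigma_0)=\Rel^T(\Sigma_0)$ (which is what holds in the intended application $\Sigma_0=\sigma(\sem{\Logic})$, $R_0={\sim_e}$, by Lemma~\ref{r=rt}) and to prove by one simultaneous transfinite induction that $R_\alpha=\Rel^T(\Sigma_\alpha)$, that $R_{\alpha+1}\subseteq R_\alpha$, and that $\Sigma_\alpha\subseteq\Sigma_{\alpha+1}$ (the last two as in Proposition~\ref{prop:R-decreasing} and Corollary~\ref{cor:Sigma-monotonous}); then the chain below each limit really is increasing, its union is an algebra, and your Monotone Class argument closes the proof.
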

\begin{proof}
  By induction on  $\alpha$. The case $\alpha=0$ is included in the
  hypothesis. Now assume that it holds for
  $\alpha$. We calculate as follows
  $R_{\alpha+1}=\Op^{\alpha+1}(R_0)=\Op(\Op^\alpha(R_0))=\Op(R_\alpha)=\Rel^T(\Sigma(R_\alpha))=\Rel^T(\Sigma(\Rel^T(\Sigma_\alpha)))=\Rel^T(\G(\Sigma_\alpha))=\Rel^T(\G(\G^\alpha(\Sigma_0))=\Rel^T(\G^{\alpha+1}(\Sigma_0))=\Rel^T(\Sigma_{\alpha+1})$. Then
  it holds for  $\alpha+1$.
	
  Suppose now that the result holds for all
  $\alpha<\lambda$, with $\lambda$ a  limit ordinal. We have
  $R_\lambda=\Op^\lambda(R_0)=\bigcap_{\alpha<\lambda}\Op^\alpha(R_0)=\bigcap_{\alpha<\lambda}R_\alpha=\bigcap_{\alpha<\lambda}\Rel^T(\Sigma_\alpha)$. We
  will prove that the last  term equals
  $\Rel^T(\bigcup_{\alpha<\lambda}\Sigma_\alpha)$.  Let $s,t \in S$. 
  Then 
  \begin{align*}
    (s,t) \in
    \textstyle\bigcap_{\alpha<\lambda}\Rel^T(\Sigma_\alpha)
    &\Leftrightarrow  \forall \alpha<\lambda \, \forall a\in L\,\forall
    Q \ (Q\in\Sigma_\alpha \Rightarrow
    \tau_a(s,Q)=\tau_a(t,Q)) \\
    &\Leftrightarrow  \forall a\in L\, \forall Q
    \, \forall \alpha<\lambda \, (Q \in \Sigma_\alpha
    \Rightarrow \tau_a(s,Q)=\tau_a(t,Q))\\
    &\Leftrightarrow 
    \forall a\in L\,\forall Q \, (\exists \alpha<\lambda \, Q \in
    \Sigma_\alpha \Rightarrow \tau_a(s,Q)=\tau_a(t,Q))
    \\
    &\Leftrightarrow  \forall a\in L\,\forall Q \, (Q \in \textstyle
    \bigcup_{\alpha<\lambda}\Sigma_\alpha \Rightarrow
    \tau_a(s,Q)=\tau_a(t,Q)) \\
    &\Leftrightarrow  (s,t) \in
    \Rel^T(\textstyle
    \bigcup_{\alpha<\lambda}\Sigma_\alpha)
  \end{align*}
  \begin{claim*}
    $\Rel^T(\bigcup_{\alpha<\lambda}\Sigma_\alpha)=\Rel^T(\sigma(\bigcup_{\alpha<\lambda}\Sigma_\alpha))$
  \end{claim*}

  With this we can conclude since
  \[
  \textstyle\Rel^T(\sigma(\bigcup_{\alpha<\lambda}\Sigma_\alpha))=\Rel^T(\sigma(\bigcup_{\alpha<\lambda}\mathcal{G}^\alpha(\Sigma_0)))=\Rel^T(\mathcal{G}^\lambda(\Sigma_0))=\Rel^T(\Sigma_\lambda).
  \]
  Now we prove the claim. Let $s,t  \in S$ such that 
  $(s,t)\in \Rel^T(\bigcup_{\alpha<\lambda}\Sigma_\alpha)$. 
  We define
  $\mathcal{D}_{s,t}\defi\{A\in \Sigma \mid  \forall a\in L, \, 
  \tau_a(s,A)=\tau_a(t,A)\}$.
  We check that $\mathcal{D}_{s,t}$ is a monotone class on
  $S$. If $\{A_i\}_{i\in\omega}$ is an increasing 
  family of subsets  $S$ such that $A_i\in
  \mathcal{D}_{s,t}$, upper continuity of the  measures
  $\tau_a(s,\cdot)$ and $\tau_a(t,\cdot)$ implies
  \[ 
  \tau_a(s,\textstyle \bigcup_{i\in\omega}A_i)=\lim_i\tau_a(s,A_i)=\lim_i\tau_a(t,A_i)=\tau_a(t,\textstyle \bigcup_{i\in\omega}A_i)
  \]
  We argue similarly for an intersection of a decreasing family 
  in $\mathcal{D}_{s,t}$ by using lower continuity of the (finite)
  measures involved.  Since, by hypothesis,
  $\bigcup_{\alpha<\lambda}\Sigma_\alpha \subseteq
  \mathcal{D}_{s,t}$ and moreover, the family
  $\bigcup_{\alpha<\lambda}\Sigma_\alpha$ is an algebra of subsets of
  $S$, the Monotone Class Theorem yields 
  $\sigma(\bigcup_{\alpha<\lambda}\Sigma_\alpha)\subseteq\mathcal{D}_{s,t}$.
  
  Since the reverse inclusion 
  $\Rel^T(\bigcup_{\alpha<\lambda}\Sigma_\alpha)\supseteq\Rel^T(\sigma(\bigcup_{\alpha<\lambda}\Sigma_\alpha))$
  is trivial, we have the result.
\end{proof}

\begin{corollary}\label{cor:Sigma-R-alpha}
  If $R_0=\Rel^T(\Sigma_0)$, then for all $\alpha$,
  $\Sigma(R_\alpha)=\Sigma_{\alpha+1}$.
\end{corollary}
\begin{proof}
  Unfolding definitions,
  \[
  \Sigma(R_\alpha)=
  \Sigma(\mathcal{R}^T(\Sigma_\alpha))=\Sigma(\mathcal{R}^T(\mathcal{G}^\alpha(\Sigma_0)))=\mathcal{G}(\mathcal{G}^{\alpha}(\Sigma_0))=\mathcal{G}^{\alpha+1}(\Sigma_0)=\Sigma_{\alpha+1}. \qedhere
  \]
\end{proof}

\begin{proposition}\label{prop:R-decreasing}
  If $R_0=\mathcal{R}(\Sigma_0)=\mathcal{R}^T(\Sigma_0)$,
  then for all $\alpha$, $R_{\alpha+1}\subseteq  R_\alpha$.
\end{proposition}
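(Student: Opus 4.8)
The plan is to deduce the decreasing chain of relations from an \emph{increasing} chain of $\sigma$-algebras, moving between the two worlds through the antimonotonicity of $\Rel^T$. Since the hypothesis includes $R_0=\Rel^T(\Sigma_0)$, Proposition~\ref{hip alfa} applies and yields $R_\alpha=\Rel^T(\Sigma_\alpha)$ for every $\alpha$, while Corollary~\ref{cor:Sigma-R-alpha} gives $\Sigma(R_\alpha)=\Sigma_{\alpha+1}$. Using Proposition~\ref{prop:basics-operator}(\ref{A3}) (antimonotonicity of $\Rel^T$), the target inclusion follows once I establish
\[
  \Sigma_\alpha\subseteq\Sigma_{\alpha+1}\quad\text{for every ordinal }\alpha,
\]
because this gives $R_{\alpha+1}=\Rel^T(\Sigma_{\alpha+1})\subseteq\Rel^T(\Sigma_\alpha)=R_\alpha$. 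So the whole argument reduces to showing that this \emph{particular} $\G$-chain is expansive step by step, even though $\Lambda\subseteq\G(\Lambda)$ fails for arbitrary $\Lambda$.

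I would prove $\Sigma_\alpha\subseteq\Sigma_{\alpha+1}$ by transfinite induction on $\alpha$. The base case is the only place where the extra hypothesis $R_0=\Rel(\Sigma_0)$ is used: if $A\in\Sigma_0$, $x\in A$ and $x\mathrel{R_0}s$, then $R_0=\Rel(\Sigma_0)$ forces $x\in A\Leftrightarrow s\in A$, hence $s\in A$; thus every member of $\Sigma_0$ is $R_0$-closed, i.e.\ $\Sigma_0\subseteq\Sigma(R_0)=\Sigma_1$ by Corollary~\ref{cor:Sigma-R-alpha}. The successor case is immediate from the monotonicity of $\G$ (Proposition~\ref{prop:basics-operator}(\ref{item:6})): applying $\G$ to $\Sigma_\alpha\subseteq\Sigma_{\alpha+1}$ gives $\Sigma_{\alpha+1}\subseteq\Sigma_{\alpha+2}$.

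The limit case is where the real difficulty lies, and I expect it to be the main obstacle, precisely because one cannot simply ``pass to the limit'' via expansiveness of $\G$. Fix a limit $\lambda$ and assume $\Sigma_\beta\subseteq\Sigma_{\beta+1}$ for all $\beta<\lambda$; combining the inductive hypothesis with Corollary~\ref{cor:Sigma-R-alpha} gives $\Sigma_\beta\subseteq\Sigma_{\beta+1}=\Sigma(R_\beta)$, i.e.\ every set of $\Sigma_\beta$ is $R_\beta$-closed. The key point is that $R_\lambda=\bigcap_{\beta<\lambda}R_\beta\subseteq R_\beta$, and $R_\beta$-closedness of a set is inherited as $R_\lambda$-closedness whenever $R_\lambda\subseteq R_\beta$ (if $x\mathrel{R_\lambda}s$ then $x\mathrel{R_\beta}s$, and closedness follows). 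Hence $\Sigma_\beta\subseteq\Sigma(R_\lambda)$ for every $\beta<\lambda$, so $\bigcup_{\beta<\lambda}\Sigma_\beta\subseteq\Sigma(R_\lambda)$; as $\Sigma(R_\lambda)$ is a $\sigma$-algebra it contains $\Sigma_\lambda=\sigma(\bigcup_{\beta<\lambda}\Sigma_\beta)$. Finally $\Sigma(R_\lambda)=\Sigma_{\lambda+1}$ by Corollary~\ref{cor:Sigma-R-alpha}, so $\Sigma_\lambda\subseteq\Sigma_{\lambda+1}$ and the induction closes. The whole subtlety is thus concentrated in the observation that the intersection defining $R_\lambda$ sits below each earlier $R_\beta$, so the closedness witnessed at earlier stages survives at the limit.
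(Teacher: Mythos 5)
Your proof is correct, but it takes a genuinely different route from the paper's, essentially running the deduction in the opposite direction. The paper inducts directly on the relations: the hypothesis enters only in the base case, via $R_1=\Rel^T(\Sigma(\Rel(\Sigma_0)))\subseteq\Rel^T(\Sigma_0)=R_0$ (expansiveness of $\Sigma\circ\Rel$ plus antimonotonicity of $\Rel^T$), and both the successor and limit steps are one-line applications of monotonicity of $\Op$; in particular the paper's proof needs neither Proposition~\ref{hip alfa} nor Corollary~\ref{cor:Sigma-R-alpha}. You instead prove monotonicity of the $\sigma$-algebra chain $\Sigma_\alpha\subseteq\Sigma_{\alpha+1}$ first and then pull this back to the relations through $R_\alpha=\Rel^T(\Sigma_\alpha)$ and antimonotonicity of $\Rel^T$ --- exactly reversing the paper's order, since the paper obtains $\Sigma_\alpha\subseteq\Sigma_{\alpha+1}$ (Corollary~\ref{cor:Sigma-monotonous}) as a \emph{consequence} of this proposition. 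Your route is not circular: Proposition~\ref{hip alfa} and Corollary~\ref{cor:Sigma-R-alpha} precede this result in the paper and depend only on $R_0=\Rel^T(\Sigma_0)$, and your base case is just Proposition~\ref{prop:basics-operator}(\ref{item:1}) applied to $\Sigma_0$. The trade-off is that your argument is heavier: by invoking Proposition~\ref{hip alfa} at limit stages you implicitly rely on the Monotone Class Theorem, which the paper's relation-side induction avoids entirely; in exchange, you get Corollary~\ref{cor:Sigma-monotonous} for free as a byproduct. One simplification worth noting: in your limit case you need not pass through $R_\lambda$-closedness at all, since the IH and monotonicity of $\G$ give $\Sigma_\beta\subseteq\Sigma_{\beta+1}=\G(\Sigma_\beta)\subseteq\G(\Sigma_\lambda)=\Sigma_{\lambda+1}$ for every $\beta<\lambda$, whence $\Sigma_\lambda=\sigma(\bigcup_{\beta<\lambda}\Sigma_\beta)\subseteq\Sigma_{\lambda+1}$; this is precisely how the paper handles the limit step of Corollary~\ref{cor:Sigma-monotonous}.
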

\begin{proof}
  We work by induction on $\alpha$. By using the antimonotonicity of 
  $\mathcal{R}^T$ and 
  $\Sigma(\mathcal{R}(\Sigma_0))\supseteq\Sigma_0$ we have 
  \[
  R_1=\Op(R_0)=\mathcal{R}^T(\Sigma(R_0))=\mathcal{R}^T(\Sigma(\mathcal{R}(\Sigma_0)))\subseteq
  \mathcal{R}^T(\Sigma_0)=R_0.
  \]
  This shows the base case.
  Assume the result for  $\alpha$, then by applying the  monotonicity
  of  $\Op$  to the IH we have
  $R_{\alpha+2}=\Op(R_{\alpha+1})\subseteq
  \Op(R_\alpha)=R_{\alpha+1}$.  

  For limit  $\lambda$ we observe that  for all
  $\alpha<\lambda$ monotonicity of  $\Op$ and IH
  ensure 
  $R_{\lambda+1}=\Op(R_\lambda)\subseteq
  \Op(R_\alpha)=R_{\alpha+1}\subseteq R_\alpha$. Then,
  $R_{\lambda+1}\subseteq \bigcap_{\alpha < \lambda}R_\alpha=R_\lambda$.
\end{proof}

\begin{corollary}\label{cor:Sigma-monotonous}
  If $R_0=\mathcal{R}(\Sigma_0)=\mathcal{R}^T(\Sigma_0)$,
  then for all $\alpha$,
  $\Sigma_\alpha\subseteq\Sigma_{\alpha+1}$.
\end{corollary}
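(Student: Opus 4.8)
The plan is to prove $\Sigma_\alpha\subseteq\Sigma_{\alpha+1}$ by transfinite induction on $\alpha$, with the monotonicity of $\G$ (Proposition~\ref{prop:basics-operator}(\ref{item:6})) doing almost all of the work at successor and limit stages. The real content sits in the base case, where the hypothesis is consumed. Here I would exploit the expansiveness of $\Sigma\circ\Rel$ (Proposition~\ref{prop:basics-operator}(\ref{item:1})) together with both halves of the assumption. Since $\Sigma_1=\G(\Sigma_0)=\Sigma(\Rel^T(\Sigma_0))$ by definition, and since $R_0=\Rel(\Sigma_0)$ and $R_0=\Rel^T(\Sigma_0)$ by hypothesis, I can compute
\[
\Sigma_0\subseteq\Sigma(\Rel(\Sigma_0))=\Sigma(R_0)=\Sigma(\Rel^T(\Sigma_0))=\G(\Sigma_0)=\Sigma_1,
\]
which establishes $\Sigma_0\subseteq\Sigma_1$.

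For the successor step, I would simply apply $\G$ to the induction hypothesis $\Sigma_\alpha\subseteq\Sigma_{\alpha+1}$: monotonicity gives $\Sigma_{\alpha+1}=\G(\Sigma_\alpha)\subseteq\G(\Sigma_{\alpha+1})=\Sigma_{\alpha+2}$. At a limit ordinal $\lambda$ the argument is slightly more delicate, because $\Sigma_\lambda=\sigma(\bigcup_{\alpha<\lambda}\Sigma_\alpha)$ and one must reach inside the generated $\sigma$-algebra. For each $\alpha<\lambda$ I would combine the induction hypothesis $\Sigma_\alpha\subseteq\Sigma_{\alpha+1}$ with monotonicity of $\G$ and the already-noted inclusion $\Sigma_\alpha\subseteq\Sigma_\lambda$ to obtain $\Sigma_{\alpha+1}=\G(\Sigma_\alpha)\subseteq\G(\Sigma_\lambda)=\Sigma_{\lambda+1}$, whence $\Sigma_\alpha\subseteq\Sigma_{\lambda+1}$. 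As this holds for every $\alpha<\lambda$ and $\Sigma_{\lambda+1}$ is a $\sigma$-algebra, it follows that $\Sigma_\lambda=\sigma(\bigcup_{\alpha<\lambda}\Sigma_\alpha)\subseteq\Sigma_{\lambda+1}$, closing the induction.

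The step I expect to be the main obstacle is precisely the base case. As the remark following the null-kernel example shows, the inclusion $\Lambda\subseteq\G(\Lambda)$ fails for arbitrary $\Lambda$, so one cannot launch the monotone iteration for free; the chain need not be increasing unless $\Sigma_0$ is already a post-fixpoint of $\G$. What makes the whole induction go through is that the hypothesis $R_0=\Rel(\Sigma_0)=\Rel^T(\Sigma_0)$ is exactly the coincidence of $\Rel$ and $\Rel^T$ on $\Sigma_0$ needed to turn the always-valid expansiveness $\Sigma_0\subseteq\Sigma(\Rel(\Sigma_0))$ into the desired $\Sigma_0\subseteq\Sigma(\Rel^T(\Sigma_0))=\Sigma_1$. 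Once this post-fixpoint property is secured, everything downstream is pure monotonicity of $\G$.
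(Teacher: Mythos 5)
Your proof is correct, and its base case and limit case coincide exactly with the paper's own argument; the difference lies in the successor step. Where you propagate the induction hypothesis $\Sigma_\alpha\subseteq\Sigma_{\alpha+1}$ directly through the monotonicity of $\G$ (Proposition~\ref{prop:basics-operator}(\ref{item:6})) to get $\Sigma_{\alpha+1}=\G(\Sigma_\alpha)\subseteq\G(\Sigma_{\alpha+1})=\Sigma_{\alpha+2}$, the paper instead bypasses the induction hypothesis altogether and derives each successor inclusion from the previously established facts about the relation chain: by Proposition~\ref{hip alfa} one has $R_\alpha=\Rel^T(\Sigma_\alpha)$, so $\Sigma_{\alpha+1}=\Sigma(R_\alpha)$, and then Proposition~\ref{prop:R-decreasing} ($R_{\alpha+1}\subseteq R_\alpha$) together with the antimonotonicity of $\Sigma(\cdot)$ gives $\Sigma(R_\alpha)\subseteq\Sigma(R_{\alpha+1})=\Sigma_{\alpha+2}$. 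Your route is the more economical one: it needs only items~\ref{item:1} and~\ref{item:6} of Proposition~\ref{prop:basics-operator}, consumes the hypothesis $R_0=\Rel(\Sigma_0)=\Rel^T(\Sigma_0)$ exactly once (in the base case, where it must be consumed, as the null-kernel example shows), and makes the corollary independent of the whole $R$-chain machinery. What the paper's route buys in exchange is the explicit duality displayed in Figure~\ref{fig:chains_diagram}: the growth of the $\Sigma$-chain is exhibited as the mirror image, under $\Sigma(\cdot)$, of the shrinking of the $R$-chain, which is the structural picture the authors develop throughout that section. Both arguments are complete and correct.
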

\begin{proof}
  For $\alpha=0$ we observe that 
  $\Sigma_0\subseteq\Sigma(\mathcal{R}(\Sigma_0))=\Sigma(\mathcal{R}^T(\Sigma_0))=\mathcal{G}(\Sigma_0)=\Sigma_1$.
  For successor ordinals, we use Proposition~\ref{prop:R-decreasing} 
  and Proposition~\ref{hip alfa} to obtain:
  \[
  \Sigma_{\alpha+1}=\mathcal{G}(\Sigma_\alpha)=\Sigma(\mathcal{R}^T(\Sigma_\alpha))=\Sigma(R_\alpha)\subseteq\Sigma(R_{\alpha+1})=\Sigma_{\alpha+2}.
  \]
  Finally, for the  limit case, observe that for all
  $\alpha<\lambda$, $\Sigma_\alpha\subseteq \Sigma_\lambda$; then by
  IH and monotonicity of  $\mathcal{G}$
  we have $\Sigma_\alpha\subseteq
  \Sigma_{\alpha+1}=\mathcal{G}(\Sigma_\alpha)\subseteq
  \mathcal{G}(\Sigma_\lambda)=\Sigma_{\lambda+1}$. Therefore
  $\Sigma_\lambda=\sigma(\bigcup_{\alpha <
    \lambda}\Sigma_\alpha)\subseteq\Sigma_{\lambda+1}$.
\end{proof}

In the following, for any $Q\in \Sigma$, $\posleq{a}Q$ will denote 
the set $\{s\in S 
  \mid \tau_a(s,Q)\leq q \}$. Similarly for the other order relations.
   
\begin{note}\label{note:RT-is-R}
  If $\Theta$ is a $\sigma$-algebra, a relation $\Rel^T(\Theta)$ is 
  of the form $\Rel(\mathcal{F})$ for some 
  subfamily $\mathcal{F}$ of $\Sigma$: If $\Gamma$ is any algebra such 
  that $\sigma(\Gamma)=\Theta$, then 
  \[
    s\mathrel{\mathcal{R}^T(\Theta)}t \iff 
    (s,t)\in \mathcal{R}(\{\posleq{a}Q \mid 
    a\in L,\, q\in\mathbb{Q},\, Q\in \Gamma\})
  \]
  Let $\mathcal{F}$ the family on the right hand side, namely 
  $\{\posleq{a}Q \mid a\in L,\, q\in\mathbb{Q},\, Q\in \Gamma\}$.
  If $(s,t)\in \mathcal{R}^T(\Theta)$ 
  then for any $a\in L$, $q\in \Q$
  and $Q\in \Theta$ we have
  $s\in \posleq{a}Q$ iff $\tau_a(s,Q)\leq q$ iff 
  $\tau_a(t,Q)\leq q$ iff $t\in
  \posleq{a}Q$. Conversely, suppose that 
  $(s,t)\in\mathcal{R}(\mathcal{F})$. Since $\mathcal{D}_{s,t}=\{Q\in
  \Theta\mid \forall a\in L, \, \tau_a(s,Q)=\tau_a(t,Q)\}$ is a 
  monotone class and 
  $\Gamma$
  is a generating algebra for  $\Theta$ such that $\Gamma\subseteq
  \mathcal{D}_{s,t}$, the Monotone Class Theorem yields 
  $\Theta=\sigma(\Gamma)\subseteq\mathcal{D}_{s,t}$ for every $a\in 
  L$.
\end{note}

\begin{proposition}\label{prop:RT-Sigma-lambda}
  If $R_0=\mathcal{R}(\Sigma_0)=\mathcal{R}^T(\Sigma_0)$,
  then for all limit ordinals  $\lambda$,
  $\mathcal{R}(\Sigma_\lambda)=\mathcal{R}^T(\Sigma_\lambda)$.
\end{proposition}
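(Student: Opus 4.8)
The plan is to prove both inclusions of $\Rel(\Sigma_\lambda)=\Rel^T(\Sigma_\lambda)$ by pushing everything down to the stages below $\lambda$ and exploiting that, for a limit $\lambda$, the successor stages are cofinal. The starting observation is that for any family $\mathcal{F}\subseteq\Sigma$ one has $\Rel(\mathcal{F})=\Rel(\sigma(\mathcal{F}))$: for fixed $s,t$ the collection $\{A\in\Sigma\mid s\in A\Leftrightarrow t\in A\}$ is a $\sigma$-algebra, so agreement on a generating family propagates to the generated one. Applying this with $\mathcal{F}=\bigcup_{\alpha<\lambda}\Sigma_\alpha$, together with $\Sigma_\lambda=\sigma(\bigcup_{\alpha<\lambda}\Sigma_\alpha)$, yields $\Rel(\Sigma_\lambda)=\bigcap_{\alpha<\lambda}\Rel(\Sigma_\alpha)$. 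Since by Proposition~\ref{hip alfa} we already know $\Rel^T(\Sigma_\lambda)=R_\lambda=\bigcap_{\alpha<\lambda}R_\alpha$, the statement reduces to comparing the two intersections $\bigcap_{\alpha<\lambda}\Rel(\Sigma_\alpha)$ and $\bigcap_{\alpha<\lambda}R_\alpha$.

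The technical core I would isolate first is the identity $\Rel(\Sigma_{\alpha+1})=R_\alpha$, valid for every $\alpha$. By Corollary~\ref{cor:Sigma-R-alpha} we have $\Sigma_{\alpha+1}=\Sigma(R_\alpha)$, so this reads $\Rel(\Sigma(R_\alpha))=R_\alpha$. The inclusion $R_\alpha\subseteq\Rel(\Sigma(R_\alpha))$ is Proposition~\ref{prop:basics-operator}(\ref{item:2}). For the reverse, recall $R_\alpha=\Rel^T(\Sigma_\alpha)$; the sets $\posleq{a}Q$ with $Q\in\Sigma_\alpha$ are $R_\alpha$-closed (if $x\,R_\alpha\,s$ then $\tau_a(x,Q)=\tau_a(s,Q)$, so $x,s$ lie together in or out of $\posleq{a}Q$), hence belong to $\Sigma(R_\alpha)$, and by Note~\ref{note:RT-is-R} agreement on exactly such sets characterises $R_\alpha$; thus $\Rel(\Sigma(R_\alpha))\subseteq R_\alpha$. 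With this identity the inclusion $\Rel^T(\Sigma_\lambda)\subseteq\Rel(\Sigma_\lambda)$ is immediate: Corollary~\ref{cor:Sigma-monotonous} gives $\Sigma_\alpha\subseteq\Sigma_{\alpha+1}$, so $R_\alpha=\Rel(\Sigma_{\alpha+1})\subseteq\Rel(\Sigma_\alpha)$ by antimonotonicity of $\Rel$, and intersecting over $\alpha<\lambda$ gives $R_\lambda\subseteq\bigcap_{\alpha<\lambda}\Rel(\Sigma_\alpha)=\Rel(\Sigma_\lambda)$.

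For the converse inclusion $\Rel(\Sigma_\lambda)\subseteq\Rel^T(\Sigma_\lambda)=R_\lambda$ I would discard all but the successor indices: since $\lambda$ is a limit, $\beta+1<\lambda$ for every $\beta<\lambda$, so
\[
\Rel(\Sigma_\lambda)=\bigcap_{\alpha<\lambda}\Rel(\Sigma_\alpha)\subseteq\bigcap_{\beta<\lambda}\Rel(\Sigma_{\beta+1})=\bigcap_{\beta<\lambda}R_\beta=R_\lambda,
\]
again using $\Rel(\Sigma_{\beta+1})=R_\beta$. Combining the two inclusions closes the proof. The point I expect to be the real obstacle—and the reason the statement is restricted to limit ordinals—is precisely this off-by-one phenomenon: at a successor stage one only gets $\Rel(\Sigma_{\alpha+1})=R_\alpha\supsetneq R_{\alpha+1}=\Rel^T(\Sigma_{\alpha+1})$ in general, so the two operators genuinely disagree there, and it is only the cofinality of the successor stages $\{\Sigma_{\beta+1}\mid\beta<\lambda\}$ in $\Sigma_\lambda$ that makes the discrepancy wash out in the limit.
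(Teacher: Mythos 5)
Your proof is correct and takes essentially the same route as the paper's: both rest on the key identity $\Rel(\Sigma_{\alpha+1})=\Rel^T(\Sigma_\alpha)=R_\alpha$ (which the paper obtains from Note~\ref{note:RT-is-R} together with Proposition~\ref{prop:basics-operator}(\ref{A5}), and you re-derive by direct double inclusion via Corollary~\ref{cor:Sigma-R-alpha}), combined with antimonotonicity of $\Rel$ and the cofinality of the successor stages below the limit $\lambda$. The only cosmetic difference is that for the inclusion $R_\lambda\subseteq\Rel(\Sigma_\lambda)$ the paper passes upward through $\Rel(\Sigma_{\lambda+1})$, whereas you intersect termwise below $\lambda$.
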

\begin{proof}
  By Note~\ref{note:RT-is-R} there is some $\Lambda\subseteq\Sigma$ 
  such that  $\Rel^T(\Sigma_\alpha)=\Rel(\Lambda)$. 
  From Proposition~\ref{prop:basics-operator}(\ref{A5})  it follows that
  \[
  \Rel^T(\Sigma_\alpha)=\Rel(\Sigma(\Rel^T(\Sigma_\alpha)))=\Rel(\G(\Sigma_\alpha))=\Rel(\Sigma_{\alpha+1})\subseteq
  \Rel(\Sigma_\alpha).
  \]
  If $\lambda$ is a  limit ordinal, 
  $\Sigma_{\alpha+1}\subseteq\Sigma_\lambda$ holds for all 
  $\alpha<\lambda$ and hence
  $\mathcal{R}(\Sigma_\lambda)\subseteq\mathcal{R}(\Sigma_{\alpha+1})=\mathcal{R}^T(\Sigma_\alpha)=R_\alpha$. Then,
  $\mathcal{R}(\Sigma_\lambda)\subseteq\bigcap_{\alpha<\lambda}R_\alpha=R_\lambda=\mathcal{R}^T(\Sigma_\lambda)=\mathcal{R}(\Sigma_{\lambda+1})\subseteq\mathcal{R}(\Sigma_\lambda)$
  and the result follows.
\end{proof}

In Section~\ref{sec:example} we will construct an LMP for which
$\Rel(\Sigma_{\alpha+1})\supsetneq\Rel^T(\Sigma_{\alpha+1})$,
and hence the previous equality does not hold for successor ordinals
in general.

In the case $R_0=\mathcal{R}(\Sigma_0)=\mathcal{R}^T(\Sigma_0)$, we 
may summarize the results up to this point in 
Figure~\ref{fig:chains_diagram}:
\begin{figure}[h]
  \begin{center}          
    $\xymatrix@R=10ex@C=10ex{ 
      *+[l]{R_0 } \ar @{} [r] |*{\supseteq\cdots\supseteq} & *+[l]{
         R_\lambda} \ar@<.5ex>@{|->}[dr]^\Sigma
      \ar@{|->}[r]^{\Op} & R_{\lambda+1} \ar@<.5ex>@{|->}[dr]^\Sigma
      \ar@{|->}[r]^{\Op} & *+[r]{R_{\lambda+2} \supseteq
        \dots}\\
      *+[l]{\Sigma_0 } \ar @{} [r] |*{\subseteq\cdots\subseteq}
      \ar@{|->}[u]^{\mathcal{R}=\mathcal{R}^T} & *+[l]{
        \Sigma_\lambda} \ar@{|->}[r]_{\mathcal{G}}
      \ar@{|->}[u]^{\mathcal{R}=\mathcal{R}^T} & \Sigma_{\lambda+1}
      \ar@<.5ex>@{|->}[ul]^{\mathcal{R}} 
      \ar@<1ex>@{|->}[u]_{\mathcal{R}^T}
      \ar@{|->}[r]_{\mathcal{G}} & *+[r]{\Sigma_{\lambda+2}\subseteq 
      \dots}
      \ar@<0.5ex>@{|->}[ul]^{\mathcal{R}} 
      \ar@{|->}[u]_{\mathcal{R}^T} }$
  \end{center}
  \caption{\label{fig:chains_diagram}Chains of relations and 
  $\sigma$-algebras induced by $\Op$
    and $\G$ ($\lambda$ limit).}
\end{figure}

\begin{corollary}
  For a limit ordinal $\lambda$, if
  $\Sigma(\mathcal{R}(\Sigma_\lambda))=\Sigma_\lambda$ then $\mathcal{R}(\Sigma_\lambda)$ is a state bisimulation.
\end{corollary}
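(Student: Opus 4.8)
The plan is to reduce the statement to the fixpoint characterization of state bisimulations and then let Proposition~\ref{prop:RT-Sigma-lambda} do the work. Write $R \defi \mathcal{R}(\Sigma_\lambda)$. I would first recall, from the discussion following the definition of $\Op$ and $\G$ (equivalently, Proposition~\ref{prop:basics-operator}(\ref{item:7})), that a symmetric relation is a state bisimulation precisely when it is a post-fixpoint of $\Op$, i.e. when $R\sbq \Op(R)=\Rel^T(\Sigma(R))$. Since $\mathcal{R}(\Gamma)$ is by definition an equivalence relation for any family $\Gamma$, the relation $R$ is in particular symmetric, so it suffices to verify the single inclusion $R\sbq\Op(R)$.

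Next I would compute $\Op(R)$ directly from the hypothesis. By definition $\Op(R)=\Rel^T(\Sigma(R))$, and the assumption $\Sigma(\mathcal{R}(\Sigma_\lambda))=\Sigma_\lambda$ says exactly that $\Sigma(R)=\Sigma_\lambda$, so $\Op(R)=\Rel^T(\Sigma_\lambda)$. I would then invoke Proposition~\ref{prop:RT-Sigma-lambda}, which in the standing case $R_0=\mathcal{R}(\Sigma_0)=\mathcal{R}^T(\Sigma_0)$ gives $\mathcal{R}^T(\Sigma_\lambda)=\mathcal{R}(\Sigma_\lambda)=R$ for limit $\lambda$. Chaining these identities yields $\Op(R)=\Rel^T(\Sigma_\lambda)=\mathcal{R}^T(\Sigma_\lambda)=R$. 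In particular $R\sbq\Op(R)$, whence $R$ is a state bisimulation.

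I do not expect a real obstacle here: the corollary is a direct bookkeeping consequence of the fixpoint characterization together with Proposition~\ref{prop:RT-Sigma-lambda}, and the argument in fact shows the slightly stronger fact that $R$ is a genuine fixpoint of $\Op$, not merely a post-fixpoint. The only two points one must not overlook are (i) that the statement is read inside the standing hypothesis $R_0=\mathcal{R}(\Sigma_0)=\mathcal{R}^T(\Sigma_0)$, so that Proposition~\ref{prop:RT-Sigma-lambda} is applicable at the limit stage $\lambda$, and (ii) that the hypothesis $\Sigma(\mathcal{R}(\Sigma_\lambda))=\Sigma_\lambda$ is precisely what is needed to rewrite $\Sigma(R)$ as $\Sigma_\lambda$ and thereby close the loop $\Op(R)=R$.
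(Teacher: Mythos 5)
Your proof is correct and follows essentially the same route as the paper's: use the hypothesis to rewrite $\Sigma(\mathcal{R}(\Sigma_\lambda))$ as $\Sigma_\lambda$, invoke Proposition~\ref{prop:RT-Sigma-lambda} to get $\mathcal{R}^T(\Sigma_\lambda)=\mathcal{R}(\Sigma_\lambda)$, and conclude that $\mathcal{R}(\Sigma_\lambda)$ is a fixpoint (hence a post-fixpoint) of $\Op$ and so a state bisimulation. The paper's proof is just the compressed one-line chain $\Rel^T(\Sigma(\Rel(\Sigma_\lambda)))=\Rel^T(\Sigma_\lambda)=\Rel(\Sigma_\lambda)$, with the same two justifications you spell out.
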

\begin{proof}
  $\Rel^T(\Sigma(\Rel(\Sigma_\lambda)))=\Rel^T(\Sigma_\lambda)=\Rel(\Sigma_\lambda)$. Therefore $\Rel(\Sigma_\lambda)$ is a state bisimulation. 
\end{proof}

We add an observation about Figure~\ref{fig:chains_diagram}: If
$\G(\Gamma)=\Gamma$ then
$\Op(\Rel^T(\Gamma))=\Rel^T\Sigma(\Rel^T(\Gamma))=\Rel^T(\G(\Gamma))=\Rel^T(\Gamma)$. This
means that a fixpoint in the lower part forces a fixpoint in the upper
part. By using the example in \cite{Pedro20111048} it can be seen that
the converse does not hold.

\begin{lemma}\label{lem:equivalences-Lambda-stable}
  Let  $\Lambda$ be a  sub-$\sigma$-algebra of $\Sigma$ such that
  $\Sigma(\mathcal{R}(\Lambda))=\Lambda$. The following are equivalent:
  \begin{enumerate}
  \item \label{item:3}$\Lambda$ is stable.
  \item \label{item:4}$\mathcal{R}(\Lambda) \subseteq \mathcal{R}^T(\Lambda)$.
  \item \label{item:5}$\mathcal{R}(\Lambda)$ is a state bisimulation.
  \end{enumerate}
\end{lemma}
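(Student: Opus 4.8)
The plan is to establish the three conditions are equivalent by a cyclic chain of implications $(\ref{item:3}) \Rightarrow (\ref{item:4}) \Rightarrow (\ref{item:5}) \Rightarrow (\ref{item:3})$, using the hypothesis $\Sigma(\mathcal{R}(\Lambda)) = \Lambda$ as the hinge that lets me move between the relational side ($\mathcal{R}$, $\mathcal{R}^T$) and the $\sigma$-algebra side ($\Sigma$, $\Lambda$). The one genuine idea is to notice that this hypothesis collapses the family of ``tests'' $\Sigma(\mathcal{R}(\Lambda))$ induced by the equivalence $\mathcal{R}(\Lambda)$ back to $\Lambda$ itself, so that the transition condition defining a state bisimulation becomes verbatim the condition defining $\mathcal{R}^T(\Lambda)$.

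The two outer implications are immediate consequences of the machinery already in place. For $(\ref{item:3}) \Rightarrow (\ref{item:4})$, if $\Lambda$ is stable then Proposition~\ref{prop:basics-operator}(\ref{item:8}) gives $\mathcal{R}(\Lambda) \subseteq \mathcal{R}^T(\Lambda)$ directly. For $(\ref{item:5}) \Rightarrow (\ref{item:3})$, I would run through Proposition~\ref{prop:basics-operator}(\ref{item:7}): if $\mathcal{R}(\Lambda)$ is a state bisimulation, then $(S, \Sigma(\mathcal{R}(\Lambda)), \{\tau_a \mid a \in L\})$ is an LMP, and replacing $\Sigma(\mathcal{R}(\Lambda))$ by $\Lambda$ via the hypothesis, this says exactly that $(S, \Lambda, \{\tau_a \mid a \in L\})$ is an LMP, which by the remark following the definition of stability is equivalent to $\Lambda$ being stable.

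The middle step $(\ref{item:4}) \Rightarrow (\ref{item:5})$ is where the hinge is used. Assuming $\mathcal{R}(\Lambda) \subseteq \mathcal{R}^T(\Lambda)$, the relation $\mathcal{R}(\Lambda)$ is visibly symmetric (indeed an equivalence relation), so the only thing left to check is the transition condition. I would take $(s,t) \in \mathcal{R}(\Lambda)$ and $C \in \Sigma(\mathcal{R}(\Lambda))$; by hypothesis $C \in \Lambda$, and since $(s,t) \in \mathcal{R}(\Lambda) \subseteq \mathcal{R}^T(\Lambda)$, unfolding the definition of $\mathcal{R}^T$ yields $\tau_a(s,C) = \tau_a(t,C)$ for every label $a$, so $\mathcal{R}(\Lambda)$ is a state bisimulation. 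This closes the cycle and proves all three equivalences.

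I do not expect a serious obstacle here: all three directions reduce to short unfoldings once the earlier propositions are available. The only delicate point will be bookkeeping—keeping straight which inclusion of each operator is being invoked and applying the identity $\Sigma(\mathcal{R}(\Lambda)) = \Lambda$ precisely at the two junctions where the relational and measure-theoretic formulations meet. Everything else is routine.
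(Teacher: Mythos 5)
Your proposal is correct and follows essentially the same route as the paper's proof: the outer implications \ref{item:3}$\Rightarrow$\ref{item:4} and \ref{item:5}$\Rightarrow$\ref{item:3} are handled identically via Proposition~\ref{prop:basics-operator}(\ref{item:8}) and (\ref{item:7}), and your middle step is just the pointwise unfolding of the paper's one-line argument $\Rel(\Lambda)\sbq\Rel^T(\Lambda)=\Rel^T(\Sigma(\Rel(\Lambda)))$, which invokes the same hypothesis at the same junction through the characterization that $R$ is a state bisimulation iff $R\sbq\Rel^T(\Sigma(R))$.
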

\begin{proof}
  \ref{item:3} implies \ref{item:4} by Proposition~\ref{prop:basics-operator}(\ref{item:8}).	
  For \ref{item:4}$\Rightarrow$\ref{item:5}, observe that 
  $\Rel(\Lambda)\sbq\Rel^T(\Lambda)=\Rel^T(\Sigma(\Rel(\Lambda)))$
  and this means that   $\Rel(\Lambda)$ is a  state bisimulation.
	
  By virtue of Prop.~\ref{prop:basics-operator}, Item~\ref{item:7} 
  implies 
  $(S,\Sigma(\Rel(\Lambda)),\tau)$ is an LMP, but then
  $\Sigma(\Rel(\Lambda))=\Lambda$ is stable.
\end{proof}

\begin{example*}\label{exm:1}
  The hypothesis is necessary: On $[0,1]$, take
  $\Sigma=\Borel([0,1])$, $\Lambda$ to be the countable-cocountable 
  $\sigma$-algebra  and  $\tau(x,A)\defi \delta_x(A)$ for $x\in
  [0,\frac{1}{2}]$ and $\tau(x,A)\defi \frac{1}{2} \delta_x(A)$ if
  $x\in  (\frac{1}{2},1]$. Then
    $\Sigma(\Rel(\Lambda))=\Sigma\neq\Lambda$,
    $\Rel(\Lambda)=\mathrm{id}_{[0,1]}=\Rel^T(\Lambda)$
    and $\Lambda$ is not stable since, e.g., 
    $\{x     \mid 
    \tau(x,[0,1])>\frac{1}{2}\}=[0,\frac{1}{2}]\notin\Lambda$. 
    This
    example shows that \ref{item:4} does not imply \ref{item:3} in
    general. Since the identity relation is a state
    bisimulation, we also conclude that \ref{item:5} does not imply \ref{item:3} in
    general.
\end{example*}
The  $\sigma$-algebra $\Sigma_\omega$ corresponding to the  LMP to be
presented in Section~\ref{sec:example}, satisfies
$\Rel^T(\Sigma_\omega)=\Rel(\Sigma_\omega)$ but $\Rel(\Sigma_\omega)$
is not a state bisimulation. Hence \ref{item:4} does not imply
\ref{item:5} in general.  This example has a stable $\Sigma_0$
but $R_0=\Rel(\Sigma_0)$ is not state bisimulation; hence \ref{item:3}
does not imply \ref{item:5}.

We now aim to prove that $\Sigma_\lambda$ is stable for limit
$\lambda$. With the notation introduced before Note~\ref{note:RT-is-R}, we 
observe 
that $\Gamma$ is stable if and only if $\forall
a \in L \; \forall q \in \Q \; \forall Q\in \Gamma \; \langle
a \rangle_{\leq q}Q \in \Gamma$. 
Given a label $a$, we define the following set
\[
\mathcal{A}_a \defi \{Q\in \Sigma \mid \forall q \in \Q \,
(\langle a\rangle_{\leq q}Q \in \Sigma_\lambda \, \wedge \,
\langle a \rangle_{<q}Q \in \Sigma_\lambda)\}.
\]
Then, to show that   $\Sigma_\lambda$ is stable it is enough to prove
$\Sigma_\lambda \subseteq \A_a$ for all $a\in L$.

\begin{lemma} \label{A_a}
  If $\lambda$ is a limit ordinal, then 
  $\forall a \in L \, \forall  \alpha<\lambda \; \Sigma_\alpha \subseteq  \mathcal{A}_a$.
\end{lemma}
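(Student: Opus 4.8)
The plan is to argue directly for each level, with no induction on $\alpha$ needed. I fix a label $a\in L$ and an ordinal $\alpha<\lambda$, take an arbitrary $Q\in\Sigma_\alpha$, and show that for every $q\in\Q$ both threshold sets $\posleq{a}Q$ and $\posl{a}Q$ already lie in $\Sigma_{\alpha+1}$. Membership in $\Sigma_\lambda$ will then follow automatically from the fact that $\lambda$ is a limit ordinal, giving $Q\in\A_a$ and hence $\Sigma_\alpha\subseteq\A_a$.

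The key step is to recognize these threshold sets as elements of $\G(\Sigma_\alpha)=\Sigma(\Rel^T(\Sigma_\alpha))$. First, since $Q\in\Sigma_\alpha\subseteq\Sigma$, the map $\tau_a(\cdot,Q)$ is $(\Sigma,\Borel[0,1])$-measurable by the definition of a Markov kernel, so $\posleq{a}Q=\tau_a(\cdot,Q)^{-1}([0,q])$ and $\posl{a}Q=\tau_a(\cdot,Q)^{-1}([0,q))$ both belong to $\Sigma$. Second, I claim each is closed under the equivalence relation $\Rel^T(\Sigma_\alpha)$: if $s$ lies in the set and $(s,t)\in\Rel^T(\Sigma_\alpha)$, then $\tau_a(s,Q)=\tau_a(t,Q)$ because $a\in L$ and $Q\in\Sigma_\alpha$, so $t$ satisfies the same threshold condition. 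Hence both sets are $\Rel^T(\Sigma_\alpha)$-closed members of $\Sigma$, that is, elements of $\Sigma(\Rel^T(\Sigma_\alpha))$, which is by definition exactly $\G(\Sigma_\alpha)=\Sigma_{\alpha+1}$.

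To finish, I use that $\lambda$ is a limit ordinal: from $\alpha<\lambda$ we get $\alpha+1<\lambda$, and therefore $\Sigma_{\alpha+1}\subseteq\bigcup_{\beta<\lambda}\Sigma_\beta\subseteq\sigma(\bigcup_{\beta<\lambda}\Sigma_\beta)=\Sigma_\lambda$. Consequently $\posleq{a}Q\in\Sigma_\lambda$ and $\posl{a}Q\in\Sigma_\lambda$ for every $q\in\Q$, which is precisely the condition defining $\A_a$; thus $Q\in\A_a$. As $Q\in\Sigma_\alpha$, $a$, and $\alpha<\lambda$ were arbitrary, this yields $\Sigma_\alpha\subseteq\A_a$ for all $a\in L$ and all $\alpha<\lambda$.

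I do not expect a serious obstacle here, since the whole content is the observation that the threshold sets built from $\Sigma_\alpha$-measurable events are automatically $\Rel^T(\Sigma_\alpha)$-saturated and so are produced by a single application of $\G$. The two points that must not be overlooked are that $\Rel^T(\Sigma_\alpha)$ is genuinely an equivalence relation (so that ``closed under the relation'' coincides with ``saturated''), and that the limit hypothesis on $\lambda$ is exactly what guarantees $\alpha+1<\lambda$ and hence $\Sigma_{\alpha+1}\subseteq\Sigma_\lambda$. This lemma is only the per-level ingredient; the subsequent Monotone Class Theorem argument will be what upgrades it from $\Sigma_\alpha$ to all of $\Sigma_\lambda$ and thereby establishes stability of $\Sigma_\lambda$.
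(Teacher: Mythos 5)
Your proof is correct and follows essentially the same route as the paper's: both arguments show that for $Q\in\Sigma_\alpha$ the threshold sets $\posleq{a}Q$ and $\posl{a}Q$ lie in $\Sigma$ by measurability of $\tau_a(\cdot,Q)$, are $\Rel^T(\Sigma_\alpha)$-closed because related states assign equal measure to $Q$, and hence belong to $\Sigma(\Rel^T(\Sigma_\alpha))=\Sigma_{\alpha+1}\subseteq\Sigma_\lambda$. Your version even fixes a small slip in the paper (the preimage should be of $[0,q]$, not $(0,q]$) and correctly identifies the role of the limit hypothesis and of the subsequent Monotone Class argument.
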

\begin{proof}
  Let $\alpha <\lambda$ and $Q\in \Sigma_\alpha$. We will show that
  for every label $a\in L$ and for all $q\in \Q$ the sets 
  $\langle a\rangle_{\leq q}Q$ and $\langle  a\rangle_{<q}Q$ are in
  $\Sigma_{\alpha+1}=\Sigma(\mathcal{R}^T(\Sigma_\alpha))\subseteq \Sigma_\lambda$. 
  Since $\tau_a(\cdot,Q)$ is measurable,
  $\langle a\rangle_{\leq q}Q=\tau_a(\cdot, Q)^{-1}((0,q]) \in
    \Sigma$. 
    To check that  $\langle a\rangle_{\leq q}Q$ is
    $\mathcal{R}^T(\Sigma_\alpha)$-closed, note that 
    \begin{align*}
      s\mathrel{\mathcal{R}^T(\Sigma_\alpha)}t &\iff  \forall a\in L 
      \,
      \forall A\in \Sigma_\alpha \; \tau_a(s,A)=\tau_a(t,A)
      \\ 
      &\iff  \forall a\in L \, \forall A\in \Sigma_\alpha \,
      \forall q\in \Q \; (\tau_a(s,A)\leq q \text{ iff }
      \tau_a(t,A)\leq q) \\ 
      & \iff  \forall a\in L \, \forall
      A\in \Sigma_\alpha \, \forall q\in \Q \; (s\in 
      \langle a \rangle_{\leq q}A \text{ iff } t\in \langle 
      a \rangle_{\leq q}A)
    \end{align*}
    The proof for  $\langle a \rangle_{<q}Q$ is similar.
\end{proof}

\begin{lemma}\label{lem:A-properties}
  \begin{enumerate}
  \item\label{item:9} If $\{A_n\}_{n \in \omega}$ is a non-decreasing 
  sequence of measurable sets, then 
    for all $a\in L$ and for all $q\in \Q$, 
    $\langle   a \rangle_{\leq q}\bigcup_{n\in \omega}A_n=\bigcap_{n\in    \omega}\langle a \rangle_{\leq q}A_n$.
  \item If $\{B_n\}_{n\in \omega}$ is a non-increasing 
  sequence 
  of measurable sets, then 
    for all $a\in L$ and for all $q\in \Q$, $\langle
    a \rangle_{< q}\bigcap_{n\in \omega}B_n=\bigcup_{n\in    \omega}\langle a \rangle_{<q}B_n$.
  \item \label{item:A-monotonous}$\mathcal{A}_a$ is a monotone class.
  \end{enumerate}
\end{lemma}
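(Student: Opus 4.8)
The plan is to reduce parts~\ref{item:9} and~(2) to the monotone continuity of the finite measures $\tau_a(s,\cdot)$, and then to obtain part~\ref{item:A-monotonous} from these two together with the elementary observation that the modalities $\langle a\rangle_{\le q}$ and $\langle a\rangle_{<q}$ are interdefinable by countable Boolean operations indexed over the rationals.

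For the first item, fix $a\in L$, $q\in\Q$ and $s\in S$. Since $\{A_n\}_{n\in\omega}$ is non-decreasing, continuity from below of $\tau_a(s,\cdot)$ gives $\tau_a(s,\bigcup_n A_n)=\sup_n\tau_a(s,A_n)$, and a supremum is $\le q$ precisely when every term is $\le q$; hence $s\in\langle a\rangle_{\le q}\bigcup_n A_n$ iff $\tau_a(s,A_n)\le q$ for all $n$, i.e.\ iff $s\in\bigcap_n\langle a\rangle_{\le q}A_n$. The second item is the dual computation, read with $\{B_n\}_{n\in\omega}$ \emph{non-increasing} (which is what matches the intersection and the intended application): continuity from above---valid because the $\tau_a(s,\cdot)$ are finite---gives $\tau_a(s,\bigcap_n B_n)=\inf_n\tau_a(s,B_n)$, and the infimum of a non-increasing sequence is $<q$ exactly when some term is $<q$, whence $\langle a\rangle_{<q}\bigcap_n B_n=\bigcup_n\langle a\rangle_{<q}B_n$.

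For part~\ref{item:A-monotonous} I must check that $\mathcal{A}_a$ is closed under increasing unions and decreasing intersections. Given a monotone sequence of members of $\mathcal{A}_a$ and its limit $Q$ (an increasing union $\bigcup_n A_n$ or a decreasing intersection $\bigcap_n B_n$), I need $\langle a\rangle_{\le q}Q\in\Sigma_\lambda$ and $\langle a\rangle_{<q}Q\in\Sigma_\lambda$ for every $q\in\Q$; recall that, by definition of $\mathcal{A}_a$, each member $C$ of the sequence satisfies $\langle a\rangle_{\le q'}C,\langle a\rangle_{<q'}C\in\Sigma_\lambda$ for every rational $q'$. Two of the four required cases are handed to me directly by the first two items: $\langle a\rangle_{\le q}\bigcup_n A_n=\bigcap_n\langle a\rangle_{\le q}A_n$ and $\langle a\rangle_{<q}\bigcap_n B_n=\bigcup_n\langle a\rangle_{<q}B_n$ are countable Boolean combinations of sets in $\Sigma_\lambda$, hence lie in $\Sigma_\lambda$.

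The remaining two ``crossed'' cases are where the difficulty lies, and I expect them to be the main obstacle: a \emph{strict} inequality against a supremum, or a \emph{non-strict} one against an infimum, does not collapse to a single index, so neither part~\ref{item:9} nor the second item applies verbatim. I would resolve this by inserting an intermediate rational, using $\langle a\rangle_{<q}Q=\bigcup_{q'\in\Q,\,q'<q}\langle a\rangle_{\le q'}Q$ and $\langle a\rangle_{\le q}Q=\bigcap_{q'\in\Q,\,q'>q}\langle a\rangle_{<q'}Q$. Combining these with the first two items yields
\[
\langle a\rangle_{<q}\bigcup_n A_n=\bigcup_{q'<q}\bigcap_n\langle a\rangle_{\le q'}A_n
\quad\text{and}\quad
\langle a\rangle_{\le q}\bigcap_n B_n=\bigcap_{q'>q}\bigcup_n\langle a\rangle_{<q'}B_n,
\]
where $q'$ ranges over the rationals. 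Each right-hand side is a countable union/intersection of sets belonging to $\Sigma_\lambda$, so both limits again lie in $\Sigma_\lambda$. This settles all four cases and shows that $\mathcal{A}_a$ is a monotone class.
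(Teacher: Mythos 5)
Your proof is correct and follows essentially the same route as the paper's: parts (1) and (2) via monotonicity and monotone continuity of the finite measures $\tau_a(s,\cdot)$ (including the correct reading of $\{B_n\}$ as non-increasing), and part (3) by expressing the ``crossed'' modalities through rational approximation, $\langle a\rangle_{<q}Q=\bigcup_{q'<q}\langle a\rangle_{\leq q'}Q$ and $\langle a\rangle_{\leq q}Q=\bigcap_{q'>q}\langle a\rangle_{<q'}Q$. The paper does exactly this with the specific sequences $q-1/m$ and $q+1/m$ in place of all rationals on the appropriate side of $q$, which is an immaterial difference.
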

\begin{proof}
  \begin{enumerate}
  \item In general, if $A\subseteq B$ then $\langle  a \rangle_{\leq q}B\subseteq \langle a \rangle_{\leq q}A$ by monotonicity of measures. 
    Thus we have  ($\subseteq$). For ($\supseteq$) , if $s\in S$
    satisfies
    $\forall n\in \omega \, \tau_a(s,A_n)\leq q$, the continuity of 
    the
    measure $\tau_a(s,\cdot)$ yields
    $\tau_a(s,\bigcup_{n\in \omega}A_n)=\lim\tau_a(s,A_n)\leq q$.
		
  \item Similarly to \ref{item:9}, 
    \[
    \textstyle\bigcap_{n\in\omega}B_n\subseteq B_m \implies
    \langle a
    \rangle_{<q}\bigcap_{n\in\omega}B_n\supseteq \langle
    a \rangle_{<q}B_m \implies \langle a
    \rangle_{<q}\bigcap_{n\in\omega}B_n \supseteq
    \bigcup_{m\in\omega} \langle a
    \rangle_{<q}B_m.
    \]
    For the other  inclusion, if 
    $s\in    \langle a \rangle_{<q}\bigcap_{n\in\omega}B_n $,
    continuity of the measure $\tau_a(s,\cdot)$ implies
    $q>\tau_a(s,\bigcap_{n\in \omega}B_n)=\lim \tau(s,B_n)$.
    Then, there exists $n\in \omega$ such that
    $\tau_a(s,B_n)<q$ and hence $s\in \langle a \rangle_{<q}B_n$.
		
  \item Let  $\{A_n\}_{n \in \omega}\subseteq \mathcal{A}_a$ a 
  non-decreasing sequence of sets. 
    Let  $q\in  \Q$; part \ref{item:9} allows us to conclude 
    $\langle a \rangle_{\leq q}\bigcup_{n\in  \omega}A_n=\bigcap_{n\in \omega}\langle a
    \rangle_{\leq q}A_n\in \Sigma_\lambda$ and also 
    \[\textstyle
    \langle a \rangle _{<q}\bigcup_{n\in
      \omega}A_n=\bigcup_{m\in\omega}\langle a
    \rangle_{\leq q-1/m}(\bigcup_{n\in
      \omega}A_n)=\bigcup_{m\in\omega}\bigcap_{n\in
      \omega}(\langle a \rangle_{\leq q-1/m}A_n) \in
    \Sigma_\lambda.
    \]    
    Then, $\bigcup_{n\in \omega}A_n    \in \mathcal{A}_a$.
    
    Now let  $\{B_n\}_{n \in \omega}\subseteq \mathcal{A}_a$ be 
    non-increasing and let  $q\in \Q$. The second part yields
    $\langle a \rangle_{<q}\bigcap_{n\in
      \omega}B_n=\bigcup_{n\in \omega}\langle a
    \rangle_{<q}B_n\in \Sigma_\lambda$ and also
    \[\textstyle
    \langle a \rangle _{\leq q}\bigcap_{n\in
      \omega}B_n=\bigcap_{m\in\omega}\langle a
    \rangle_{<q+1/m}(\bigcap_{n\in
      \omega}B_n)=\bigcap_{m\in\omega}\bigcup_{n\in
      \omega}(\langle a \rangle_{<q+1/m}B_n) \in
    \Sigma_\lambda.
    \]    
    Then, $\bigcap_{n\in \omega}B_n \in \mathcal{A}_a$.\qedhere
  \end{enumerate}
\end{proof}

\begin{theorem}
  $\Sigma_\lambda$ is a
  stable $\sigma$-algebra for any limit ordinal $\lambda$.
\end{theorem}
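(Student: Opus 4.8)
The plan is to invoke the reduction recorded just above Lemma~\ref{A_a}: it suffices to show $\Sigma_\lambda \subseteq \mathcal{A}_a$ for every $a \in L$. Indeed, granting this, every $Q \in \Sigma_\lambda$ satisfies $\langle a \rangle_{\leq q} Q \in \Sigma_\lambda$ for all $q \in \Q$ and all $a$, which is exactly the characterization of stability noted before that lemma (complementing inside the $\sigma$-algebra $\Sigma_\lambda$ turns $\langle a \rangle_{\leq q} Q$ into $\langle a \rangle_{>q} Q$, matching the original definition). So the whole theorem reduces to establishing this single inclusion, for each fixed label.

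First I would fix $a \in L$ and assemble the two facts already proved about $\mathcal{A}_a$. By Lemma~\ref{A_a} we have $\Sigma_\alpha \subseteq \mathcal{A}_a$ for every $\alpha < \lambda$, and hence $\bigcup_{\alpha<\lambda}\Sigma_\alpha \subseteq \mathcal{A}_a$. Since $\{\Sigma_\alpha\}_{\alpha<\lambda}$ is an increasing chain of $\sigma$-algebras by Corollary~\ref{cor:Sigma-monotonous}, its union is an algebra of subsets of $S$ (finitely many of its members all lie in a single $\Sigma_\beta$, which is closed under finite unions and complements). By Lemma~\ref{lem:A-properties}(\ref{item:A-monotonous}), $\mathcal{A}_a$ is a monotone class. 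The Monotone Class Theorem then gives $\sigma(\bigcup_{\alpha<\lambda}\Sigma_\alpha) \subseteq \mathcal{A}_a$, and since $\lambda$ is a limit ordinal, $\Sigma_\lambda = \G^\lambda(\Sigma_0) = \sigma(\bigcup_{\alpha<\lambda}\Sigma_\alpha)$ by definition of the iterates; therefore $\Sigma_\lambda \subseteq \mathcal{A}_a$.

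As $a$ was arbitrary, the reduction yields stability of $\Sigma_\lambda$, which is a $\sigma$-algebra precisely because it is defined as a generated one. I do not expect a genuine obstacle at this final assembly: all the analytic work---relating $\langle a \rangle_{\leq q}$ and $\langle a \rangle_{<q}$ to monotone unions and intersections via upper/lower continuity of the kernels, and thereby closing $\mathcal{A}_a$ under monotone limits---is already discharged in Lemmas~\ref{A_a} and~\ref{lem:A-properties}. The only subtlety to keep in view is that $\bigcup_{\alpha<\lambda}\Sigma_\alpha$ is merely an algebra and not in general a $\sigma$-algebra (this is exactly why the limit stage must \emph{generate}), so the passage from the chain to $\Sigma_\lambda$ genuinely requires the Monotone Class Theorem rather than a direct closure argument.
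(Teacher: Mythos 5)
Your proof is correct and is essentially the paper's own argument: reduce stability to the inclusion $\Sigma_\lambda \subseteq \mathcal{A}_a$ for each label, observe that $\bigcup_{\alpha<\lambda}\Sigma_\alpha$ is an algebra contained in the monotone class $\mathcal{A}_a$ (Lemma~\ref{A_a} and Lemma~\ref{lem:A-properties}(\ref{item:A-monotonous})), and conclude $\Sigma_\lambda=\sigma(\bigcup_{\alpha<\lambda}\Sigma_\alpha)\subseteq\mathcal{A}_a$ by the Monotone Class Theorem. You only make explicit a step the paper leaves tacit, namely that the family $\{\Sigma_\alpha\}_{\alpha<\lambda}$ is an increasing chain (Corollary~\ref{cor:Sigma-monotonous}, available under the standing hypothesis on $R_0$ and $\Sigma_0$), which is what makes the union an algebra.
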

\begin{proof}
  Since  $\bigcup_{\alpha<\lambda}\Sigma_\alpha$ is an algebra of 
  sets,
  Lemma~\ref{A_a},
  Lemma~\ref{lem:A-properties}(\ref{item:A-monotonous}) and the
  Monotone Class Theorem yield 
  $\Sigma_\lambda=\sigma(\bigcup_{\alpha<\lambda}\Sigma_\alpha)\subseteq\mathcal{A}_a$
  for any $a \in L$.
\end{proof}

\section{The Zhou Ordinal}
\label{sec:zhou-ordinal}

Zhou expressed state bisimilarity as a fixpoint:

\begin{theorem}[{\cite[Thm.~3.4]{DBLP:journals/entcs/Zhou13}}]
  State bisimilarity $\sim_s$ is the greatest fixpoint of $\Op$.
\end{theorem}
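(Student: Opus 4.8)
The plan is to treat this as an instance of the Knaster–Tarski principle that the greatest fixpoint of a monotone operator on a complete lattice is the join of its post-fixpoints, applied to $\Op$ on the inclusion-ordered lattice of binary relations on $S$. The bridge to bisimilarity is the observation recorded just after the definition of $\Op$: a symmetric relation $R$ is a state bisimulation exactly when $R \sbq \Op(R)$, that is, exactly when $R$ is a post-fixpoint of $\Op$. Hence $\sim_s$, the union of all state bisimulations, is the union of all symmetric post-fixpoints, and it remains to show this union is a fixpoint and dominates every fixpoint.

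First I would prove that $\sim_s$ is itself a (in fact the largest) state bisimulation, hence a post-fixpoint. If $s \sim_s t$, pick a state bisimulation $R \sbq {\sim_s}$ with $s \mathrel{R} t$. For any $C \in \Sigma(\sim_s)$, antimonotonicity of $\Sigma(\cdot)$ (Proposition~\ref{prop:basics-operator}(\ref{A4})) gives $\Sigma(\sim_s) \sbq \Sigma(R)$, so $C \in \Sigma(R)$ and therefore $\tau_a(s,C) = \tau_a(t,C)$ for every $a$. Since $\sim_s$ is evidently symmetric, it is a state bisimulation, i.e.\ $\sim_s \sbq \Op(\sim_s)$. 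I expect this to be the main obstacle: in contrast with ordinary transition systems, the set $\Sigma(R)$ of admissible test sets \emph{shrinks} as $R$ grows, so closure of bisimilarity under unions is not automatic and must be supplied precisely by the antimonotonicity of $\Sigma$.

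Next I would promote the post-fixpoint $\sim_s$ to a genuine fixpoint. Applying monotonicity of $\Op$ (Proposition~\ref{prop:basics-operator}(\ref{item:6})) to $\sim_s \sbq \Op(\sim_s)$ yields $\Op(\sim_s) \sbq \Op(\Op(\sim_s))$, so $\Op(\sim_s)$ is again a post-fixpoint; being an equivalence relation it is in particular symmetric, hence a state bisimulation, and therefore $\Op(\sim_s) \sbq {\sim_s}$. Together with the reverse inclusion from the previous step this gives $\Op(\sim_s) = {\sim_s}$.

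For maximality, let $R$ be any fixpoint, $R = \Op(R)$. Since $\Op$ always returns an equivalence relation, $R$ is symmetric, and $R \sbq \Op(R)$ holds trivially, so $R$ is a state bisimulation and thus $R \sbq {\sim_s}$. Hence $\sim_s$ is the greatest fixpoint of $\Op$. (One could instead obtain a greatest fixpoint abstractly from Proposition~\ref{pre-punto fijo} applied to the pre-fixpoint $S \times S$, but identifying it with $\sim_s$ still requires the post-fixpoint analysis above.)
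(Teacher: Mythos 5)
Your proof is correct. Note that the paper does not actually prove this statement: it imports it verbatim from Zhou \cite[Thm.~3.4]{DBLP:journals/entcs/Zhou13}, so there is no internal proof to compare against; what you have produced is a self-contained derivation from facts the paper does establish, and it holds up. The three ingredients you use are exactly the right ones: (i) the observation following the definition of $\Op$ that a \emph{symmetric} $R$ is a state bisimulation iff $R\sbq\Op(R)$, (ii) the antimonotonicity of $\Sigma(\cdot)$ (Proposition~\ref{prop:basics-operator}(\ref{A4})), which is precisely what rescues closure of state bisimulations under unions --- the step that genuinely needs an argument here, since the family of test sets $\Sigma(R)$ shrinks as $R$ grows --- and (iii) the remark that $\Op$ always outputs an equivalence relation, which supplies the symmetry needed both to recognize $\Op(\sim_s)$ as a state bisimulation (giving $\Op(\sim_s)\sbq{\sim_s}$, hence equality) and to see that any fixpoint of $\Op$ is a state bisimulation, hence below $\sim_s$. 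This is essentially the Knaster--Tarski argument, and it is also in substance how Zhou's original proof runs; your parenthetical remark is likewise accurate: Proposition~\ref{pre-punto fijo} applied to the pre-fixpoint $S\times S$ yields existence of a greatest fixpoint, but identifying that fixpoint with $\sim_s$ still requires steps (i)--(iii), so the abstract route saves nothing.
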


By direct application of Proposition~\ref{pre-punto fijo} we get the 
following
\begin{theorem}\label{th:fixpoint_O}
  Let  $R$ be an equivalence  relation on  $S$ such that
  ${\sim_s}\subseteq  R$ and $\mathcal{O}(R)\subseteq R$, then there
  exists an ordinal
  $\alpha$ such that $|\alpha|\leq |S|$ and   
  $\mathcal{O}^\alpha(R)={\sim_s}$.
\end{theorem}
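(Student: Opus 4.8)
The plan is to apply the abstract fixpoint result, Proposition~\ref{pre-punto fijo}, to the monotone operator $\Op$ and then sharpen the resulting cardinality bound by exploiting the fact that every iterate of $\Op$ is an equivalence relation. I work in the complete lattice $\Power(S\times S)$ of all binary relations on $S$ ordered by inclusion. By Proposition~\ref{prop:basics-operator}(\ref{item:6}), $\Op$ is monotone, and the hypothesis $\Op(R)\subseteq R$ says precisely that $R$ is a pre-fixpoint. Hence Proposition~\ref{pre-punto fijo} applies and yields that $\Op^\infty(R)$ is the greatest fixpoint of $\Op$ below $R$, reached at some ordinal; it remains to identify this fixpoint and to improve the bound on the ordinal at which it is attained.

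Next I identify $\Op^\infty(R)$ with ${\sim_s}$. By Zhou's theorem \cite[Thm.~3.4]{DBLP:journals/entcs/Zhou13}, ${\sim_s}$ is the greatest fixpoint of $\Op$; since by hypothesis ${\sim_s}\subseteq R$, it is in particular a fixpoint lying below $R$, so ${\sim_s}\subseteq \Op^\infty(R)$ because $\Op^\infty(R)$ is the greatest such fixpoint. Conversely $\Op^\infty(R)$ is itself a fixpoint of $\Op$, hence contained in the greatest one, ${\sim_s}$. Therefore $\Op^\infty(R)={\sim_s}$, and at the stabilization ordinal $\alpha$ we will have $\Op^\alpha(R)={\sim_s}$.

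The main point is then to show that stabilization occurs at an $\alpha$ with $|\alpha|\leq|S|$, which is far sharper than the generic bound $|\alpha|\leq|\Power(S\times S)|=2^{|S\times S|}$ coming from Proposition~\ref{pre-punto fijo}. First, the iterates form a decreasing chain: $\Op^1(R)=\Op(R)\subseteq R$ by hypothesis, and monotonicity of $\Op$ together with the limit (intersection) clause propagate this downward exactly as in Proposition~\ref{prop:R-decreasing}, so $\Op^{\beta}(R)\supseteq \Op^{\beta+1}(R)$ for all $\beta$; moreover each $\Op^\beta(R)$ is an equivalence relation, since $\Op$ always returns one. Let $\alpha$ be the least ordinal with $\Op^\alpha(R)=\Op^{\alpha+1}(R)$; by minimality the chain is strictly decreasing below $\alpha$, i.e.\ $\Op^\beta(R)\supsetneq \Op^{\beta+1}(R)$ for every $\beta<\alpha$. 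For each such $\beta$ choose a pair $(s_\beta,t_\beta)\in \Op^\beta(R)\setminus \Op^{\beta+1}(R)$. If $\beta<\beta'<\alpha$, then $(s_{\beta'},t_{\beta'})\in \Op^{\beta'}(R)\subseteq \Op^{\beta+1}(R)$ while $(s_\beta,t_\beta)\notin \Op^{\beta+1}(R)$, so $\beta\mapsto(s_\beta,t_\beta)$ is an injection of $\alpha$ into $S\times S$. When $S$ is infinite this gives $|\alpha|\leq|S\times S|=|S|$, as required; when $S$ is finite, a direct counting argument suffices, since a strict refinement of a partition of $S$ strictly increases the number of blocks, which is bounded by $|S|$, so the chain has length at most $|S|$.

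The step I expect to be the main obstacle is precisely this last cardinality estimate: the abstract proposition only bounds the stabilization ordinal in terms of the full lattice $\Power(S\times S)$, and obtaining the optimal $|\alpha|\leq|S|$ requires the observation that the iterates remain within the much smaller class of equivalence relations, so that the strictly decreasing successor steps inject into $S\times S$. Everything else is a routine instantiation of the fixpoint machinery of Proposition~\ref{pre-punto fijo} together with Zhou's characterization of ${\sim_s}$ as the greatest fixpoint of $\Op$.
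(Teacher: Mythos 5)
Your proof is correct, and its skeleton coincides with the paper's: the paper's entire proof of this theorem is the single remark that it follows ``by direct application of Proposition~\ref{pre-punto fijo}'', i.e., instantiate the abstract fixpoint result at the pre-fixpoint $R$ and identify $\Op^\infty(R)$ with ${\sim_s}$ via Zhou's theorem together with the hypothesis ${\sim_s}\subseteq R$ --- exactly your first two paragraphs. Where you genuinely go beyond the paper is the cardinality bound, and you are right that this is the delicate point: Proposition~\ref{pre-punto fijo} as stated bounds the closure ordinal by the cardinality of the ambient complete lattice, which for $\Power(S\times S)$ is $2^{|S|}$, not $|S|$, so a literal ``direct application'' does not yield the bound claimed in the theorem. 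Your refinement --- the stages decrease (by the hypothesis $\Op(R)\subseteq R$ plus monotonicity) and strictly so below the stabilization ordinal, whence choosing a discarded pair at each successor step injects $\alpha$ into $S\times S$, giving $|\alpha|\leq|S|$ for infinite $S$, with the partition-block count covering finite $S$ (where $|S\times S|>|S|$) --- is precisely what is needed to justify $|\alpha|\leq|S|$, and the paper leaves it entirely implicit. So your proof can be read as the same argument with a small but real gap filled. One cosmetic simplification: since $R$ is assumed to be an equivalence relation, every stage of the chain is one, so the block-counting argument applies uniformly and the case split on $S$ finite or infinite can be avoided if desired.
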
 
\begin{corollary}[{\cite[Thm.~4.1]{DBLP:journals/entcs/Zhou13}}]
  State bisimilarity $\sim_s$ can be obtained by iterating
  $\mathcal{O}$ from the total relation or from event bisimilarity 
  $\sim_e$.
\end{corollary}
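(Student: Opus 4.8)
The plan is to obtain the corollary as a direct application of Theorem~\ref{th:fixpoint_O}, which already packages the fixpoint-theoretic work of Proposition~\ref{pre-punto fijo}. That theorem says that for \emph{any} equivalence relation $R$ with ${\sim_s}\subseteq R$ and $\Op(R)\subseteq R$, there is an ordinal $\alpha$ (with $|\alpha|\leq|S|$) such that $\Op^\alpha(R)={\sim_s}$. So it suffices to check these three hypotheses at each of the two proposed starting points, the total relation $S\times S$ and event bisimilarity $\sim_e$. I should also note, once, that the transfinite iteration of $\Op$ defined before the statement agrees at limit stages with the lattice-theoretic iterates $F^\lambda=\bigwedge_{\alpha<\lambda}F^\alpha$ of Proposition~\ref{pre-punto fijo}, since on the lattice of relations under inclusion the meet is intersection; hence ``iterating $\Op$'' in the corollary is exactly $\Op^\alpha$.

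For the total relation the verification is immediate: $S\times S$ is an equivalence relation, and since every relation on $S$ is contained in it, both ${\sim_s}\subseteq S\times S$ and $\Op(S\times S)\subseteq S\times S$ hold trivially. Thus Theorem~\ref{th:fixpoint_O} applies with $R=S\times S$.

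For event bisimilarity, recall ${\sim_e}=\Rel(\sigma(\sem{\Logic}))$, which is an equivalence relation, and that $\Op({\sim_e})\subseteq{\sim_e}$ is precisely Corollary~\ref{cor:event-bisim-prefixpoint}. The only substantive point is the inclusion ${\sim_s}\subseteq{\sim_e}$. To establish it, I would argue that, being the greatest fixpoint of $\Op$, the relation $\sim_s$ satisfies ${\sim_s}\subseteq\Op({\sim_s})$ and is therefore a state bisimulation; by Proposition~\ref{prop:basics-operator}(\ref{item:7}) the $\sigma$-algebra $\Sigma({\sim_s})$ is then stable. Since $\sigma(\sem{\Logic})$ is the smallest stable $\sigma$-algebra (Theorem~\ref{thm:sigma-logic-smallest-stable}), we get $\sigma(\sem{\Logic})\subseteq\Sigma({\sim_s})$, and antimonotonicity of $\Rel$ (Proposition~\ref{prop:basics-operator}(\ref{A3})) gives $\Rel(\Sigma({\sim_s}))\subseteq\Rel(\sigma(\sem{\Logic}))={\sim_e}$. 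Combining this with ${\sim_s}\subseteq\Rel(\Sigma({\sim_s}))$ (Proposition~\ref{prop:basics-operator}(\ref{item:2})) yields ${\sim_s}\subseteq{\sim_e}$. Theorem~\ref{th:fixpoint_O} then applies with $R={\sim_e}$, completing both cases.

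There is no genuine difficulty here; the corollary is essentially bookkeeping on top of Theorem~\ref{th:fixpoint_O}. The single step that is not a formal triviality is the inclusion ${\sim_s}\subseteq{\sim_e}$ needed for the event-bisimilarity start, which rests on the stability of $\Sigma({\sim_s})$ together with the minimality of $\sigma(\sem{\Logic})$ rather than on a direct set-theoretic comparison.
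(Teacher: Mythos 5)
Your proof is correct and takes the same route as the paper, whose entire proof is to cite Theorem~\ref{th:fixpoint_O} together with Corollary~\ref{cor:event-bisim-prefixpoint}. The additional details you supply---that the limit stages of $\Op^\alpha$ agree with the lattice-theoretic iterates, and especially the derivation of ${\sim_s}\subseteq{\sim_e}$ from the stability of $\Sigma({\sim_s})$ and the minimality of $\sigma(\sem{\Logic})$---are precisely the bookkeeping the paper leaves implicit, and you carry them out correctly.
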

\begin{proof}
  Apply Theorem~\ref{th:fixpoint_O} and 
  Corollary~\ref{cor:event-bisim-prefixpoint}
\end{proof}

Thanks to this result we may define the following concept.
\begin{definition}
  The \emph{Zhou ordinal} of an LMP $\lmp{S}$, denoted
  $\Zh(\lmp{S})$, is the minimum $\alpha$ such that
  $\Op^\alpha(\sim_e)={\sim_s}$.  The Zhou ordinal of a class
  $\mathcal{A}$ of processes, denoted $\Zh(\mathcal{A})$, is 
  the supremum of the class
  $\{\Zh(\mathbb{S})\mid \mathbb{S}\in \mathcal{A}\}$ if it is bounded
  or $\infty$ otherwise.
\end{definition}

We will focus on the study of the Zhou ordinal of the class 
$\seplmp$ of separable
LMPs. It is immediate that it is bounded by the cardinal successor of 
$\card{\R}$.
\begin{lemma}
  $\Zh(\seplmp)\leq (2^{\aleph_0})^+$
\end{lemma}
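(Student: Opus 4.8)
The goal is to bound the Zhou ordinal of the class $\seplmp$ of separable LMPs by $(2^{\aleph_0})^+$. The plan is to apply the general fixpoint bound of Proposition~\ref{pre-punto fijo} (equivalently Theorem~\ref{th:fixpoint_O}), which tells us that for an individual LMP $\lmp{S}=(S,\Sigma,\{\tau_a\mid a\in L\})$, iterating $\Op$ from $\sim_e$ reaches $\sim_s$ at some ordinal $\alpha$ with $|\alpha|\leq|A|$, where $A$ is the complete lattice in which $\Op$ acts. Since $\Op$ operates on the lattice of binary relations on $S$, i.e.\ on subsets of $S\times S$, the relevant cardinality bound for a single process is governed by $|S|$. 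Thus $\Zh(\lmp{S})$ is an ordinal of cardinality at most $|S|$.

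First I would invoke the characterization recalled in the Preliminaries: a separable LMP has a state space that is countably generated and separates points, and by the cited Kechris result (Prop.~12.1) such a space $(S,\Sigma)$ is isomorphic to $(Y,\Borel(Y))$ for some separable metrizable $Y$. The essential consequence is a cardinality bound on the underlying set: a separable metrizable space (equivalently, a space whose Borel structure separates points and is countably generated) embeds into a standard setting where $|S|\leq 2^{\aleph_0}$. This is the key structural input — separability forces $|S|\leq\card{\R}=2^{\aleph_0}$, since a countably generated $\sigma$-algebra that separates points injects $S$ into the space of $\{0,1\}$-sequences indexed by a countable generating family, giving $|S|\leq 2^{\aleph_0}$.

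Combining these two facts, for any $\lmp{S}\in\seplmp$ we have $\Zh(\lmp{S})$ of cardinality at most $|S|\leq 2^{\aleph_0}$, so $\Zh(\lmp{S})<(2^{\aleph_0})^+$ because $(2^{\aleph_0})^+$ is by definition the least ordinal of cardinality strictly greater than $2^{\aleph_0}$, and hence strictly greater than every ordinal of cardinality at most $2^{\aleph_0}$. Taking the supremum over the whole class, $\Zh(\seplmp)=\sup\{\Zh(\lmp{S})\mid\lmp{S}\in\seplmp\}\leq(2^{\aleph_0})^+$, which is exactly the claimed bound. (The supremum is of ordinals each below the cardinal $(2^{\aleph_0})^+$; since a cardinal is a limit ordinal closed under suprema of bounded families of smaller ordinals, the supremum is at most $(2^{\aleph_0})^+$.)

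The only genuinely substantive point is the cardinality estimate $|S|\leq 2^{\aleph_0}$ for separable state spaces; everything else is a direct packaging of Proposition~\ref{pre-punto fijo} with the definition of the cardinal successor. I expect the main obstacle, if any, to be making the embedding argument precise — one must check that the map $s\mapsto(\mathbf{1}_{U}(s))_{U\in\mathcal{U}}$ into $2^{\mathcal{U}}$, for $\mathcal{U}$ a countable generating family, is injective precisely because the family separates points, yielding $|S|\leq|2^{\aleph_0}|=2^{\aleph_0}$. This is routine given the cited isomorphism theorem, so the lemma follows with little more than the observations above.
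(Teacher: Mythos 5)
Your proof is correct and follows essentially the same route as the paper: the paper's proof is precisely the observation that a separable state space has cardinality at most $2^{\aleph_0}$ (via the countably generated, point-separating structure) combined with the fixpoint bound of Theorem~\ref{th:fixpoint_O}, which bounds each $\Zh(\lmp{S})$ by an ordinal of cardinality at most $\card{S}$. Your additional details (the injection $s\mapsto(\mathbf{1}_U(s))_{U\in\mathcal{U}}$ into $2^{\mathcal{U}}$ and the supremum step) are just an expansion of what the paper leaves implicit.
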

\begin{proof}
  Every separable metrizable space $S$ satisfies $\card{S}\leq
  2^{\aleph_0}$, and hence the bound follows from
  Theorem~\ref{th:fixpoint_O}.
\end{proof}

Next we provide the last technical ingredient for the constructions to
be performed for our main Theorems~\ref{th:zhou-is-limit} and
\ref{th:cf-Zhou-gt-omega}. It a simple though essential
device to enlarge a given LMP in
such a way that the original structure is “isolated” and it does not
produce any side effect on the enlargement. 

Suppose that  $\lmp{T}=(T,\Sigma,\{\tau_a\mid a\in L\})$ is 
an LMP with label set $L$. 
Let $e\notin T$ be a new state and $L'=L\cup \{\oet, 
\lpet\}$ be an expansion of the label set by two new 
actions.
Over the measurable space  $(T^\ast,\Sigma^\ast)\defi 
(T\oplus\{e\},\Sigma\oplus\{\{e\},\emptyset\})$, we define 
a new LMP $\lmp{T}^\ast$ with kernels $\{\tau^\ast_a\mid 
a\in L\} \cup \{\tau_\oet,\tau_\lpet\}$ given by 
\begin{equation}\label{eq:T-ast}
  \begin{split}
    \tau^\ast_a(r,Q)&\defi \chi_T(r)\cdot \tau_a(r,Q\cap T) \\
    \tau_\oet(r,Q)&\defi \chi_T(r)\cdot \delta_e(Q)\\
    \tau_\lpet(r,Q)&\defi \chi_{\{e\}}(r)\cdot \delta_e(Q).
  \end{split}
\end{equation}
It is clear that  $\lmp{T}^\ast$ is an LMP. The kernel $\tau_\oet$ 
allows, with probability $1$, a transition to  $e$ from each state
$t\in T$, and $\tau_\lpet$ only loops around $e$.

The use of a new state and two extra kernels (instead of just a single
new kernel) stems from the fact that in this way it is immediate that 
$\Rel^T$ (as a set operator) is the same, modulo $e$, for $\lmp{T}$ and
$\lmp{T}^\ast$. This has the following consequence, which will be used
in the sequel.
\begin{lemma}\label{lem:T-ast-chain}
  The Zhou ordinal is invariant under the map
  $\lmp{T}\mapsto\lmp{T}^\ast$, namely: $\Zh(\lmp{T}^\ast) =\Zh(\lmp{T})$.
\end{lemma}

The $\lmp{T}\mapsto\lmp{T}^\ast$ construction will be used in
conjunction with the next lemma, where $\bar{S}$ is the intended
enlargement that we referred to above.
\begin{lemma}\label{lem:zigzag-zhou-chain}
  Let $\lmp{S}=(S,\Sigma,\{\tau_a\mid a\in L\})$ be an LMP and 
  $\lmp{S'}=(S',\Sigma',\{\tau'_a\mid a\in L\})$ be an LMP 
  over a direct sum $(S',\Sigma')=(S\oplus 
  \bar{S},\Sigma\oplus \bar{\Sigma})$
  such that:
  \begin{itemize}
  \item
    for all $r\in S$ and 
  $a\in L$, $\tau'_a(r,Q)=\tau_a(r,Q\cap   S)$;
  \item
    $\Sigma'_0  =\sigma(\sem{\Logic}_{\lmp{S}'})$ and
    $\Sigma_0   =\sigma(\sem{\Logic}_\lmp{S})$; and 
  \item
    $S\in \Sigma'_0$.
  \end{itemize}
  Then $\Sigma'_\alpha\rest S 
  =\Sigma_\alpha \sbq 
  \Sigma'_\alpha$
  (equivalently,
  $\Sigma'_\alpha = \Sigma_\alpha\oplus 
  \Sigma'_\alpha\rest 
  \bar{S}$) 
  and  $R'_\alpha = R_\alpha \cup 
  R'_\alpha\rest\bar{S}$ hold for every
  $\alpha\geq 0$.\qed
\end{lemma}
Regarding the equation that involves $R'_\alpha$, it says that
to know such relation it is enough to determine it in each direct
summand separately.
One interpretation of this is that whenever $S\in \Sigma'_0$, no
relevant information about $\lmp{S}$ is lost in the direct sum.

The proofs of the main results of this section are based on further analysis
of the LMP $\lmp{U}$ from Example~\ref{exm:lmp-U}, which  
was the first example of a process with positive Zhou ordinal. 
Actually, $\Zh(\lmp{U})=1$ 
as highlighted in Example~\ref{exm:calculations-zhou-U}.

A key idea behind the definition of $\lmp{U}$ is that the
non-measurable set $V$ is  essentially the only set that distinguishes
$\leb_0$ from $\leb_1$ and hence $s$ from $t$. This $V$ can become 
``available'' when all the rational intervals can be used to
separate points in $\Inter=(0,1)$. From this approach one can control 
the 
unveiling of $V$ using $B_n=(0,q_n)$ to become ``available 
as tests'' 
simultaneously or \textit{in parallel}, and this is the reason
why state bisimilarity is reached in one step in $\lmp{U}$. The same 
pattern will be used in Theorem~\ref{th:zhou-is-limit}. On the 
other hand a \textit{serial} approach to the uncovering of the family 
$\{B_n\}$ will be followed in the proof of 
Theorem~\ref{th:cf-Zhou-gt-omega}.

We will prove our first important result about $\Zh(\seplmp)$,
namely, that it is a limit ordinal. In order to do this we first give 
the construction of an LMP that will play an essential role in the 
aforementioned result. Since we will be exclusively concerned with the
Zhou ordinal from now on, $\Sigma_0$ will always be the least stable
$\sigma$-algebra $\sigma(\sem{\Logic})$ of the LMP in consideration
and $R_0 \defi \Rel(\Sigma_0)$.
Start with any $\lmp{T}$ such that 
$\Zh(\lmp{T})=\alpha+1$. 
Consider the  LMP $\lmp{T}^\ast$ constructed in 
(\ref{eq:T-ast}), that for simplicity we will denote by 
$\lmp{S}=(S,\Sigma,\{\tau_m\mid m\in \omega\}\cup 
  \{\tau_\oet,\tau_\lpet\})$.    
By Lemma~\ref{lem:T-ast-chain}, 
$\Zh(\lmp{S})=\alpha+1$. Let $z,y \in S\sm\{e\}$ be such 
that $z \mathrel{R_\alpha} y$ but $z 
\mathrel{{\cancel{R_{\alpha+1}}}} y$.
Then there exist $A_0\in  \Sigma_{\alpha+1}\setminus \Sigma_\alpha$ 
and $n\in \omega$ such that
$\tau_n(z,A_0)\neq \tau_n(y,A_0)$. We now define a new  process: Let 
\[
\lmp{S}'=\bigl(S \oplus \Inter \oplus\{s,t\},\ 
\Sigma \oplus\Borel_V\oplus\Power(\{s,t\}),\ 
\{\tau'_m\}_{m\in 
  \omega}\cup\{\tau'_\oet,\tau'_\lpet,\tau'_\infty\} \bigr)
\]
where
\begin{align*}
  \tau'_m(r,Q) &\defi
  \begin{cases*}
    \tau_m(r,Q\cap S) & if $r\in S$ \\ 
    \tau_n(z,Q\cap S) & if $r\in \Inter , r<q_m$ \\ 
    \tau_n(y,Q\cap S) & if $r\in \Inter , r\geq q_m$ \\ 
    0 & if $r\in \{s,t\}$
  \end{cases*} \\
  \tau'_\Box(r,Q) &\defi \chi_S(r)\cdot \tau_\Box(r,Q\cap 
  S) \quad \text{ for } \Box \in \{\oet,\lpet\}\\
  \tau'_{\infty}(r,Q) &\defi \chi_{\{s\}}(r)\cdot 
  \leb_0(Q\cap \Inter) +\chi_{\{t\}}(r)\cdot \leb_1(Q\cap \Inter ).
\end{align*}

We will call $\Sigma'$ the $\sigma$-algebra of $\lmp{S}'$ and 
anything referred to this LMP will be primed:
$\Sigma'_\alpha, 
R'_\alpha, \sim'_s$.

This new process will act as an amalgam of $\lmp{S}$ and 
$\lmp{U}$ with $x$ replaced by $S$:
Each state in $I$ behaves either
as $z$ or $y$
according to the label $m\in\omega$, 
and the process continues inside   $\lmp{S}$
afterwards.
Labels $\oet$ and $\lpet$ allow to separate the LMP 
$\lmp{S}$  from the  
rest in such a way that its behavior is independent of the enlargment.
If that were not the case, event bisimilarity could identify states of
$S$ and $I\cup \{s,t\}$, and therefore restrict the sets 
that appear in $\Sigma'_\alpha\rest S$.
Observe that $\lmp{S}'$ will end up with infinitely many different
kernels, even though 
$\lmp{S}$ had only finitely many.
Also note that for  $r\in I$, there are only three possible values of  $\tau'(r,Q)$: $\tau_n(z,Q\cap S)$, 
$\tau_n(y,Q\cap S)$ or $0$; this is very similar to 
$\lmp{U}$, where there were only two possible values of 
$\tau_n(r,Q)$.  

\begin{lemma}\label{lem:S'-LMP-and-restriction-sigma-prime}
  $\lmp{S}'$ is an LMP. Moreover, $\forall \beta \; 
  \Sigma'_\beta\rest 
  S = \Sigma_\beta \sbq \Sigma'_\beta$ and $R'_\beta = 
  R_\beta \cup R'_\beta\rest I\cup \{s,t\}$.
\end{lemma}
\begin{proof}
  To show that  $\lmp{S}'$ is an LMP, we only need to check that 
  $\tau'_a$ is a  Markov kernel for every
  $a\in \omega\cup\{\infty,\oet,\lpet\}$. 

  If $Q\in \Sigma'$, measurability of $\tau'_m(\cdot,Q)$ 
  follows from the fact that $\tau_m(\cdot,Q\cap S)$ is measurable 
  for all $m\in \omega$ and from the measurability of the sets  $(0,q_m)$ and $\{s,t\}$.
  Measurability of $\tau'_\Box(\cdot,Q)$ for $\Box\in 
  \{\oet,\lpet,\infty\}$ only depends on the measurability of
  the characteristic functions involved.  
  Finally, for fixed $r\in S'$, all maps 
  $\tau'_a(r,\cdot)$ are clearly
  subprobability measures.

  For the second statement, consider the LMP obtained by adding the   zero kernel with $\infty$ label to 
  $\lmp{S}$.
  This operation does not modify     
   $R_\alpha$ nor $\Sigma_\alpha$. 
  Moreover, it is immediate that for all $r\in S$ and 
  labels $a$, $\tau'_a(r,Q)=\tau_a(r,Q\cap 
  S)$ holds. %
  Note that also $S\in \Sigma_0'$, since 
  $\sem{\pos{\lpet}_{>0}\top}_{\lmp{S}'} \cup 
  \sem{\pos{\oet}_{>0}\top}_{\lmp{S}'} =S$. In this way, we may apply
  Lemma~\ref{lem:zigzag-zhou-chain} to 
  $\lmp{S}$ and the measurable space $(\Inter 
  \oplus\{s,t\},\Borel_V\oplus \Power(\{s,t\}))$ to obtain 
  the result.
\end{proof}

We are now ready to prove the previously announced result. 
\begin{theorem}\label{th:zhou-is-limit}
  $\Zh(\seplmp)$ is a limit  ordinal.
\end{theorem}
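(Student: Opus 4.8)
The plan is to show that $\Zh(\seplmp)$ cannot be a successor ordinal, since it is bounded (by the preceding lemma) and nonzero (because $\lmp{U}$ witnesses $\Zh(\seplmp)\geq 1$); a bounded nonzero ordinal that is not a successor must be a limit. So suppose for contradiction that $\Zh(\seplmp)=\alpha+1$ for some ordinal $\alpha$. By definition of the supremum, there is a separable LMP $\lmp{S}=(S,\Sigma,\{\tau_n\}_{n\in\omega})$ with $\Zh(\lmp{S})=\alpha+1$; that is, $R_{\alpha+1}={\sim_s}$ but $R_\alpha\neq R_{\alpha+1}$. The strategy is then to build, from $\lmp{S}$, a single separable LMP whose Zhou ordinal is \emph{strictly larger} than $\alpha+1$, contradicting that $\alpha+1$ is the supremum. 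The process $\lmp{S}'$ constructed just before the statement is exactly this witness.

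The key steps are as follows. First I would fix $z,y\in S$ and $A_0\in\Sigma_{\alpha+1}\setminus\Sigma_\alpha$ with $z\mathrel{R_\alpha}y$ but $\tau_n(z,A_0)\neq\tau_n(y,A_0)$, as in the setup. Using Lemma~\ref{lem:restriction-sigma-prime} I have $\Sigma'_\alpha\restriction S=\Sigma_\alpha$, and I would extend this via Lemma~\ref{lem:restriction_properties} to obtain $\Sigma'_\beta\restriction S=\Sigma_\beta$ and $R'_\beta\restriction S=R_\beta$ for all $\beta\leq\alpha$. The heart of the argument is to show that the newly added interval copy $\Inter$, together with the states $s,t$, \emph{delays} the stabilization: the nonmeasurable set $V$ that distinguishes $\leb_0$ from $\leb_1$ (hence $s$ from $t$) should only become available as a test at stage $\alpha+1$ rather than earlier, because separating the points of $\Inter$ (which behave as $z$ or $y$ according to the threshold $q_m$) requires first distinguishing $z$ from $y$, and that distinction is only achieved through $A_0\in\Sigma_{\alpha+1}$. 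Concretely, I would argue that $s\mathrel{R'_{\alpha+1}}t$ but $s\mathrel{{\cancel{R'_{\alpha+2}}}}t$, so that $\Zh(\lmp{S}')\geq\alpha+2>\alpha+1$.

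To carry this out I would track how the modalities $\posl{m}{\cdot}$ applied to sets in $\Sigma'_\beta$ stratify the interval $\Inter$. The point is that a set of the form $\{r\in\Inter: r<q_m\}$ — which is what separates the ``$z$-behaving'' from the ``$y$-behaving'' points of $\Inter$ — can enter the $\sigma$-algebra hierarchy only once a test living in $\Sigma'$ actually detects the difference between $\tau_n(z,\cdot)$ and $\tau_n(y,\cdot)$ on $S$, and by construction this first happens at level $\alpha+1$ via $A_0$. Once these rational cuts of $\Inter$ are available (at stage $\alpha+1$), the full Borel structure of $\Inter$, and thence $V$ through the measure extension, can be recovered one step later, mirroring exactly the one-step mechanism by which $V$ becomes available in $\lmp{U}$. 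This ``parallel'' uncovering is precisely the pattern the authors flag before the theorem. I expect the main obstacle to be the bookkeeping that shows $V$ (equivalently, the distinction between $s$ and $t$) is \emph{not} prematurely exposed at any stage $\beta\leq\alpha$: one must verify that none of the sets appearing in $\Sigma'_\beta$ for $\beta\leq\alpha$ already separates the $q_m$-cuts of $\Inter$, which reduces — via the restriction lemmas — to the fact that $A_0\notin\Sigma_\alpha$ in the original process. Establishing this lower bound on the stage at which the cuts appear, and no higher, is the delicate quantitative core; the upper bound that everything stabilizes by $\alpha+2$ is comparatively routine given the measure-extension machinery and the monotone-class arguments already developed.
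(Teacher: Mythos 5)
Your proposal follows the paper's proof essentially verbatim: assume $\Zh(\seplmp)=\alpha+1$ is attained by some separable $\lmp{S}$, form the amalgam $\lmp{S}'$ of $\lmp{S}$ with $\lmp{U}$ constructed before the theorem, and derive the contradiction $\Zh(\lmp{S}')\geq\alpha+2$ by showing $s\mathrel{R'_{\alpha+1}}t$ but $s\mathrel{\cancel{R'_{\alpha+2}}}t$, with the cuts of $\Inter$ unlocked only once $A_0\in\Sigma_{\alpha+1}$ distinguishes $z$ from $y$, and $V$ becoming available one step later. One small precision: ruling out premature exposure at stages $\beta\leq\alpha$ rests on $(z,y)\in R_\alpha=\Rel^T(\Sigma_\alpha)$ (no set of $\Sigma_\alpha$ separates their transition probabilities, via the restriction lemmas), not merely on $A_0\notin\Sigma_\alpha$ — but this is exactly the property you fixed in your setup.
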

\begin{proof}
  First observe that $\Zh(\seplmp)>0$ as shown in 
  \cite{Pedro20111048}. Suppose by way of contradiction that 
  $\Zh(\seplmp)=\alpha+1$ for some $\alpha\geq 0$. Then there must 
  exist a separable LMP $\lmp{T}$ such that 
  $\Zh(\lmp{T})=\alpha+1$. 
  Now 
  consider the LMP $\lmp{S'}$ as in the previous construction. We 
  show that 
  $\Zh(\lmp{S}')\geq \alpha+2$. 
  To see this it is enough to prove that  $s\mathrel{R'_{\alpha+1}}t$ 
  but
  $s\mathrel{\cancel{R'_{\alpha+2}}}t$. For the first condition,
  let us show that $\Sigma'_{\alpha+1}\restriction \Inter =\{\emptyset, 
  \Inter \}$.
  Let  $Q\in  
  \Sigma'_{\alpha+1}=\Sigma'(\Rel^T(\Sigma'_\alpha))$ and 
  assume
  $Q\cap\Inter \neq \emptyset$; we show that $Q\cap \Inter =\Inter $. 
  Let  $r_0\in Q\cap \Inter $ and $r\in \Inter $. Suppose
  that  $r_0 \mathrel{\cancel{\Rel^T(\Sigma'_\alpha)}}r$; then
  there exist $m\in \omega$ and $B\in \Sigma'_\alpha$ such that
  $\tau'_m(r_0,B)\neq \tau'_m(r,B)$, i.e.\ 
  $\tau_m(z,B\cap S)\neq \tau_m(y,B\cap S)$. By
  Lemma~\ref{lem:S'-LMP-and-restriction-sigma-prime}, 
  $B\cap S\in  \Sigma_\alpha$, then 
  $z    \mathrel{\cancel{\Rel^T(\Sigma_\alpha)}} y$; but 
  this is absurd
  since we chose $z,y$ in such a way they are indeed related. It
  follows that 
  $r_0\mathrel{\Rel^T(\Sigma'_\alpha)} r$ and since $Q$ is
  $\Rel^T(\Sigma'_\alpha)$-closed, $r\in Q\cap \Inter $. Note that 
  this yields $R'_\alpha\restriction \Inter =\Inter \times \Inter $.
  To show  $(s,t) \in    R'_{\alpha+1}=\Rel^T(\Sigma'_{\alpha+1})$, 
  consider
  $\emptyset\neq Q\in \Sigma'_{\alpha+1}$. By the previous
  calculation, 
  $Q\cap  \Inter =\Inter $, therefore
  $\tau'_\infty(s,Q)=1=\tau'_\infty(t,Q)$. 
  For the remaining labels $a\in \omega\cup\{\oet,\lpet\}$ 
  we have $\tau'_a(s,Q)=0=\tau'_a(t,Q)$.

  We now show that 
  $s\mathrel{\cancel{R'_{\alpha+2}}}t$. Recall that we had chosen  
  $z,y$ and $A_0\in \Sigma_{\alpha+1}\setminus \Sigma_\alpha$
  such that $\tau_n(z,A_0)\neq \tau_n(y,A_0)$ for some 
  $n\in\omega$.
  By Lemma~\ref{lem:S'-LMP-and-restriction-sigma-prime},
  $A_0\in\Sigma'_{\alpha+1}$ and from this we
  conclude
  $R'_{\alpha+1}\restriction
  \Inter =\Rel^T(\Sigma'_{\alpha+1})\restriction
  \Inter =\mathrm{id}_{\Inter }$. 
  We also observe that $I=S'\sm (S\cup 
  \sem{\pos{\infty}_{>0}\top})\in \Sigma'_0$, therefore
  $\Sigma'_{\alpha+2}\restriction \Inter =\{A\sbq I\mid 
  A\in \Sigma'(R'_{\alpha+1})\} = 
  \Sigma'\restriction\Inter =\Borel_V$. 
  Then we have $V\in \Sigma'_{\alpha+2}$ and
  using that set with the transition labelled by $\infty$ 
  we obtain $s\mathrel{\cancel{R'_{\alpha+2}}}t$.
\end{proof}

It can be deduced from the proof of the previous 
Theorem that from every
separable process with Zhou ordinal $\alpha+1$ another one can be
constructed with ordinal equal to $\alpha+2$. In spite of 
this, 
this construction does not allow to construct a process with
positive Zhou ordinal from one having null Zhou ordinal
(i.e., having the Hennessy-Milner property).

In the next theorem, the \emph{cofinality} $\cf(\lambda)$ of a limit
ordinal $\lambda$ is the least order type (equivalently, the least
cardinal) of an unbounded subset of $\lambda$.
\begin{theorem}\label{th:cf-Zhou-gt-omega}
  $\cf(\Zh(\mathcal{S}))>\omega$.
\end{theorem}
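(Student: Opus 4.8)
The plan is to contradict $\cf(\Zh(\seplmp))=\omega$ by assembling, from a cofinal $\omega$-sequence of witnesses, a single \emph{separable} LMP whose Zhou ordinal reaches $\eta+1$, where $\eta\defi\Zh(\seplmp)$; since this ordinal is a limit by Theorem~\ref{th:zhou-is-limit}, $\eta+1>\eta$ yields the contradiction. So assume $\cf(\eta)=\omega$ and fix a \emph{strictly} increasing sequence $(\gamma_m)_{m\in\omega}$ cofinal in $\eta$. Along the chain of any LMP a pair of states can be split apart only at a successor stage (at a limit the relation is an intersection), so for each $m$ I may pick a separable $\lmp{S}_m$ with $\Zh(\lmp{S}_m)>\gamma_m$ and inside it states $z_m\mathrel{R^{(m)}_{\gamma_m}}y_m$, $z_m\mathrel{\cancel{R^{(m)}_{\gamma_m+1}}}y_m$, a label $k_m$ and a set $A_m\in\Sigma^{(m)}_{\gamma_m+1}$ with $\tau^{m}_{k_m}(z_m,A_m)\neq\tau^{m}_{k_m}(y_m,A_m)$; here $\{\Sigma^{(m)}_\alpha\}$, $\{R^{(m)}_\alpha\}$ denote the chains of $\lmp{S}_m$. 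Choosing $z_m$ to realise the larger of the two values makes $\tau^{m}_{k_m}(z_m,A_m)>0$, and after re-indexing the $\gamma_m$ stay cofinal in $\eta$.

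The construction is a \emph{serial} version of $\lmp{U}$. Pad each $\lmp{S}_m$ with zero kernels on all foreign labels and take $\lmp{S}^\star$ with state space $\bigl(\bigoplus_m S_m\bigr)\oplus\Inter\oplus\{s,t\}$ and $\sigma$-algebra $\bigl(\bigoplus_m\Sigma_m\bigr)\oplus\Borel_V\oplus\Power(\{s,t\})$, where $V\sbq\Inter$ is Lebesgue nonmeasurable and $\leb_0\neq\leb_1$ on $V$. The components keep their dynamics; a point $r\in\Inter$ uses the probe label $m$ to imitate $z_m$ when $r<q_m$ and $y_m$ when $r\geq q_m$, that is $\tau^\star_m(r,Q)\defi\tau^{m}_{k_m}(z_m,Q\cap S_m)$ or $\tau^{m}_{k_m}(y_m,Q\cap S_m)$; finally $\tau^\star_\infty(s,Q)\defi\leb_0(Q\cap\Inter)$ and $\tau^\star_\infty(t,Q)\defi\leb_1(Q\cap\Inter)$, with $s,t$ inert under every other label. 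As a countable sum of countably generated point-separating spaces together with $\Inter$ and two points, $\lmp{S}^\star$ is separable. The contrast with $\lmp{U}$ is that the cut $B_m=\{r<q_m\}$ can be told from its complement only \emph{after} the pair $z_m,y_m$ has itself been separated inside $S_m$, so the generators of $\Borel(\Inter)$ surface one at a time rather than all at once.

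Writing $\{\Sigma^\star_\alpha\}$, $\{R^\star_\alpha\}$ for the chains of $\lmp{S}^\star$, Lemma~\ref{lem:restriction_properties} applied with $S=S_m$ (the rest of the space being the unconstrained complement) yields $\Sigma^\star_\alpha\restriction S_m=\Sigma^{(m)}_\alpha$ and $R^\star_\alpha\restriction S_m=R^{(m)}_\alpha$. Hence a lift $\widetilde{A_m}\in\Sigma^\star_{\gamma_m+1}$ of $A_m$ separates $z_m$ from $y_m$ from stage $\gamma_m+1$ on, so the probe-$m$ test makes $B_m$ a $\Rel^T(\Sigma^\star_{\gamma_m+1})$-closed set and $B_m\in\Sigma^\star_{\gamma_m+2}$ (and, $q_m$ being distinct from the finitely many earlier cuts, not before). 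Because $(\gamma_m)$ is strictly increasing, at any stage $\beta<\eta$ only finitely many probes are active; thus $\Sigma^\star_\beta\restriction\Inter$ lies in the finite algebra generated by the active cuts, in particular in $\Borel(\Inter)$, so $V\notin\Sigma^\star_\beta$ and, as $\leb_0,\leb_1$ agree on $\Borel(\Inter)$ while $s,t$ are inert off $\infty$, $s\mathrel{R^\star_\beta}t$. Intersecting at the limit gives $s\mathrel{R^\star_\eta}t$. At stage $\eta$, however, every cut is active, the active $B_m$ are dense and $\Rel^T(\Sigma^\star_\eta)\restriction\Inter=\mathrm{id}_\Inter$; moreover each interval point carries positive probe-mass (via some $z_{m'}$ with $q_{m'}>r$) and is thereby $\Rel^T(\Sigma^\star_\eta)$-separated from all states off $\Inter$, so $V$ is $\Rel^T(\Sigma^\star_\eta)$-closed and hence $V\in\Sigma^\star_{\eta+1}$, which by Corollary~\ref{cor:Sigma-R-alpha} consists exactly of the $R^\star_\eta$-closed sets. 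Since $\tau^\star_\infty(s,V)=\leb_0(V)\neq\leb_1(V)=\tau^\star_\infty(t,V)$, we get $s\mathrel{\cancel{R^\star_{\eta+1}}}t$. Thus $s\mathrel{R^\star_\eta}t$ while $s\not\sim_s t$ (because ${\sim_s}\sbq R^\star_{\eta+1}$), so $R^\star_\eta\neq{\sim_s}$ and $\Zh(\lmp{S}^\star)\geq\eta+1>\eta$, contradicting $\lmp{S}^\star\in\seplmp$.

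The step I expect to demand the most care is the control of $\Sigma^\star_\beta\restriction\Inter$ for $\beta<\eta$: a transfinite induction must show that while some cut $B_m$ remains inactive, every $\Rel^T(\Sigma^\star_\beta)$-closed subset of $\Inter$ lies in the finite interval-algebra of the active cuts, so that the nonmeasurable $V$ cannot be manufactured before the limit. This is precisely the mechanism behind $\Zh(\lmp{U})=1$, now unfolded serially; the decisive simplification is that strict monotonicity of $(\gamma_m)$ keeps the active family \emph{finite} at every stage below $\eta$, and the Monotone Class Theorem together with Corollary~\ref{cor:Sigma-R-alpha} propagates the bound through successor and limit steps. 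The direct application of Lemma~\ref{lem:restriction_properties} (with $S=S_m$) and the verification that interval points are $\Rel^T$-inequivalent to the component states and to $s,t$ are routine once the witnesses are chosen with positive mass as above.
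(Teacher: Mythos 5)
Your proposal is correct and follows essentially the same route as the paper's own proof: the same serial amalgam $\bigl(\bigoplus_m S_m\bigr)\oplus I\oplus\{s,t\}$ whose probe kernels imitate a split pair from each component, the same appeal to Lemma~\ref{lem:restriction_properties} to control the component restrictions, and the same endgame in which the cuts $(0,q_m)$ surface cofinally so that $V$ becomes available exactly at stage $\Zh(\seplmp)+1$ and then separates $s$ from $t$. The transfinite induction you defer (bounding $\Sigma^\star_\beta\restriction I$ by the finite algebra of active cuts for $\beta<\Zh(\seplmp)$) is precisely the induction $\Sigma^\oplus_\eta\restriction I=\Theta_\eta$ that the paper carries out, so no essential idea is missing.
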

\begin{proof}
  Towards a contradiction, suppose that for every $m\in\omega$ we have 
  a separable $\lmp{S}_m=(S^m,\Sigma^m,\{\tau_n^m \}_{n\in \omega})$
  with label set $\{(m,n)\mid n\in \omega\}$ 
  such that $\zeta_m\defi\Zh(\lmp{S}_m)$  satisfy
  $\sup_{m\in\omega}\zeta_m=\Zh(\seplmp)$.  
  We will assume that these LMP have gone through the 
  construction given in \eqref{eq:T-ast}; this way each 
  process now has two distinguished labels, which for ease of
  reference we call $(m,\oet)$ and $(m,\lpet)$, that allow them to be 
  differentiated from each other with formulas.
  
  We can assume $\zeta_0>0$ and also that 
  $\{\zeta_m\}_{m\in \omega}$ 
  is a strictly increasing sequence; for convenience, we 
  set $\zeta_{-1}\defi 0$. 
  In this way  $\zeta_{m-1}<\zeta_m$  
  for all $m\geq0$, hence we can choose 
  $x_m, y_m \in S^m$ such that $x_m
  \mathrel{R^m_{\zeta_{m-1}}} y_m$ but $x_m
  \mathrel{\cancel{R^m_{\zeta_m}}} y_m$. Then there is a set $A_m \in 
  \Sigma_{\zeta_m}^m\minus \Sigma_{\zeta_{m-1}}^m$ such that for some $i\in \omega$ we have 
  $\tau_i^m(x_m,A_m)\neq\tau_i^m(y_m,A_m)$.
  By reordering the labels of the Markov kernels, we can assume that 
  $i\in \omega$ is exactly $m$.
  
  Let us define a new LMP with label set $L\defi\{(m,n)\mid
  m,n\in\omega\}\cup\{\infty\}$:
  \[\textstyle
  \mathbb{S}\defi \bigl( \bigl(\bigoplus_{m \in \omega}S^m\bigr)
  \oplus \Inter \oplus\{s,t\},\ 
  \bigl(\bigoplus_{m\in\omega}\Sigma^m\bigr)\oplus\Borel_V\oplus\Power
  (\{s,t\}),\ 
  \{\tilde{\tau}_n^m\}_{m,n\in
    \omega}\cup\{\tilde{\tau}_\infty\} \bigr).
  \]
  where the kernels are given by
  \begin{equation*}
    \tilde{\tau}^m_n(r,Q) =
    \begin{cases*}
      \tau^m_n(r,Q\cap S^m) & if $r\in S^m$ \\
      \tau^m_m(x_m,Q\cap S^m) & 
      if $r\in (0,q_m)$ and $m=n$ \\
      \tau^m_m(y_m,Q\cap S^m) & 
      if  $r\in [q_m,1)$ and $m=n$ \\
        0 & otherwise,
    \end{cases*}
  \end{equation*}
  \[
  \tilde{\tau}_{\infty}(r,Q) = \chi_{\{s\}}(r)\cdot 
  \leb_0(Q\cap\Inter )+\chi_{\{t\}}(r)\cdot \leb_1(Q\cap\Inter).
  \]
  
  In this case the LMP $\lmp{S}$ is an amalgam of the sum of all of 
  the $\lmp{S}^m$ and 
  $\lmp{U}$. The sets $(0,q_n) = B_n$ will become available 
  successively, using a serial approach to uncovering the
  non-measurable set $V$. In this way we can surpass the limit of the 
  Zhou ordinals of the $\lmp{S}^m$. 
  
  We will call $(S,\Sigma)$ the measurable space of 
  $\lmp{S}$. 
  It is easy to see that this indeed defines an LMP and the 
  separability of the base space follows from the 
  separability of each of the countably many summands that make 
  it up.
  We will show that $\Zh(\mathbb{S})\geq \Zh(\mathcal{S})+1$, 
  reaching a contradiction. For this, it is enough to verify that  
  $(s,t)\in R_{\Zh(\seplmp)}\sm 
  R_{\Zh(\seplmp)+1}$. This will be implied by the facts 
  $\forall \eta\leq \Zh(\seplmp) \ \Sigma_\eta\rest 
  \Inter\sbq \Borel(\Inter)$ and $V \in 
  \Sigma_{\Zh(\seplmp)+1}\sm \Sigma_{\Zh(\seplmp)}$ which 
  in turn are a consequence of the equality 
  \begin{equation}\label{eq:cf-Sigma_alpha}
  \textstyle\Sigma_\eta\rest\bigl(\bigoplus_{m \in 
    \omega}S^m\bigr)\oplus 
  \Inter =\bigl(\bigoplus_{m \in \omega} 
  \Sigma_\eta^m\bigr)\oplus \Theta_\eta
  \end{equation}
  for all $\eta\leq \Zh(\seplmp)$, 
  where $\Theta_\eta$ is the $\sigma$-algebra on $\Inter $ 
  generated by the intervals $\{(0,q_m)\mid 
  \zeta_m<\eta\}$. 

  Before proving this, we notice that for each $m\in 
  \omega$, we can add to $\lmp{S}^m$ zero kernels 
  $\tau^m_\infty,\tau^j_n$ $(j\neq m)$ and get the property 
  $\forall r\in S \ \forall a\in L \ 
  \tilde{\tau}_a(r,Q)=\tau_a(r,Q\cap S)$, while not changing
  $\Sigma^m_\eta$ nor $R^m_\eta$. Also, thanks 
  to the distinguished labels $(m,\oet)$ and $(m,\lpet)$ (which 
  cannot correspond to the label $(m,m)$ in $\lmp{S}^m$), we have 
  $S^m=\sem{\langle (m,\lpet)\rangle_{>0} \top} \cup 
  \sem{\langle (m,\oet)\rangle_{>0} \top}\in \Sigma_0$. 
  This way, all the hypotheses of 
  Lemma~\ref{lem:zigzag-zhou-chain} are satisfied. 
  Then, for each $m\in \omega$ and $\eta$ we have 
  $\Sigma_\eta=\Sigma^m_\eta\oplus 
  \Sigma_\eta\rest(S\sm S^m)$ and $R_\eta = R^m_\eta 
  \cup R_\eta\rest(S\sm S^m)$. 
  Using the fact that there are countably many summands and also 
  that $\Inter,\{s,t\}\in \Sigma_0\sbq\Sigma_\eta$ 
  (because 
  $\{s,t\}=\sem{\langle\infty \rangle_{>0} \top}$), for all 
  $\eta$ we can conclude
  \begin{equation*}\label{eq:first-cf-Sigma_alpha}
    \textstyle \Sigma_\eta=\bigoplus_{m\in 
      \omega}\Sigma^m_\eta 
    \oplus     
    \Sigma_\eta\rest \Inter \oplus \Sigma_\eta\rest 
    \{s,t\}.
  \end{equation*}
  So all we have to do now is to show by induction on $\eta$ 
  that $\Sigma_\eta\rest\Inter =\Theta_\eta$. If 
  $\eta=0$ then obviously $\Theta_0=\{\emptyset,\Inter \}$ 
  so we have to show that $\Sigma_0\rest\Inter  =
  \sigma(\sem{\Logic})\rest\Inter $ is trivial. In 
  order to do this, it is enough to show that  $\{Q\in 
  \Sigma_0\mid 
  Q\cap\Inter \in\{\emptyset,\Inter \}\}$ is stable. Assume 
  that $Q\in \Sigma_0$ satisfies
  $ Q \cap\Inter \in\{\emptyset,\Inter \}$ and 
  $(\posg{a}Q) \cap\Inter \neq\emptyset$; hence $a=(m,n)$ 
  for 
  some $m,n\in\omega$ (since it is obvious from 
  the definition of $\tilde{\tau}_\infty$ that $a\neq \infty$). 
  Then there is 
  $r\in \Inter $ such that $\tilde{\tau}^m_n(r, Q )>q$. It 
  follows that $m=n$, otherwise the kernel would equal zero.
  
  From 
  $ Q \in  \Sigma_0$ we have $Q \cap S^m\in \Sigma^m_0\sbq
  \Sigma^m_{\zeta_{m-1}}$ and considering  $x_m \mathrel{R}_{\zeta_{m-1}} y_m$,
  we conclude that $\tilde{\tau}^m_m(\cdot, Q)$ is constant on
  $\Inter$. This yields $r'\in \posg{a}Q$ for every $r'\in \Inter$.
  This shows that $\{Q\in \Sigma_0\mid 
  Q\cap\Inter \in\{\emptyset,\Inter \}\}$ is stable. As this class is easily seen 
  to be a $\sigma$-algebra, we have $\sigma(\sem{\Logic}) = 
  \Sigma_0 \sbq\{Q\in \Sigma_0\mid 
  Q\cap\Inter \in\{\emptyset,\Inter \}\}$ and therefore 
  the result holds for $\eta=0$. 

  Assume now that $\Sigma_\eta\rest\Inter 
  =\Theta_\eta$. 
  Notice that the kernels in $\lmp{S}$ only depend on one, 
  and only one, of the 
  restrictions to $S^m$ and $\Inter $, and use this 
  together with the IH to obtain
  \[
  \textstyle \Rel^T(\Sigma_\eta) 
  =\Rel^T(\Sigma_\eta\rest \bigoplus_{m\in
    \omega}S^m \oplus\Inter)
  =\Rel^T(\bigoplus_{m\in\omega}\Sigma^m_\eta 
  \oplus\Theta_\eta)
  =\Rel^T\bigl(\bigoplus_{m\in\omega} 
  \Sigma^m_\eta\bigr)\cap\Rel^T(\Theta_\eta).
  \]
  As $\Rel^T(\Theta_\eta)= (S\sm\{s,t\}\times 
  S\sm\{s,t\})\cup \{(s,t),(t,s)\}$, 
  then $R_\eta$ is the disjoint union 
  $R_\eta=\bigl(\bigcup_{m\in\omega}R^m_\eta\bigr) \cup 
  (\Rel^T\bigl(\bigoplus_{m\in\omega}\Sigma_\eta^m\bigr)) 
  \rest \Inter \cup \{(s,t),(t,s)\}$.
  Therefore, $A\sbq \Inter$ is $R_\eta$-closed iff it is 
  $\Rel^T(\bigoplus_{m\in \omega}\Sigma^m_\eta)$-closed.  
  Also, from the choice of $x_m,y_m$ we deduce that  $\Rel^T(\bigoplus_{m\in\omega} \Sigma^m_\eta)\rest 
  \Inter =\Rel(\{(0,q_n)\mid \zeta_n\leq\eta\})\rest 
  \Inter$.
  From this we have
  \begin{align*}
    \Sigma_{\eta+1}\rest \Inter= \{A\in \Sigma_{\eta+1} 
    \mid A\sbq \Inter\} &= \textstyle 
    \{A\in \Sigma\mid A\sbq \Inter \wedge \text{is $ 
      \textstyle R_\eta$-closed}\} 
    \\ 
    &=\{A\in \Borel_V\mid A \text{ is 
      $\textstyle \Rel^T(\bigoplus_{m\in 
        \omega}\Sigma^m_\eta)$-closed}\} \\
    &= \sigma(\{(0,q_n)\mid \zeta_n\leq\eta\})   
    =\Theta_{\eta+1}.
  \end{align*} 
  
  For the limit case, assume 
  $\Sigma_\eta\rest\Inter =\Theta_\eta$ for all 
  $\eta<\lambda$. Then we have the following calculation 
  \[
  \Sigma_\lambda\rest\Inter =
  \textstyle\sigma\bigl( 
  \bigcup_{\eta<\lambda}\Sigma_\eta\bigr)\rest\Inter 
  =
  \textstyle\sigma\bigl( 
  \bigcup_{\eta<\lambda}(\Sigma_\eta\rest\Inter 
  )\bigr) =
  \textstyle\sigma\bigl( 
  \bigcup_{\eta<\lambda}\Theta_\eta\bigr)=\Theta_\lambda.
  \]
  This concludes  the proof by induction  of 
  Equation~\eqref{eq:cf-Sigma_alpha} and thus ends the proof 
  of the theorem.
\end{proof}
\begin{corollary}\label{cor:Zh-seplmp}
  $\Zh(\seplmp)\geq\omega_1$. \qed
\end{corollary}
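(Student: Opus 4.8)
The plan is to derive the bound directly from Theorem~\ref{th:cf-Zhou-gt-omega} together with elementary facts about cofinalities, since essentially all of the work has already been done by the previous two theorems. First I would note that, by the Lemma bounding $\Zh(\seplmp)$ by $(2^{\aleph_0})^+$, the supremum defining $\Zh(\seplmp)$ is bounded and hence attained by a genuine ordinal rather than $\infty$; and by Theorem~\ref{th:zhou-is-limit} this ordinal is in fact a limit ordinal, so its cofinality is well-defined and the expression $\cf(\Zh(\seplmp))$ makes sense.

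Next I would invoke Theorem~\ref{th:cf-Zhou-gt-omega}, which yields $\cf(\Zh(\seplmp)) > \omega$. The only standard set-theoretic fact needed is that the cofinality of any limit ordinal is a regular (in particular, infinite) cardinal; since $\omega_1$ is the least uncountable cardinal, the strict inequality $\cf(\Zh(\seplmp)) > \omega$ forces $\cf(\Zh(\seplmp)) \geq \omega_1$. Finally, applying the basic inequality $\cf(\lambda) \leq \lambda$, valid for every limit ordinal $\lambda$, I would conclude
\[
\Zh(\seplmp) \geq \cf(\Zh(\seplmp)) \geq \omega_1,
\]
which is exactly the assertion.

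I do not expect any genuine obstacle here: the corollary is a one-line consequence of the material immediately preceding it. All of the mathematical difficulty resides in Theorem~\ref{th:cf-Zhou-gt-omega} itself, i.e.\ in the serial amalgamation of the processes $\lmp{S}_m$ with the seed LMP $\lmp{U}$ and the induction on $\eta$ establishing $\Sigma^\oplus_\eta\restriction\Inter = \Theta_\eta$. That construction is precisely what forbids a cofinal $\omega$-sequence of separable Zhou ordinals, and it is this failure of $\omega$-cofinality that, via the above elementary argument, pushes the supremum up to at least $\omega_1$.
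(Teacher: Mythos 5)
Your proof is correct and follows exactly the paper's intended route: the corollary is stated with no separate argument precisely because it is the immediate consequence of Theorem~\ref{th:zhou-is-limit} (limit ordinal) and Theorem~\ref{th:cf-Zhou-gt-omega} (uncountable cofinality) that you spell out, namely $\Zh(\seplmp)\geq\cf(\Zh(\seplmp))\geq\omega_1$. Your additional remarks that the supremum is a genuine ordinal (via the bound $(2^{\aleph_0})^+$) and that any countable limit ordinal has cofinality $\omega$ are exactly the standard facts the paper leaves implicit.
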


\section{The example}\label{sec:example}

In this section we construct, for each ordinal $\beta \leq \omega_1$,
an LMP $\lmp{S}(\beta)$ such that for $\beta$ limit, 
$\Zh(\lmp{S}(\beta))= \beta$. For this, we take the set $\Inter 
\times \beta$ together with the product
$\sigma$-algebra $\Sigma\defi \Borel_V \otimes \Power(\beta)$ where
$V$ denotes a Lebesgue non-measurable set.

Let $\{C_n\}_{n\geq 2}$ be the family of open rational intervals 
included
in $\Inter $ and set $C_0\defi V$ and  
$C_1\defi V^c$;  we have that
$\{C_n\}_{n\in \omega}$
generates $\Borel_V$. We now define a hierarchy of sets
$\boldsymbol{\Sigma}_\xi^{V}(I)$ totally analogous to the Borel
hierarchy.  $\boldsymbol{\Sigma}_1^{V}(I)$ is the family of (countable)
unions of sets in $\{C_n\}_{n\in
  \omega}$. The members of  $\boldsymbol{\Pi}_\xi^{V}(I)$ are the complements of sets
in $\boldsymbol{\Sigma}_\xi^{V}(I)$ and
$\boldsymbol{\Sigma}_{\xi}^{V}(I)\defi\bigl\{\bigcup_{n\in 
\omega}A_n\mid
A_n\in \boldsymbol{\Pi}_{\xi_n}^{V}(I), \xi_n<\xi\bigr\}$, for
$\xi>1$. Note that $\boldsymbol{\Sigma}_1^{V}(I)$ includes all the
open subsets of $I$ and their unions with $V$.

Given $Q\sbq I\times \beta$, the
\emph{sections} of $Q$ are the sets $Q_\alpha\defi\{r\mid
(r,\alpha)\in Q\}$ for $\alpha<\beta$ (to avoid confusion we will 
only use greek subindices for sections). Sometimes we will call 
$Q_\alpha$ ``$\alpha$-section'' to make the ordinal explicit. The 
sets  
$I\times \{\alpha\}$ will be
called \emph{fibers}. For  $Q$ in $\Borel_V \otimes \Power(\beta)$, each 
section  $Q_\alpha$  lies in 
$\boldsymbol{\Sigma}_\xi^{V}(I)$ for some $\xi$. We say that the
\emph{complexity} of  $Q$ at  $\alpha$ is  
$\comp(Q,\alpha) \defi\min \bigl\{\xi\mid
Q_\alpha\in \boldsymbol{\Sigma}_\xi^{V}(I)\bigr\}$, and the (total) complexity of
$Q$ is $\comp(Q) \defi \sup_{\alpha<\beta}\comp(Q,\alpha)$.
Sets in  $\Borel_V\otimes \Power(\beta)$ can be characterized in terms
of this complexity measure.

\begin{lemma}\label{lem:product_characterization}
  $Q\in \Borel_V\otimes \Power(\beta)$ if and only if $\comp(Q)<\omega_1$
\end{lemma}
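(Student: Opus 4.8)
The plan is to prove both inclusions between $\Borel_V\otimes\Power(\beta)$ and the class $\mathcal{D}\defi\{Q\mid\comp(Q)<\omega_1\}$. The first observation is that $\comp(Q)=\sup_{\alpha<\beta}\comp(Q,\alpha)<\omega_1$ holds exactly when all the section-complexities are bounded by a \emph{single} countable ordinal $\xi_0$, i.e. when $Q_\alpha\in\boldsymbol{\Sigma}_{\xi_0}^{V}(I)$ for every $\alpha$; thus $\mathcal{D}=\bigcup_{\xi<\omega_1}\mathrm{Fib}(\boldsymbol{\Sigma}_\xi^{V}(I))$, where $\mathrm{Fib}(\mathcal F)$ denotes the family of those $Q$ all of whose sections lie in $\mathcal F$. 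Throughout I would use the standard closure properties of the hierarchy, namely monotonicity in $\xi$ and the inclusions $\boldsymbol{\Sigma}_\xi^{V}(I)\cup\boldsymbol{\Pi}_\xi^{V}(I)\subseteq\boldsymbol{\Sigma}_{\xi+1}^{V}(I)\cap\boldsymbol{\Pi}_{\xi+1}^{V}(I)$, proved exactly as for the classical Borel hierarchy, together with the fact that $\bigcup_{\xi<\omega_1}\boldsymbol{\Sigma}_\xi^{V}(I)=\Borel_V$ (the hierarchy generated by the countable family $\{C_n\}$ exhausts $\sigma(\{C_n\})$ at level $\omega_1$).

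For the implication $Q\in\Borel_V\otimes\Power(\beta)\Rightarrow\comp(Q)<\omega_1$ I would show that $\mathcal{D}$ is a $\sigma$-algebra containing every measurable rectangle, and hence contains the generated $\sigma$-algebra $\Borel_V\otimes\Power(\beta)$. Since complementation and countable unions act sectionwise, $(Q^c)_\alpha=(Q_\alpha)^c$ and $(\bigcup_kQ^{(k)})_\alpha=\bigcup_k(Q^{(k)})_\alpha$, a uniform bound on the section-complexities of the ingredients yields a uniform bound for the result; here I use that a countable supremum of countable ordinals is countable (regularity of $\omega_1$). For a rectangle $B\times A$ each section is either $B\in\Borel_V=\bigcup_{\xi<\omega_1}\boldsymbol{\Sigma}_\xi^{V}(I)$ or $\emptyset\in\boldsymbol{\Sigma}_1^{V}(I)$, so $\comp(B\times A)<\omega_1$.

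The converse, $\comp(Q)<\omega_1\Rightarrow Q\in\Borel_V\otimes\Power(\beta)$, is the substantive direction, and I would establish it by transfinite induction on $\xi<\omega_1$, proving simultaneously that $\mathrm{Fib}(\boldsymbol{\Sigma}_\xi^{V}(I))$ and $\mathrm{Fib}(\boldsymbol{\Pi}_\xi^{V}(I))$ are contained in $\Borel_V\otimes\Power(\beta)$; the $\boldsymbol{\Pi}$ case always follows from the $\boldsymbol{\Sigma}$ case of the same level by complementation. For $\xi=1$, a section $Q_\alpha\in\boldsymbol{\Sigma}_1^{V}(I)$ equals $\bigcup\{C_n\mid C_n\subseteq Q_\alpha\}$, so with $A_n\defi\{\alpha\mid C_n\subseteq Q_\alpha\}\subseteq\beta$ one verifies $Q=\bigcup_{n\in\omega}(C_n\times A_n)$, a countable union of rectangles. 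The heart of the matter is the inductive step, and the main obstacle is that the decomposition $Q_\alpha=\bigcup_nA^{(n)}_\alpha$ into lower-level pieces must be chosen independently for each of the (possibly uncountably many) fibers $\alpha$, so the naive ``collect the $n$-th piece of every fiber'' need not produce a set whose sections have uniformly bounded level.

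I would resolve this by exploiting that $\xi$ is a \emph{countable} ordinal. For a successor $\xi=\zeta+1$, every section is a countable union of $\boldsymbol{\Pi}_\zeta^{V}(I)$ sets (lower levels are absorbed into $\zeta$ by monotonicity); fixing such a decomposition per fiber and setting $R^{(n)}\defi\{(r,\alpha)\mid r\in A^{(n)}_\alpha\}$ produces sets all of whose sections lie in $\boldsymbol{\Pi}_\zeta^{V}(I)$, so by the inductive hypothesis each $R^{(n)}\in\Borel_V\otimes\Power(\beta)$ and $Q=\bigcup_nR^{(n)}$. For a limit $\xi$ I would instead reorganize by level: since only countably many $\eta<\xi$ occur, I set $R^{[\eta]}\defi\{(r,\alpha)\mid r\in\bigcup\{A^{(n)}_\alpha\mid\text{the chosen level of }A^{(n)}_\alpha\text{ equals }\eta\}\}$, whose sections lie in $\boldsymbol{\Sigma}_{\eta+1}^{V}(I)$ with $\eta+1<\xi$; thus $R^{[\eta]}\in\Borel_V\otimes\Power(\beta)$ by induction and $Q=\bigcup_{\eta<\xi}R^{[\eta]}$ is a countable union. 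Taking the union over all $\xi<\omega_1$ then gives $\mathcal{D}\subseteq\Borel_V\otimes\Power(\beta)$, which together with the first direction completes the proof.
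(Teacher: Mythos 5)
Your proof is correct and follows essentially the same route as the paper's: the forward direction shows that the sets of countable complexity form a $\sigma$-algebra containing the measurable rectangles (using that a countable supremum of countable ordinals is countable), and the backward direction proceeds by transfinite induction, regrouping the per-fiber decompositions into countably many sets whose sections have uniformly bounded level, so that the inductive hypothesis applies to each piece. The only difference is bookkeeping: the paper handles successor and limit $\xi$ at once by absorbing pieces along a single non-decreasing sequence $\theta_n+1$ with limit $\xi$, whereas you split into the successor case (absorbing into level $\zeta$) and the limit case (grouping by exact level $\eta<\xi$), which is equally valid.
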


\begin{proof}
  ($\Rightarrow$) Let us verify that  $\mathcal{A}=\{A\subseteq I\times \beta \mid \comp(A) < \omega_1\}$ is a 
  $\sigma$-algebra. %
  Assume $A \in \mathcal{A}$. Since
  $\comp(A^c,\alpha)\leq\comp(A,\alpha)+1$ then  
  $A^c \in  \mathcal{A}$.  Now  assume that 
  $\{A_n\}_{n\in \omega}\sbq\mathcal{A}$.
  Then   $\alpha_n \defi \comp(A_n)<\omega_1$ for all $n \in
  \omega$. 
  From this it follows that  $\comp(\bigcup_{n \in \omega}A_n)
  \leq \sup_{n \in \omega} (\alpha_n + 1) < \omega_1$, and therefore
  $\mathcal{A}$ is a $\sigma$-algebra.  Since 
  $\mathcal{A}$ includes all the measurable rectangles, we  have 
  $\Borel_V \otimes
  \Power(\beta)\subseteq \mathcal{A}$.
	
  ($\Leftarrow$) We show by induction on $\xi$ that $\comp(Q)=\xi
  <\omega_1 \implies Q\in \Borel_V\otimes \Power(\beta)$. If
  $\comp(Q)=1$ then for all $\alpha<\beta$, $Q_\alpha=\bigcup\{C_n\mid
  C_n\sbq Q_\alpha\}$ and therefore
  $Q=\bigcup_{n\in\omega}(C_n\times\{\alpha\mid C_n\sbq Q_\alpha\})\in
  \Borel_V\otimes\Power(\beta)$. Assume the result for all 
  $\eta$ with  $\eta<\xi<\omega_1$ and, moreover, assume
  $\comp(Q)=\xi$. Then
  $\forall\alpha<\beta$, $Q_\alpha\in \boldsymbol{\Sigma}_\xi^{V}(I)$.
  Hence
  $Q_\alpha=\bigcup_{n<\omega}A^\alpha_n$ for some 
  $A^\alpha_n \in \boldsymbol{\Pi}_{\xi_n(\alpha)}^V(I)\sbq
  \boldsymbol{\Sigma}_{\xi_n(\alpha)+1}^{V}(I)$ and 
  $\xi_n(\alpha)<\xi$ is non-decreasing.  Let 
  $\{\theta_n\}_{n\in\omega}$ such that $\theta_n+1$ is a 
  non-decreasing 
  sequence with limit $\xi$. If we set 
  $\tilde{A}^\alpha_n=\bigcup_{m\leq    n}\{A^\alpha_m\mid 
  A^\alpha_m\in
  \boldsymbol{\Sigma}^V_{\theta_n}(I)\}$, 
  then we have 
  $\forall  \alpha\; \tilde{A}^\alpha_n \in
  \boldsymbol{\Sigma}^V_{\theta_n}(I)$  
  and 
  $Q_\alpha=\bigcup_{n\in    \omega}\tilde{A}^\alpha_n$.
  By the IH, for every   $n\in \omega$, $\Borel_V \otimes  \Power(\beta)$ 
  contains the sets
  $C_n =  \bigcup_{\alpha<\beta}(\tilde{A}^\alpha_n\times\{\alpha\})$
  whose   $\alpha$-section is
  $\tilde{A}^\alpha_n$, with complexity 
  $\theta_n<\xi$. It follows that 
  $Q=\bigcup_{n\in \omega}C_n  \in \Borel_V \otimes \Power(\beta)$.
\end{proof}

We now define a denumerable family of Markov kernels. As before, fix 
an
enumeration $\{q_n\}_{n\in \omega}$ of the rational numbers in
$\Inter $. Define $\alpha_n(0)\defi 0$ ($n\in \omega$); and for each 
successor ordinal
 $\eta+1$, let $\alpha_n(\eta+1)\defi\eta$ ($n \in
\omega$). For limit $\lambda$ we choose $\{\alpha_n(\lambda)\}_{n\in
  \omega}$ to be a strictly increasing cofinal sequence in $\lambda -
\{0\}$ of order type $\omega$.

Recall that Theorem~\ref{thm:measure-extension} 
gives us two extensions $\leb_0$ and $\leb_1$ of the Lebesgue measure 
$\leb$ such that 
$\leb_0(V)\neq\leb_1(V)$. For each $n\in \omega$ define 
$\tau_n:(I\times \beta)\times
(\Borel_V\otimes \Power(\beta)) \to [0,1]$ as 
follows:
\begin{equation*}
  \tau_n((x,\eta),A) =
  \begin{cases}
    x\cdot \leb_0(A_0) &  \eta=0 \\ 
    \leb_0(A_{\alpha_n(\eta)}) &   \eta > 0, \, x\in (0,q_n) \\ 
    \leb_1(A_{\alpha_n(\eta)}) & \eta > 0, \, x\in [q_n,1)
  \end{cases}
\end{equation*}

Here $A_{\alpha_n(\eta)}$ is the $\alpha_n(\eta)$-section of $A$ 
defined before Lemma~\ref{lem:product_characterization}. As in the 
previous 
results, the definition of the kernels is motivated
by the process $\lmp{U}$. The two ``behaviors'' that we described,
in parallel and serial, are mimicked (by virtue of the definition of
$\alpha_n(\eta)$) at successor and limit ordinals $\eta$, respectively.

\begin{lemma}\label{kernel}
  For each $n \in \omega$, $\tau_n$ is a Markov kernel.
\end{lemma}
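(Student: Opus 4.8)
The plan is to verify the two defining conditions of a Markov kernel separately: that $\tau_n((x,\eta),\cdot)$ is a subprobability measure on $\Sigma=\Borel_V\otimes\Power(\beta)$ for each fixed state $(x,\eta)\in I\times\beta$, and that $\tau_n(\cdot,A)$ is $(\Sigma,\Borel[0,1])$-measurable for each fixed $A\in\Sigma$. The first is essentially automatic; the only genuine content lies in the measurability of $\tau_n(\cdot,A)$ as a function on the \emph{product}, and the key is the power-set $\sigma$-algebra on $\beta$.

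For the first condition, I would use that for each fixed $\gamma<\beta$ the section operator $A\mapsto A_\gamma$ is a Boolean $\sigma$-homomorphism from $\Sigma$ into $\Borel_V$: it fixes $\emptyset$, commutes with complements, and satisfies $\bigl(\bigcup_k A^{(k)}\bigr)_\gamma=\bigcup_k A^{(k)}_\gamma$, carrying disjoint families to disjoint families. (Indeed, the collection $\{A : A_\gamma\in\Borel_V\}$ contains the measurable rectangles and is a $\sigma$-algebra, hence equals $\Sigma$, so every section $A_\gamma$ lies in $\Borel_V$ and $\leb_0(A_\gamma),\leb_1(A_\gamma)$ are defined.) Consequently $A\mapsto\leb_i(A_\gamma)$ is a probability measure on $\Sigma$ for $i\in\{0,1\}$, since $\leb_i$ is a probability measure on $\Borel_V$ with $\leb_i(I)=1$. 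Reading off the three branches of the definition, $\tau_n((x,\eta),\cdot)$ equals $\leb_0(\cdot_{\alpha_n(\eta)})$ or $\leb_1(\cdot_{\alpha_n(\eta)})$ when $\eta>0$, and $x\cdot\leb_0(\cdot_0)$ when $\eta=0$; the latter is a subprobability measure of total mass $x\in(0,1)$. Hence in every case $\tau_n((x,\eta),\cdot)$ is a subprobability measure.

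For the second condition, fix $A\in\Sigma$ and set $g(x,\eta)\defi\tau_n((x,\eta),A)$. I would decompose $g$ along the measurable rectangles that select its three cases, namely $I\times\{0\}$, $(0,q_n)\times(\beta\setminus\{0\})$ and $[q_n,1)\times(\beta\setminus\{0\})$, all in $\Sigma$ because $(0,q_n),[q_n,1)\in\Borel(I)\subseteq\Borel_V$ and $\{0\},\beta\setminus\{0\}\in\Power(\beta)$. Writing $\pi_I,\pi_\beta$ for the coordinate projections and $h_i(\eta)\defi\leb_i(A_{\alpha_n(\eta)})$,
\[
g=\chi_{I\times\{0\}}\cdot\bigl(\pi_I\cdot\leb_0(A_0)\bigr)+\chi_{(0,q_n)\times(\beta\setminus\{0\})}\cdot(h_0\circ\pi_\beta)+\chi_{[q_n,1)\times(\beta\setminus\{0\})}\cdot(h_1\circ\pi_\beta).
\]
Both projections are measurable, since $\pi_I^{-1}(B)=B\times\beta$ and $\pi_\beta^{-1}(T)=I\times T$ are rectangles in $\Sigma$, and together with the constant $\leb_0(A_0)$ this makes the first summand measurable. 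The point to emphasise is that $h_0,h_1$ are \emph{arbitrary} $[0,1]$-valued functions on $\beta$, yet because $\beta$ carries the full power-set $\sigma$-algebra $\Power(\beta)$ every such function is $\Power(\beta)$-measurable; hence each $h_i\circ\pi_\beta$ is $\Sigma$-measurable. As $g$ is then a finite sum of products of measurable functions with indicators of sets in $\Sigma$, it is measurable.

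The only delicate step is this measurability of $g$ on all of $I\times\beta$ rather than fibrewise, since on each fiber $g$ is patently a linear or two-valued step function of $x$. This is precisely where the choice of $\Power(\beta)$ as the second factor does the work: it trivialises the measurability of the $\eta$-indexed assignment $\eta\mapsto\leb_i(A_{\alpha_n(\eta)})$, which would otherwise be the crux. I do not expect any obstacle beyond this bookkeeping once the observation about $\Power(\beta)$ is in place.
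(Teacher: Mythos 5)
Your proof is correct, but it takes a genuinely different route from the paper's. The paper argues at the level of sets: it fixes $A$ and a rational $q$, writes out the sublevel set $\{(x,\alpha)\in I\times\beta \mid \tau_n((x,\alpha),A)<q\}$ explicitly as a union of five pieces (one for the $0$-fiber, two for successor fibers, two for limit fibers), observes that every section of this union is either open or closed in $I$, and then invokes the complexity characterization of the product $\sigma$-algebra (Lemma~\ref{lem:product_characterization}) to conclude that the sublevel set belongs to $\Borel_V\otimes\Power(\beta)$. You instead argue at the level of functions: you decompose $\tau_n(\cdot,A)$ into three summands supported on measurable rectangles and reduce everything to the facts that $\pi_\beta$-preimages are rectangles and that \emph{every} $[0,1]$-valued function on $(\beta,\Power(\beta))$ is measurable. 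Both arguments ultimately rest on the same point---the second factor carries the full power set---but yours is more elementary and self-contained, since it bypasses the complexity machinery entirely and isolates exactly where $\Power(\beta)$ does the work; the paper's version instead reuses Lemma~\ref{lem:product_characterization}, which it has already established and needs elsewhere in the section. You are also more careful than the paper on the subprobability half, which the paper dismisses as clear. One cosmetic slip: your class $\{A \mid A_\gamma\in\Borel_V\}$ \emph{contains} $\Sigma$ rather than equals it (it may be strictly larger), but containment is all your argument uses.
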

\begin{proof}
  It is clear that for  fixed $(x,\eta)$, the map
  $\tau_n((x,\eta),\cdot): \Borel_V \otimes \Power(\beta) \to [0,1]$
  is a subprobability measure. Let $A \in \Borel_V\otimes
  \Power(\beta)$; we want to show that  $\tau_n(\cdot,A)$ is
  measurable. For this, fix $q \in \mathbb{Q}\cap[0,1]$ and consider
  the set  
  $\{(x,\alpha)\in I\times \beta \mid  \tau_n((x,\alpha),A)<q\}$.
  By inspection of the definition of  $\tau_n$ in each fiber, we
  obtain that this set is the union of the following ones:
  \begin{gather*}
    \{(x,0) \mid x \cdot \leb_0(A_0)<q\},\quad  \bigcup_{\eta <
      \beta}\{(0,q_n)\times \{\eta+1\} \mid \leb_0(A_\eta)<q\},
    \\ \bigcup_{\eta < \beta}\{[q_n,1)\times \{\eta+1\} \mid
      \leb_1(A_\eta)<q\},\quad  \bigcup_{\lambda < 
      \beta}\{(0,q_n)\times
      \{\lambda\} \mid \leb_0(A_{\alpha_n(\lambda)})<q\},
      \\ \bigcup_{\lambda < \beta}\{[q_n,1)\times \{\lambda\} \mid
        \leb_1(A_{\alpha_n(\lambda)})<q\}.
  \end{gather*}
  Observe that each section of this union is either open or
  closed. Then it is measurable in the product space by
  Lemma~\ref{lem:product_characterization}.
\end{proof}
Hence we  have an LMP
\[
\lmp{S}(\beta) \defi 
\bigl( \Inter \times \beta,\;
\Borel_V \otimes \Power(\beta),\;
\{\tau_n\}_{n\in\omega} \bigr)
\]
for each $\beta\leq\omega_1$.
\begin{lemma}
  State bisimilarity $\sim_s$ on $\lmp{S}(\beta)$ is the identity.
\end{lemma}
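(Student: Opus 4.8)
The plan is to show that no two distinct states of $\lmp{S}(\beta)$ are state bisimilar, i.e.\ that the only state bisimulation is contained in the identity. Equivalently, I will exhibit, for any pair of distinct points $(x,\eta)$ and $(x',\eta')$ of $\Inter\times\beta$, a set $C$ in the full $\sigma$-algebra $\Sigma=\Borel_V\otimes\Power(\beta)$ that is $R$-closed for every state bisimulation $R$ and that separates the two points (or, more precisely, on which the kernels disagree in a way that forces the pair out of any bisimulation). The crucial structural fact to exploit is that the fibers $\Inter\times\{\alpha\}$ are themselves measurable (each equals $I\times\{\alpha\}\in\Power(\beta)$-rectangle), and hence for \emph{any} relation $R$ each fiber is a union of $R$-closed and non-$R$-closed pieces that $\Sigma(R)$ can detect; this lets me reason fiber by fiber.

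First I would separate points lying in \emph{different} fibers $\eta\neq\eta'$. Since every fiber $\Inter\times\{\alpha\}$ belongs to $\Sigma$ and is trivially $R$-closed for any equivalence relation $R$ that does not relate points across fibers, it suffices to observe that no state bisimulation can relate $(x,\eta)$ to $(x',\eta')$ with $\eta\neq\eta'$: the set $\Inter\times\{\eta\}$ is in $\Sigma(R)$ (being a fiber, it is automatically $R$-closed once we know $R$ respects fibers), and $(x,\eta)\in\Inter\times\{\eta\}$ while $(x',\eta')\notin\Inter\times\{\eta\}$. To make this rigorous without circularity I would first argue that any state bisimulation $R$ \emph{must} respect fibers, because the kernel behavior at $\eta=0$ versus $\eta>0$, and at successor versus limit $\eta$, is qualitatively different (e.g.\ only at $\eta=0$ does $\tau_n$ scale by the factor $x$), so the measurable sets used in the definition of event bisimilarity already distinguish fibers; this is exactly the kind of argument the earlier material on $\sigma(\sem{\Logic})$ supports.

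Second, and this is where the heart of the argument lies, I would separate two distinct points $(x,\eta)$ and $(x',\eta)$ \emph{within the same fiber}. Here the two cases split on whether $\eta=0$. If $\eta=0$, then for any label $n$ we have $\tau_n((x,0),A)=x\cdot\leb_0(A_0)$, so choosing any $A$ with $\leb_0(A_0)\neq 0$ (for instance $A=\Inter\times\{0\}$, giving $\leb_0(\Inter)$) yields $\tau_n((x,0),A)=x\cdot\leb_0(\Inter)\neq x'\cdot\leb_0(\Inter)$ whenever $x\neq x'$; the distinguishing set $A$ is a fiber, hence $R$-closed, so the two points cannot be bisimilar. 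If $\eta>0$, the kernel value $\tau_n((x,\eta),A)$ depends on $x$ only through whether $x\in(0,q_n)$ or $x\in[q_n,1)$; so for $x\neq x'$ I pick a rational $q_n$ strictly between them and a set $A$ whose relevant section $A_{\alpha_n(\eta)}$ has $\leb_0$ and $\leb_1$ measures that differ—this is exactly where the nonmeasurable $V$ and the two extensions $\leb_0,\leb_1$ from Theorem~\ref{thm:measure-extension} do their work, since $\leb_0(V)\neq\leb_1(V)$. The set $A=V\times\{\alpha_n(\eta)\}$ (or its preimage fiber) then gives $\tau_n((x,\eta),A)=\leb_0(V)\neq\leb_1(V)=\tau_n((x',\eta),A)$.

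The main obstacle—and the step I would scrutinize most carefully—is verifying that the distinguishing set is genuinely $R$-closed, i.e.\ lies in $\Sigma(R)$, for \emph{every} candidate state bisimulation $R$, not merely that it lies in $\Sigma$. For the within-fiber separation at $\eta>0$ this requires knowing that $V\times\{\alpha_n(\eta)\}$ (or whatever witness I use) is closed under $R$; the cleanest route is to show that a single fiber $\Inter\times\{\alpha\}$ together with the two ``halves'' $((0,q_n)\times\{\alpha\})$ and $([q_n,1)\times\{\alpha\})$ are all $R$-closed, which forces $R$ restricted to that fiber to refine the partition induced by the rational thresholds $q_n$, and then the argument with $\leb_0,\leb_1$ on $V$ separates the two $x$-halves. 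I expect the bookkeeping to mirror the argument already given for $\lmp{U}$, where $V$ is ``essentially the only set distinguishing $\leb_0$ from $\leb_1$''; the present lemma is the assertion that in $\lmp{S}(\beta)$ this distinguishing power, propagated through all fibers and all rational thresholds, collapses state bisimilarity all the way down to the identity.
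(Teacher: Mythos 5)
Your distinguishing sets and kernel computations are the right ones (the whole-space mass argument on fiber $0$; a rational $q_n$ strictly between $x$ and $x'$; the set $V\times\{\alpha_n(\eta)\}$ with $\leb_0(V)\neq\leb_1(V)$), but the entire content of the lemma lies in the step you yourself flag as ``the main obstacle'', and neither of your proposed routes closes it --- both are circular. For the cross-fiber case, the claim that kernel behaviour is ``qualitatively different'' across fibers only separates fiber $0$ from the rest (total mass $x<1$ versus $1$). For two distinct fibers $\eta,\eta'>0$ the kernels both have total mass $1$ and differ only in which lower section they read: $\tau_n((x,\eta),A)=\leb_i(A_{\alpha_n(\eta)})$ versus $\leb_j(A_{\alpha_n(\eta')})$. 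To see any difference you need a test set $A$, \emph{already known} to be $\sim_s$-closed and measurable, whose $\alpha_n(\eta)$- and $\alpha_n(\eta')$-sections differ; producing such a closed set is exactly the same problem one fiber lower. Likewise, in the within-fiber case the closedness of the halves $(0,q_n)\times\{\eta\}$ and $[q_n,1)\times\{\eta\}$ is not something you can establish ``first'' and then feed into the $V$-argument: two points on opposite sides of $q_n$ in fiber $\eta$ can only be told apart by a closed $C$ with $\leb_0(C_{\alpha_n(\eta)})\neq\leb_1(C_{\alpha_n(\eta)})$, i.e.\ by a non-Borel closed set sitting in the lower fiber $\alpha_n(\eta)$, and the closedness of \emph{that} set (say $V\times\{\alpha_n(\eta)\}$) requires knowing that points of fiber $\alpha_n(\eta)$ are bisimilar only to themselves --- the statement being proved, one level down. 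You have the implication backwards: the $\leb_0/\leb_1$ computation is what proves the halves closed, not the other way around.

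The missing idea is transfinite induction on the fiber index, which is how the paper argues: one proves by induction on $\eta\leq\beta$ that $\sim_s$ restricted to $I\times\eta$ is the identity, so that at stage $\eta$ the inductive hypothesis makes the needed test sets at lower fibers legitimately $\sim_s$-closed. The base case is your fiber-$0$ mass argument, which needs no closedness at all if one tests with the whole space $I\times\beta$ instead of $I\times\{0\}$; at the successor step the paper uses the sets $(I\times\{\alpha_n(\eta)\})\cup(I\times\{\xi\mid\eta\leq\xi<\beta\})$ and $(V\times\{\alpha_n(\eta)\})\cup(I\times\{\xi\mid\eta\leq\xi<\beta\})$ (padded with the upper fibers to secure closedness), after which exactly your computations finish the step, and the limit step is immediate. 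Well-foundedness of $\beta$ is precisely what terminates the regress that your argument runs into, so without the induction there is no proof.
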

\begin{proof}
  We will show by induction that for all  $1\leq\eta\leq\beta$, 
  $\sim_s\restriction (I\times \eta)$ is the identity. For the case
  $\eta=1$, we observe that if $(x,0)\neq(x',0)$, then for any $n \in
  \omega$
  \[
  \tau_n((x,0),I\times \beta)=x\neq  x'=\tau_n((x',0),I\times \beta)
  \]
  holds. Assume now that the result holds for $\eta$ and that
  $\eta+1\leq\beta$. By inductive hypothesis 
  $\sim_s\restriction  (I\times \eta)$ is the identity. 
  It is enough to consider states
  $(x,\alpha)\neq(x',\eta)$ for some $\alpha\leq\eta$. We analyze the
  case $\alpha< \eta$ first. The IH guarantees that every 
  $(r,\gamma)$ is only $\sim_s$-related to itself when $\gamma<\eta$; 
  since 
  $\alpha_n(\eta)<\eta$, it follows that the measurable set 
  $A(n)\defi(I\times \{\alpha_n(\eta)\})\cup (I\times\{\xi 
  \mid \eta\leq\xi<\beta\})$ 
  is an  element of $\Sigma(\sim_s)$ for every $n\in 
  \omega$. 
  If $\eta=1$ then $\alpha=0$, and in such case 
  $\tau_n((x,\eta),A(n))=\tau_n((x,\eta),I\times\beta)=1>x'
  =\tau_n((x',0),A(n))$
  holds for any  $n\in\omega$. For $\eta>1$, there exists $n\in
  \omega$ such that  $\alpha_n(\eta)\neq
  \alpha_n(\alpha)$ and we have
  \[
  \tau_n((x,\eta),A(n)) 
  =\leb_i(A(n)_{\alpha_n(\eta)})=\leb_i(I)=1\neq 0=
  \tau_n((x',\alpha),A(n)).
  \]
  Suppose now that $\eta=\alpha$ and $x\neq x'$; without loss of
  generality we can choose $n \in \omega$ such that $x<q_n<x'$. 
  Again, since
  $\alpha_n(\eta)=\alpha_n(\alpha)<\eta$, the inductive hypothesis 
  guarantees that the set 
  $A\defi  (V\times\{\alpha_n(\eta)\}) \cup (I \times \{\xi \mid
  \eta\leq\xi<\beta\})$ is in  $\Sigma(\sim_s)$. But 
  then
  \[
  \tau_n((x,\eta),A) =\leb_0(A_{\alpha_n(\eta)})=\leb_0(V)\neq
  \leb_1(V)=\leb_1(A_{\alpha_n(\alpha)})=\tau_n((x',\alpha),A).
  \] 
  This shows that the claim is also true for 
  $\eta+1$. Finally, assume  $\lambda$ is a limit ordinal. Since
  $\sim_s\restriction(I\times\lambda)=
  \bigcup_{\eta<\lambda}\sim_s\restriction(I\times\eta)$,
  the result follows easily from the IH.
\end{proof}

We now calculate a bound for the event bisimilarity,
${\sim_e}=\mathcal{R}(\sigma(\sem{\Logic}))$. We define 
\[
  \Lambda\defi 
  \{A\subseteq I\times\beta \mid A_0 
  \in\Borel(I) \wedge
  \forall \alpha>0 \, (A_\alpha \in \{\emptyset,I\})\}.
\]
\vspace*{-\baselineskip} %
\begin{lemma}\label{inclusion}
  $\sigma(\sem{\Logic}) \subseteq \Lambda$.
\end{lemma}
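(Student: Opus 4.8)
The plan is to appeal to Theorem~\ref{thm:sigma-logic-smallest-stable}, which characterizes $\sigma(\sem{\Logic})$ as the \emph{smallest} stable sub-$\sigma$-algebra of $\Sigma=\Borel_V\otimes\Power(\beta)$. Consequently it suffices to show that $\Lambda$ is itself a stable sub-$\sigma$-algebra of $\Sigma$; the inclusion $\sigma(\sem{\Logic})\subseteq\Lambda$ then follows by minimality. So I would not touch the logic directly at all, but instead verify three things about $\Lambda$: that it sits inside $\Sigma$, that it is closed under the Boolean/countable operations, and that it is stable under the kernels $\tau_n$.

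First I would check $\Lambda\subseteq\Sigma$ using Lemma~\ref{lem:product_characterization}: if $A\in\Lambda$ then $A_0\in\Borel(I)$ has some countable complexity in the $\boldsymbol{\Sigma}_\xi^{V}(I)$ hierarchy, while every other section lies in $\{\emptyset,I\}$ and so has complexity $1$; hence $\comp(A)<\omega_1$ and $A\in\Sigma$. That $\Lambda$ is a $\sigma$-algebra is then a routine sectionwise computation: complementation and countable union commute with taking sections, $(A^c)_\alpha=(A_\alpha)^c$ and $(\bigcup_n A^{(n)})_\alpha=\bigcup_n A^{(n)}_\alpha$, and both $\Borel(I)$ and the two-element family $\{\emptyset,I\}$ are closed under these operations (a countable union of sets each equal to $\emptyset$ or $I$ is again $\emptyset$ or $I$). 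The whole space $I\times\beta$ belongs to $\Lambda$, so $\Lambda$ is nonempty.

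The substantive part is stability. Fixing $A\in\Lambda$, a rational $r\in[0,1]$ and a label $n$, I must show that $W\defi\{(x,\eta)\mid\tau_n((x,\eta),A)>r\}$ lies in $\Lambda$, i.e.\ that $W_0\in\Borel(I)$ and $W_\eta\in\{\emptyset,I\}$ for $\eta>0$. I would argue fiber by fiber. On the fiber $\eta=0$ the kernel is $\tau_n((x,0),A)=x\cdot\leb_0(A_0)$, so $W_0=\{x\mid x\cdot\leb_0(A_0)>r\}$ is an open subinterval of $I$ (possibly empty), hence Borel. For $\eta>0$ the kernel reads the section $A_{\alpha_n(\eta)}$ through $\leb_0$ or $\leb_1$, and here I would use that, by the definition of $\alpha_n$, one has $\alpha_n(\eta)=0$ \emph{only} when $\eta=1$. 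When $\alpha_n(\eta)>0$ the section $A_{\alpha_n(\eta)}$ is $\emptyset$ or $I$, so $\tau_n((x,\eta),A)$ is the constant $0$ or $1$ and $W_\eta\in\{\emptyset,I\}$.

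The delicate case, which I expect to be the real content, is $\eta=1$: there $\alpha_n(1)=0$, so the kernel inspects the genuinely Borel section $A_0$, using $\leb_0$ when $x<q_n$ and $\leb_1$ when $x\geq q_n$. The point is that $\leb_0$ and $\leb_1$ both extend $\leb$ and therefore \emph{agree on $\Borel(I)$}; since $A_0\in\Borel(I)$, we get $\leb_0(A_0)=\leb_1(A_0)=\leb(A_0)$, so $\tau_n((x,1),A)$ does not in fact depend on $x$, and $W_1\in\{\emptyset,I\}$. This is precisely the coincidence that forces stability: had $A_0$ been allowed to involve $V$, the two measures would disagree and $W_1$ could be the nontrivial set $(0,q_n)$ or $[q_n,1)$, violating membership in $\Lambda$. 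With $W\in\Lambda$ established in every fiber, $\Lambda$ is a stable sub-$\sigma$-algebra, and minimality of $\sigma(\sem{\Logic})$ gives $\sigma(\sem{\Logic})\subseteq\Lambda$.
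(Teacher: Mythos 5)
Your proof is correct and follows essentially the same route as the paper: verify that $\Lambda$ is a stable sub-$\sigma$-algebra of $\Sigma$ and then invoke the minimality of $\sigma(\sem{\Logic})$ from Theorem~\ref{thm:sigma-logic-smallest-stable}. In fact your fiberwise analysis is slightly more explicit than the paper's, particularly in isolating the $\eta=1$ fiber (the only one with $\alpha_n(\eta)=0$) and noting that stability there rests on $\leb_0$ and $\leb_1$ agreeing on the Borel section $A_0$.
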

\begin{proof}
  It is clear that  $\Lambda$ is a  $\sigma$-algebra, now we verify 
  that it is stable. Let  $A \in \Lambda$
  and 
  $q\in \mathbb{Q}\cap \Inter $. By the same reasoning as in the proof
  of Lemma~\ref{kernel}, 
  $\{(x,\alpha)\in I\times \beta \mid \tau_n((x,\alpha),A)<q\} \in \Lambda$ 
  since it is Borel in the 0-fiber and, since
  $\leb_0(A_\eta)=\leb_1(A_\eta)$ for all  $\eta>0$, the remaining 
  sections
  are either $\emptyset$ or $I$. From this it follows that $\Lambda$
  is stable.
  Since $\sigma(\sem{\Logic})$ is the least
  stable $\sigma$-algebra by 
  Theorem~\ref{thm:sigma-logic-smallest-stable}, then
  $\sigma(\sem{\Logic})\subseteq\Lambda$.
\end{proof}
This bound is rather close to $\sigma(\sem{\Logic})$, as the 
following result
shows.
\begin{lemma}\label{lem:fibers-in-event-bisim}
  For all $\alpha<\beta$, $I\times \{\alpha\} \in 
  \sigma(\sem{\Logic})$.
\end{lemma}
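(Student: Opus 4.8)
The plan is to prove the statement by transfinite induction on $\alpha<\beta$, using crucially that $\sigma(\sem{\Logic})$ is \emph{stable} (it is the smallest stable $\sigma$-algebra by Theorem~\ref{thm:sigma-logic-smallest-stable}). Thus, besides being closed under complements and countable unions, it is closed under the operation $Q\mapsto\posg{n}Q=\{s\mid \tau_n(s,Q)>q\}$ for every label $n$ and rational $q$. The key point is that, although an individual fiber need not be the extension $\sem{\phi}$ of a single formula, it can be manufactured inside $\sigma(\sem{\Logic})$ by repeatedly applying these operations. The engine of the argument is the following computation: for a \emph{positive} fiber, i.e.\ $\gamma>0$, and any rational $q\in(0,1)$,
\[
\posg{n}(I\times\{\gamma\})=I\times\{\zeta>0\mid \alpha_n(\zeta)=\gamma\},
\]
because for $\zeta>0$ the measure $\tau_n((x,\zeta),\cdot)$ returns $\leb_i$ of the $\alpha_n(\zeta)$-section, which on $I\times\{\gamma\}$ equals $\leb_i(I)=1$ when $\alpha_n(\zeta)=\gamma$ and $\leb_i(\emptyset)=0$ otherwise, while on the $0$-fiber it returns $x\cdot\leb_0(\emptyset)=0$.

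For the base case I would isolate $I\times\{0\}$ directly. Since $\tau_n((x,0),I\times\beta)=x$ and $\tau_n((x,\zeta),I\times\beta)=\leb_i(I)=1$ for $\zeta>0$, stability gives
\[
\posg{n}(I\times\beta)=\bigl((q,1)\times\{0\}\bigr)\cup\bigl(I\times\{\zeta\mid 0<\zeta<\beta\}\bigr)\in\sigma(\sem{\Logic}).
\]
Taking complements and then the union over all rational $q\in(0,1)$ yields $I\times\{0\}=\bigcup_q\bigl((0,q]\times\{0\}\bigr)\in\sigma(\sem{\Logic})$.

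For the inductive steps I would exploit the combinatorics of $\alpha_n$. At a successor $\gamma+1$, applying the displayed identity for each $n$ and intersecting,
\[
\textstyle\bigcap_{n\in\omega}\posg{n}(I\times\{\gamma\})=I\times\bigl(\bigcap_{n\in\omega}\{\zeta>0\mid\alpha_n(\zeta)=\gamma\}\bigr)=I\times\{\gamma+1\},
\]
the last equality holding because $\alpha_n(\delta+1)=\delta$ is constant in $n$ (forcing $\delta=\gamma$, i.e.\ $\zeta=\gamma+1$), whereas at a limit $\zeta$ the sequence $n\mapsto\alpha_n(\zeta)$ is strictly increasing and hence never constantly equal to $\gamma$; the fiber $I\times\{1\}$ is obtained the same way from $I\times\{0\}$ after intersecting with the complement of $I\times\{0\}$ to remove the extra $0$-fiber contribution. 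At a limit $\lambda$ the inductive hypothesis provides $I\times\{\alpha_n(\lambda)\}\in\sigma(\sem{\Logic})$ for every $n$ (each $\alpha_n(\lambda)$ is positive and $<\lambda$), and
\[
\textstyle\bigcap_{n\in\omega}\posg{n}(I\times\{\alpha_n(\lambda)\})=I\times\{\zeta>0\mid \forall n\ \alpha_n(\zeta)=\alpha_n(\lambda)\}=I\times\{\lambda\}.
\]

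The main obstacle is the combinatorial heart of the last two displays, namely that these intersections collapse to a single fiber. This rests on the chosen structure of $\alpha_n$: successor ordinals produce a value constant in $n$, while each limit ordinal (necessarily of cofinality $\omega$, since $\alpha<\beta\le\omega_1$ forces $\alpha$ countable) carries a strictly increasing cofinal $\omega$-sequence. Injectivity in $n$ of that sequence, together with the fact that a cofinal sequence determines its supremum, is exactly what separates $\lambda$ from every other ordinal and prevents any spurious fiber from surviving the intersection. The remaining bookkeeping—that all intermediate sets lie in $\sigma(\sem{\Logic})$—is immediate from stability and closure under countable Boolean operations.
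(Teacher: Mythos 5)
Your proof is correct and follows essentially the same route as the paper's: a transfinite induction exploiting stability of $\sigma(\sem{\Logic})$, applying the threshold operations of the kernels $\tau_n$ to previously obtained fibers $I\times\{\alpha_n(\eta)\}$ and intersecting over $n$, with the combinatorics of $\alpha_n$ (constant in $n$ at successors, strictly increasing and cofinal at limits) isolating the single fiber $I\times\{\eta\}$. If anything, your version is slightly more careful than the paper's uniform inductive step at the edge case $\eta=1$, where $\alpha_n(1)=0=\alpha_n(0)$ and the $0$-fiber contributes a spurious piece $(q,1)\times\{0\}$, which you explicitly remove by intersecting with the complement of $I\times\{0\}$.
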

\begin{proof}
  For the case $\alpha=0$ we observe that for any  $n\in \omega$
  $\{(x,\eta)\mid
  \tau_n((x,\eta),I\times\beta)<q\}=(0,q)\times \{0\}$, and this set
  is in  $\sigma(\sem{\Logic})$ because this $\sigma$-algebra is 
  stable. If we choose 
  $q=1$ we obtain
  the first case.
	
  Now assume that for a  given $\eta<\beta$, $I\times \{\alpha\} \in
  \sigma(\sem{\Logic})$ for all $\alpha<\eta$. For a fixed $n\in 
  \omega$, since
  $\alpha_n(\eta)<\eta$, the following set is in  
  $\sigma(\sem{\Logic})$:
  $\{(x,\xi)\mid \tau_n((x,\xi),I\times
  \{\alpha_n(\eta)\})<q\}=\bigcup_{\alpha < \beta}\{I\times
  \{\alpha\}\mid \alpha_n(\alpha)\neq\alpha_n(\eta)\}=I\times
  (\{\eta\}\cup
  \{\alpha<\beta\mid\alpha_n(\alpha)=\alpha_n(\eta)\})^c$. By taking 
  complements in the right hand side we obtain $A(n)\defi
  I\times(\{\eta\}\cup\{\alpha<\beta\mid\alpha_n(\alpha)
  =\alpha_n(\eta)\})
  \in \sigma(\sem{\Logic})$. But $I \times
  \{\eta\}=\bigcap_{n\in\omega}A(n) \in \sigma(\sem{\Logic})$ since 
  for  $\alpha
  \neq \eta$ there exists  $n\in\omega$ such that 
  $\alpha_n(\alpha)\neq\alpha_n(\eta)$. This completes the inductive 
  step.
\end{proof}

The next lemma gives some information about the  $\sigma$-algebra
$\G^\xi(\Lambda)$. Given $\eta<\beta$ and a    $\sigma$-algebra
$\mathcal{A}$, we will denote by $\mathcal{A}|_\eta$ the
restriction $\mathcal{A}\restriction I\times\{\eta\}$,
i.e.\ the $\sigma$-algebra of  $\eta$-sections of  elements of 
$\mathcal{A}$.

\begin{lemma}\label{lem:G-eta-restrict}
  If $\eta$ satisfies $\beta > \eta\geq \xi$, then
  $\G^{\xi}(\Lambda)|_\eta\subseteq\Borel(I)$. Also, 
  $\G^{\xi}(\Lambda)|_{\zeta+1}$ is trivial whenever 
  $\xi<\zeta+1<\beta$.
\end{lemma}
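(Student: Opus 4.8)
The plan is to prove both assertions together by transfinite induction on $\xi$; morally this is a ``finite speed of propagation'' statement, saying that the nonmeasurable set $V$, which first surfaces in the $0$-fiber, can climb at most one fiber per application of $\G$. The single structural fact driving everything is that for $\eta>0$ the value $\tau_n((x,\eta),Q)$ depends on $Q$ only through the section $Q_{\alpha_n(\eta)}$ and on $x$ only through whether $x<q_n$ or $x\geq q_n$; hence, inside a fixed fiber $\eta>0$, a generating set $\posleq{n}Q$ of $\G^{\rho+1}(\Lambda)=\Sigma(\Rel^T(\G^{\rho}(\Lambda)))$ is governed entirely by the two numbers $\leb_0(Q_{\alpha_n(\eta)})$ and $\leb_1(Q_{\alpha_n(\eta)})$.

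For the base case $\xi=0$ both claims are immediate from the definition of $\Lambda$, since $\Lambda|_0=\Borel(I)$ and $\Lambda|_\eta=\{\emptyset,I\}$ for $\eta>0$. At a successor stage $\xi=\rho+1$ I would fix a fiber $\eta\geq\rho+1$ and call a label $n$ \emph{active at $\eta$} when $\leb_0(Q_{\alpha_n(\eta)})\neq\leb_1(Q_{\alpha_n(\eta)})$ for some $Q\in\G^{\rho}(\Lambda)$. Because $\leb_0$ and $\leb_1$ agree on $\Borel(I)$, activity forces $\G^{\rho}(\Lambda)|_{\alpha_n(\eta)}\not\sbq\Borel(I)$, and the inductive hypothesis (first assertion) then forces $\alpha_n(\eta)<\rho$. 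If $\eta=\zeta+1$ with $\zeta\geq\rho$, every label has $\alpha_n(\eta)=\zeta\geq\rho$, so no label is active: each $\tau_n((x,\eta),Q)$ is independent of $x$ and every section $\bigl(\posleq{n}Q\bigr)_\eta$ equals $\emptyset$ or $I$, giving the triviality of $\G^{\rho+1}(\Lambda)|_{\zeta+1}$ (the second assertion) and a fortiori the first. If instead $\eta$ is a limit ordinal with $\eta\geq\rho+1$, then $\alpha_n(\eta)$ is increasing and cofinal in $\eta>\rho$, so $\alpha_n(\eta)<\rho$ holds for only finitely many $n$; thus only finitely many labels are active at $\eta$.

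The payoff of the finiteness is that within fiber $\eta$ the relation $\Rel^T(\G^{\rho}(\Lambda))$ identifies two points precisely when they lie on the same side of each $q_n$ with $n$ active, so its classes are the finitely many intervals cut out by the finite set $\{q_n: n \text{ active at }\eta\}$. Any $\Rel^T(\G^{\rho}(\Lambda))$-closed set meets fiber $\eta$ in a union of these finitely many classes, hence in a finite union of intervals, which is Borel; the cross-fiber part of the relation can only restrict which such unions occur and cannot refine the fiber-$\eta$ partition, so it cannot manufacture a non-Borel section. This gives $\G^{\rho+1}(\Lambda)|_\eta\sbq\Borel(I)$. For a limit stage $\xi=\lambda$ I would use that restriction commutes with generation, so $\G^{\lambda}(\Lambda)|_\eta=\sigma\bigl(\bigcup_{\xi<\lambda}\G^{\xi}(\Lambda)|_\eta\bigr)$; when $\eta\geq\lambda$ every term lies in $\Borel(I)$ and when $\zeta+1>\lambda$ every term is trivial, so both assertions pass to the limit.

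The hard part will be the successor step, specifically upgrading ``each generator restricts to a union of intervals'' to ``every member of the $\sigma$-algebra $\G^{\rho+1}(\Lambda)$ restricts to a Borel set,'' since a priori the closure operation $\Sigma(\cdot)$ and the cross-fiber identifications could produce complicated sections. The clean resolution is the observation above that for $\eta\geq\rho+1$ the set of active labels is finite, so the fiber-$\eta$ partition has only finitely many interval classes and every closed set automatically meets the fiber in a finite union of intervals; this bypasses any delicate argument that $V$ fails to be saturated. Correctly locating the active labels---recognising that activity at $\eta$ is controlled by the complexity already present at the \emph{earlier} fiber $\alpha_n(\eta)$, which is exactly where the inductive hypothesis is consumed---is the crux of the whole argument.
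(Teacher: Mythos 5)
Your proof is correct and follows essentially the same strategy as the paper's: induction on $\xi$, with the successor step split into successor fibers (where $\alpha_n(\eta)=\eta-1\geq\rho$ kills all distinguishing power, giving triviality) and limit fibers (where cofinality of $\{\alpha_n(\eta)\}_n$ leaves only finitely many labels reaching below $\rho$, so the fiber relation partitions $I$ into finitely many intervals), and the limit step by commuting restriction with $\sigma$-generation. Your ``active label'' bookkeeping and the explicit remark that cross-fiber identifications cannot refine the within-fiber partition are just slightly more careful packagings of exactly the argument in the paper.
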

\begin{proof}
  By induction on $\xi$. If $\xi=0$,
  then $\G^0(\Lambda)=\Lambda$ and by its  definition,
  $\Lambda|_\eta\subseteq\Borel(I)$. Now suppose that the result 
  holds
  for  $\xi\geq 0$ and take $\eta\geq  \xi+1$.
  Let   $A\in
  \G^{\xi+1}(\Lambda)=\G(\G^\xi(\Lambda))=
  \Sigma(\mathcal{R}^{T}(\G^\xi(\Lambda)))$,
  i.e.\ $A \in \Borel_V\otimes \Power(\beta)$ is 
  $\mathcal{R}^{T}(\G^\xi(\Lambda))$-closed and
  therefore it is closed under this relation in each fiber. We
  aim to prove that  $A_\eta$ is Borel. We distinguish two cases: If
  $\eta$ is $\zeta+1$ for some $\zeta$, then $\zeta\geq\xi$ and by 
  the IH
  $\G^\xi(\Lambda)|_{\zeta}$ consists of Borel sets. Hence,
  for any set $D \in \G^\xi(\Lambda)$, any $n\in \omega$, and
  any $x\in I$, $\tau_n((x,\eta),D)=\leb(D_{\eta-1})$. In 
  consequence, if 
  $A_\eta \neq  \emptyset$ then it must be the case that
  $A_\eta=I$ because this set is 
  $\mathcal{R}^{T}(\G^\xi(\Lambda))$-closed. Moreover, the second
  claim in the statement of the Lemma follows.
  
  If  $\eta$ is a limit  ordinal we cannot argue as before because to 
  determine the  relation $\mathcal{R}^{T}(\G^\xi(\Lambda))$ we need
  to know the sections  $D_{\alpha_n(\eta)}$ of the elements in
  $\mathcal{G}^\xi(\Lambda)$, and it might be the case that
  $\alpha_n(\eta)<\xi$. Nevertheless,
  $\{n\in \omega \mid  \alpha_n(\eta)<\xi\}$ is finite and for the
  rest of  the naturals
  $m$, $\tau_m$ does not distinguish points in $A_\eta$ since by the
  IH $D_{\alpha_n(\eta)}$ is Borel. Now, if
  $\alpha_n(\eta)<\xi$, $\tau_n$ can only distinguish points between
  $[0,q_n)$ and $[q_n,1]$; in consequence, 
  $\G^{\xi+1}(\Lambda)|_\eta$ is the 
  $\sigma$-algebra generated by such  intervals, which is clearly   
  included in the Borel $\sigma$-algebra. This finishes the case  
  $\xi+1$.

  For case of $\xi$ limit, the IH ensures that for all
  $\gamma < \xi$ and $\eta\geq\gamma$,
  $\G^{\gamma}(\Lambda)|_\eta\subseteq\Borel(I)$. Hence, if
  $\eta\geq\xi$ then  $\eta>\gamma$ and this yields
  $\G^{\xi}(\Lambda)|_\eta=\sigma\bigl(\bigcup_{\gamma<\xi}
  \G^\gamma(\Lambda)\bigr)|_\eta=
  \sigma(\bigcup_{\gamma<\xi}\G^\gamma(\Lambda)|_\eta)
  \subseteq \Borel(I)$.
\end{proof}

\begin{corollary}\label{cor:zhou-geq-beta}
  If $\beta>\eta+1\geq\xi$, then $\Op^\xi(\sim_e)\restriction I\times 
  \{\eta+1\}$ is the total relation. In 
  consequence $\Zh(\lmp{S}(\beta))\geq \beta$ if $\beta$ is limit, 
  and $\Zh(\lmp{S}(\zeta+1))\geq\zeta$ for all $\zeta$.
\end{corollary}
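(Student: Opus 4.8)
The plan is to recast the statement entirely in terms of the $\sigma$-algebras $\Sigma_\xi=\G^\xi(\sigma(\sem{\Logic}))$, whose fiberwise behaviour has already been charted in Lemma~\ref{lem:G-eta-restrict}. Since $\Sigma_0=\sigma(\sem{\Logic})$ and $R_0={\sim_e}=\Rel(\Sigma_0)=\Rel^T(\Sigma_0)$ by Lemma~\ref{r=rt}, Proposition~\ref{hip alfa} applies and gives $\Op^\xi({\sim_e})=R_\xi=\Rel^T(\Sigma_\xi)$ for every $\xi$. Consequently, two states $(x,\eta+1)$ and $(x',\eta+1)$ of the fiber $\eta+1$ are $\Op^\xi({\sim_e})$-related exactly when $\tau_n((x,\eta+1),E)=\tau_n((x',\eta+1),E)$ for every label $n$ and every $E\in\Sigma_\xi$.

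First I would compute these transition probabilities. As $\eta+1>0$ and $\alpha_n(\eta+1)=\eta$, the definition of the kernels yields $\tau_n((x,\eta+1),E)=\leb_i(E_\eta)$, where the index $i\in\{0,1\}$ depends only on whether $x<q_n$. Thus the two values can differ at all only through $\leb_0(E_\eta)$ versus $\leb_1(E_\eta)$, and they coincide as soon as the section $E_\eta$ is Borel, because $\leb_0$ and $\leb_1$ both extend the Lebesgue measure $\leb$ and therefore agree on $\Borel(I)$.

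The crux is then to show that $E_\eta\in\Borel(I)$ for every $E\in\Sigma_\xi$. Here I would invoke the containment $\Sigma_0=\sigma(\sem{\Logic})\subseteq\Lambda$ from Lemma~\ref{inclusion}: monotonicity of $\G$ (Proposition~\ref{prop:basics-operator}(\ref{item:6})) propagates it through the whole hierarchy, so $\Sigma_\xi=\G^\xi(\sigma(\sem{\Logic}))\subseteq\G^\xi(\Lambda)$. Lemma~\ref{lem:G-eta-restrict} then applies at the index $\eta$: since $\beta>\eta+1$ forces $\beta>\eta$, and provided $\eta\geq\xi$, we obtain $\G^\xi(\Lambda)|_\eta\subseteq\Borel(I)$, hence $E_\eta\in\Borel(I)$ and $\tau_n((x,\eta+1),E)=\leb(E_\eta)=\tau_n((x',\eta+1),E)$ independently of $x,x'$. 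This collapses the fiber $\eta+1$ to a single $\Op^\xi({\sim_e})$-class, which is the asserted totality. I expect the main obstacle to be precisely securing the index condition of Lemma~\ref{lem:G-eta-restrict}, namely $\eta\geq\xi$ (equivalently $\eta+1>\xi$), so that the section $E_\eta$ has not yet had the opportunity to expose $V$; the delicate boundary is $\eta+1=\xi$, where a set like $V\times\{\eta\}$ can already belong to $\Sigma_\xi$ and destroy totality, so the argument genuinely depends on $\eta$ reaching $\xi$.

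Finally I would read off the two numerical consequences, in both cases applying the total-relation claim with $\eta=\xi$, so that the hypothesis $\eta\geq\xi$ holds automatically, and using that the earlier lemma identifies ${\sim_s}$ with the identity while every fiber $I\times\{\eta+1\}$ carries more than one point. If $\beta$ is a limit ordinal, then for each $\xi<\beta$ we have $\xi+1<\beta$, so $\Op^\xi({\sim_e})$ is total on $I\times\{\xi+1\}$ and in particular differs from $\mathrm{id}={\sim_s}$; thus $\Zh(\lmp{S}(\beta))>\xi$ for all $\xi<\beta$, giving $\Zh(\lmp{S}(\beta))\geq\beta$. For $\beta=\zeta+1$ the same choice works for every $\xi<\zeta$ (then $\xi+1\leq\zeta<\beta$), so $\Op^\xi({\sim_e})\neq{\sim_s}$ for all $\xi<\zeta$, whence $\Zh(\lmp{S}(\zeta+1))\geq\zeta$.
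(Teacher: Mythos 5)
Your proof is correct and is essentially the paper's own argument: $\Sigma_\xi\subseteq\G^\xi(\Lambda)$ via Lemma~\ref{inclusion} and monotonicity of $\G$, Borel $\eta$-sections from Lemma~\ref{lem:G-eta-restrict}, agreement of $\leb_0$ and $\leb_1$ on $\Borel(I)$ to collapse the fiber $I\times\{\eta+1\}$, and then the choice $\eta=\xi$ played against ${\sim_s}=\mathrm{id}$. Two of your side observations are also on target and worth recording: the paper's proof likewise only secures the case $\eta\geq\xi$, and the boundary case $\eta+1=\xi$ that the printed hypothesis allows genuinely fails (by Lemma~\ref{lem:V_times_alpha_Sigma_R_alpha}, $V\times\{\eta\}\in\Sigma(R_\eta)=\Sigma_{\eta+1}$, so $R_{\eta+1}$ is the identity, not total, on that fiber), so the hypothesis should be read as $\eta\geq\xi$; moreover, your uniform derivation of both numerical consequences from $\eta=\xi$ is slightly cleaner than the paper's, which splits on whether $\zeta$ is a limit and runs a separate contradiction argument in that case.
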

\begin{proof}
  Combining the inclusion in Lemma~\ref{inclusion} 
  with the monotonicity
  of  $\G$, we see that
  $\Sigma_\xi=\G^\xi(\sigma(\sem{\Logic}))\subseteq \G^\xi(\Lambda)$ 
  and by 
  Lemma~\ref{lem:G-eta-restrict} 
  $\Sigma_\xi|_{\eta}$ also  consists of  Borel sets if
  $\eta\geq\xi$. As a consequence,
  $\mathcal{O}^\xi(\sim_e)=R_\xi\supseteq\mathcal{R}^T(\Sigma_\xi)$
  restricted to  $I\times\{\eta+1\}$ is the total relation because the
  measures $\leb_i$ cannot distinguish points if the allowed sets are 
  Borel. For the last assertion, take $\xi=\eta+1$ for any 
  $\xi<\beta$ in the limit case. For the second case, take 
  $\xi=\eta=\zeta-1$ if $\zeta$ is not limit. Otherwise, suppose by 
  way of a contradiction that $\Zh(\lmp{S}(\zeta+1))<\zeta$ and 
  choose any $\eta$ such that $\Zh(\lmp{S}(\zeta+1))<\eta<\zeta$, 
  then  
  $\sim_s\restriction 
  I\times\{\eta+1\}=\Op^{\Zh(\lmp{S}(\zeta+1))}(\sim_e)\restriction 
  I\times\{\eta+1\}$ is the total relation. This is a contradiction 
  because $\sim_s$ is the identity.
\end{proof}

The following lemma will provide a more detailed analysis
of the relations $R_\alpha$.

\begin{lemma}\label{lem:V_times_alpha_Sigma_R_alpha}
  For all $\alpha<\beta$,  
  $R_\alpha\restriction I\times
  (\alpha+1)$ is the identity and $V\times \{\alpha\}$ is in 
  $\Sigma(R_\alpha)$.
\end{lemma}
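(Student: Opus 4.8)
The plan is to prove both assertions simultaneously by transfinite induction on $\alpha$. Throughout I will use the following already-established facts, all valid because $R_0={\sim_e}=\Rel(\Sigma_0)=\Rel^T(\Sigma_0)$ for $\Sigma_0=\sigma(\sem{\Logic})$ (Lemma~\ref{r=rt}): that $R_\alpha=\Rel^T(\Sigma_\alpha)$ (Proposition~\ref{hip alfa}), that the chain $(R_\alpha)$ is decreasing with top $R_0={\sim_e}$ (Proposition~\ref{prop:R-decreasing}), that $\Sigma(R_\gamma)=\Sigma_{\gamma+1}$ (Corollary~\ref{cor:Sigma-R-alpha}), and that the chain $(\Sigma_\alpha)$ is increasing (Corollary~\ref{cor:Sigma-monotonous}). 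Since every $R_\alpha$ is an equivalence relation it contains the diagonal, so for the first assertion it suffices to show that two $R_\alpha$-related points lying in fibers $\leq\alpha$ must coincide.

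A separation remark will be used repeatedly: \emph{distinct fibers are never related, not even by $\sim_e$}. Indeed, by Lemma~\ref{lem:fibers-in-event-bisim} each $I\times\{\eta\}$ belongs to $\sigma(\sem{\Logic})=\Sigma_0$, and since ${\sim_e}=\Rel(\Sigma_0)$ decides membership in every set of $\Sigma_0$, two points in different fibers already fail to be $\sim_e$-related; as $R_\alpha\sbq R_0={\sim_e}$, they fail to be $R_\alpha$-related. Hence, for the first assertion, only the separation of distinct points \emph{inside} a single fiber $\eta\leq\alpha$ remains.

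For the base case $\alpha=0$ the only fiber in $I\times 1$ is $I\times\{0\}$; for $x\neq x'$ the test $I\times\{0\}\in\Sigma_0$ gives $\tau_n((x,0),I\times\{0\})=x\cdot\leb_0(I)=x\neq x'=\tau_n((x',0),I\times\{0\})$, so $R_0\restriction I\times\{0\}$ is the identity. In the inductive step, a fiber $\eta<\alpha$ is handled by monotonicity: $R_\alpha\sbq R_\eta$, and by the inductive hypothesis (first assertion at $\eta$) the relation $R_\eta$ is already the identity on $I\times\{\eta\}$. The crucial fiber is $\eta=\alpha>0$. Here I use that $\alpha_n(\alpha)<\alpha$ holds for every $n$ (it equals $\gamma$ when $\alpha=\gamma+1$, and lies in $\lambda-\{0\}$ when $\alpha=\lambda$ is a limit), so the inductive hypothesis (second assertion at $\alpha_n(\alpha)$) yields $V\times\{\alpha_n(\alpha)\}\in\Sigma(R_{\alpha_n(\alpha)})=\Sigma_{\alpha_n(\alpha)+1}\sbq\Sigma_\alpha$. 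Given $x<x'$ in fiber $\alpha$, I pick $n$ with $q_n\in(x,x')$ and test with $A\defi V\times\{\alpha_n(\alpha)\}$: since $x\in(0,q_n)$ and $x'\in[q_n,1)$, the kernel gives $\tau_n((x,\alpha),A)=\leb_0(V)\neq\leb_1(V)=\tau_n((x',\alpha),A)$ by Theorem~\ref{thm:measure-extension}, so the two points are not $\Rel^T(\Sigma_\alpha)=R_\alpha$-related. This proves the first assertion at $\alpha$.

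The second assertion then follows immediately at every stage (including $\alpha=0$): $V\times\{\alpha\}$ is a measurable rectangle, hence lies in $\Borel_V\otimes\Power(\beta)$, and it is $R_\alpha$-closed because any point $R_\alpha$-related to some $(x,\alpha)$ with $x\in V$ must lie in fiber $\alpha$ (by the separation remark) and must then equal $(x,\alpha)$ (by the first assertion at $\alpha$), so it again belongs to $V\times\{\alpha\}$; thus $V\times\{\alpha\}\in\Sigma(R_\alpha)$, closing the induction. The main obstacle is exactly the diagonal fiber $\alpha$: this is the point at which the nonmeasurable set $V$ must already be ``available'' one level below, and the argument hinges on the bookkeeping that the same index $n$ simultaneously selects the threshold $q_n$ separating $x$ from $x'$ and a section index $\alpha_n(\alpha)<\alpha$ at which $V$ has been unveiled by the inductive hypothesis.
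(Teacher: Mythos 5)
Your proof is correct, and its core is the same as the paper's: to separate $x<x'$ inside the diagonal fiber $\alpha$, pick $n$ with $x<q_n<x'$ and test against $V\times\{\alpha_n(\alpha)\}$, which the inductive hypothesis (second assertion at $\alpha_n(\alpha)<\alpha$) places in $\Sigma(R_{\alpha_n(\alpha)})=\Sigma_{\alpha_n(\alpha)+1}\sbq\Sigma_\alpha$, so that $\leb_0(V)\neq\leb_1(V)$ breaks the relation $R_\alpha=\Rel^T(\Sigma_\alpha)$. Where you genuinely diverge is the handling of cross-fiber pairs. The paper never isolates this as a separate issue: in each case of its induction it re-derives by explicit kernel computations that a point of the current fiber cannot be related to a point of any other fiber (e.g.\ $\tau_n((s,\alpha+1),V\times\{\alpha\})>0=\tau_n((t,\eta),V\times\{\alpha\})$, with a case split on whether the higher index $\eta$ is a successor or a limit). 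You instead observe once and for all that Lemma~\ref{lem:fibers-in-event-bisim} puts every fiber $I\times\{\eta\}$ in $\Sigma_0=\sigma(\sem{\Logic})$, so distinct fibers are already non-$\sim_e$-related, hence non-$R_\alpha$-related because the chain decreases below $R_0={\sim_e}$ (Proposition~\ref{prop:R-decreasing} together with Lemma~\ref{r=rt}). This buys you two simplifications: the $R_\alpha$-closedness of $V\times\{\alpha\}$ becomes an immediate corollary of the within-fiber identity, and the bookkeeping $\alpha_n(\gamma+1)=\gamma$ lets you merge the paper's separate successor and limit cases into one uniform step (your test set specializes to the paper's $V\times\{\alpha\}$ at successors and to $V\times\{\alpha_n(\lambda)\}$ at limits). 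The cost is heavier reliance on the general machinery of Section~\ref{sec:operators-op-g} (Proposition~\ref{hip alfa}, Proposition~\ref{prop:R-decreasing}, Corollaries~\ref{cor:Sigma-R-alpha} and~\ref{cor:Sigma-monotonous}), where the paper's computations stay self-contained within the example; but every fact you invoke, in particular Lemma~\ref{lem:fibers-in-event-bisim}, precedes this lemma in the paper and does not depend on it, so there is no circularity and your argument stands.
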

\begin{proof}
  For the case $\alpha=0$, we note that $I\times\beta \in 
  \sigma(\sem{\Logic})$, 
  hence, for any $q\in\Q$, $\{(x,\eta)\mid \tau_n((x,\eta),I\times 
  \beta)<q\} \in 
  \sigma(\sem{\Logic})$. Given that $\tau_n((s,0),I\times\beta)=s 
  \cdot \leb_0(I) 
  = s < 1$, if $s<t$, then for any $q\in \Q$ between $s$ and 
  $t$ we have $(s,0)\in\{(x,\eta)\mid 
  \tau_n((x,\eta),I\times\beta)<q\}\mathrel{\cancel{\ni}} (t,0)$. 
  Then, 
  $((s,0),(t,0))\notin{\sim_e}=\Rel(\sigma(\sem{\Logic}))$. This 
  shows that  
  $R_0={\sim_e}$ is the identity on $I \times \{0\}$. Moreover, if 
  $\eta>0$ and $x \in I$ then for any
  $n$, $\tau_n((x,\eta),I\times \beta)=1$ and hence $((s,0),(x,\eta))
  \notin {\sim_e}$. As a consequence, the set $V\times
  \{0\} \in \Borel_V \otimes \Power(\beta)$ is ${\sim_e}$-closed. Note
  that, since ${\sim_e}\supseteq R_\alpha$ for any $\alpha$, the 
  previous sentence
  shows that the $R_\alpha$-class of a point $(s,0)$ is the
  singleton $\{(s,0)\}$.
  
  Assume now that $\alpha+1<\beta$ and the result holds for  
  $\alpha$. Thanks to the inclusion
  $R_{\alpha+1}\subseteq R_\alpha$, the IH ensures that
  $R_{\alpha+1}\restriction I\times (\alpha+1)$ is the identity 
  relation
  and the set $V\times \{\alpha\}$ is in
  $\Sigma(R_\alpha)$. If $s < t$ we choose $n\in \omega$ such that 
  $s <  q_n < t$ and thus 
  $\tau_n((s,\alpha+1),V\times\{\alpha\})=\leb_0(V)\neq
  \leb_1(V)=\tau_n((t,\alpha+1),V\times\{\alpha\})$. 
  Moreover, the same set  $V\times \{\alpha\}$ serves as a test to
  distinguish points in the  $(\alpha+1)$-section and the previous 
  ones:
  $\tau_n((s,\alpha+1),V\times\{\alpha\})>0
  =\tau((t,\eta),V\times\{\alpha\})$
  for any $\eta<\alpha+1$. Therefore 
  $R_{\alpha+1}=\mathcal{R}^T\Sigma(R_\alpha)\restriction I\times
  (\alpha+2)$ is also the identity. To show that  $V \times
  \{\alpha+1\} \in \Sigma(R_{\alpha+1})$ it is enough to prove that 
  $((s,\alpha+1),(t,\eta)) \notin R_{\alpha+1}$ if $\alpha+1 <
  \eta$. For this we use the same  $V \times \{\alpha\}$
  provided by the IH. If  $\eta$ is a successor ordinal,
  for any $n$  $\tau_n((t,\eta),V \times \{\alpha\})=0$ holds and 
  if  $\eta$ is limit, we choose $n\in \omega$ such that
  $\alpha_n(\eta)\neq \alpha$ and again we obtain
  $\tau_n((t,\eta),V \times \{\alpha\})=0$.
  
  It remains to check the limit case. Assume that $\lambda<\beta$ is 
  a limit ordinal and the result holds for
  every $\alpha<\lambda$. Since
  $R_\lambda=\bigcap_{\alpha<\lambda}R_\alpha\subseteq R_\alpha$, 
  $R_\lambda\restriction I \times (\alpha+1)\subseteq
  R_\alpha\restriction I \times (\alpha+1)$ must be the identity by
  the IH. From this we conclude that  $R_\lambda\restriction I
  \times \lambda$ is the identity relation. If $s<t$, let $q_n \in \Q$
  such that  $s<q_n<t$. By IH,  $V\times
  \{\alpha_n(\lambda)\} \in \Sigma(R_{\alpha_n(\lambda)})$ and 
  therefore
  the inequality  $\tau_n((s,\lambda),V\times
  \{\alpha_n(\lambda)\})\neq\tau_n((t,\lambda),V\times
  \{\alpha_n(\lambda)\})$ allows us to conclude that
  $((s,\lambda),(t,\lambda))\notin R_{\alpha_n(\lambda)+1}\supseteq
  R_\lambda$. Now, if  $\eta\neq \lambda$, we choose $n$ such that
  $\alpha_n(\eta)\neq \alpha_n(\lambda)$ and we have
  $\tau_n((s,\lambda),V\times
  \{\alpha_n(\lambda)\})>0=\tau_n((t,\eta),V\times
  \{\alpha_n(\lambda)\})$. As before
  $((s,\lambda),(t,\eta))\notin R_{\alpha_n(\lambda)+1}\supseteq
  R_\lambda$. It follows that  $R_\lambda\restriction I\times 
  (\lambda+1)$
  is the identity and $V\times \{\lambda\}$ is in $\Sigma(R_\lambda)$.
\end{proof}
\begin{corollary}\label{cor:zhou-leq-beta}
  $\Zh(\lmp{S}(\zeta+1))\leq \zeta$. If $\beta$ is a limit ordinal, 
  then 
  $\Zh(\lmp{S}(\beta))\leq \beta$.
\end{corollary}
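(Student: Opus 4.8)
The plan is to read off both bounds directly from Lemma~\ref{lem:V_times_alpha_Sigma_R_alpha}, using only that $\sim_s = \mathrm{id}$ on $I\times\beta$ (established in the preceding lemma) and that the iterates $R_\alpha = \Op^\alpha(\sim_e)$ form a decreasing chain, i.e.\ $R_\beta \subseteq R_\alpha$ whenever $\alpha<\beta$. Since $\Zh(\lmp{S}(\beta))$ is by definition the least $\alpha$ with $R_\alpha = {\sim_s}$, it suffices in each case to exhibit an appropriate iterate that collapses to the identity on the whole state space.

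For the successor bound I would simply instantiate Lemma~\ref{lem:V_times_alpha_Sigma_R_alpha} at $\alpha = \zeta$, which is legitimate since $\zeta < \zeta+1 = \beta$. The lemma then gives that $R_\zeta \restriction I\times(\zeta+1)$ is the identity; but $I\times(\zeta+1)$ is the entire state space of $\lmp{S}(\zeta+1)$, so $R_\zeta = \mathrm{id} = {\sim_s}$ and therefore $\Zh(\lmp{S}(\zeta+1)) \leq \zeta$.

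For the limit bound I would argue fiber by fiber. Given two points $(x,\eta)$ and $(x',\eta')$ of $I\times\beta$, set $\alpha \defi \max\{\eta,\eta'\}$; since $\beta$ is a limit ordinal and $\eta,\eta' < \beta$, we have $\alpha < \beta$, so both points lie in $I\times(\alpha+1)$. Because $R_\beta = \bigcap_{\gamma<\beta} R_\gamma \subseteq R_\alpha$ and $R_\alpha\restriction I\times(\alpha+1)$ is the identity by Lemma~\ref{lem:V_times_alpha_Sigma_R_alpha}, any $R_\beta$-related such pair must coincide. Hence $R_\beta = \mathrm{id} = {\sim_s}$, giving $\Zh(\lmp{S}(\beta)) \leq \beta$.

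No genuine obstacle remains, as the substantive transfinite induction has already been carried out inside Lemma~\ref{lem:V_times_alpha_Sigma_R_alpha}. The only point demanding care is the limit case, where limitness of $\beta$ is precisely what guarantees that every pair of points is absorbed into some initial block $I\times(\alpha+1)$ with $\alpha < \beta$; without limitness this step would fail for points in the top fiber. Combined with the lower bounds of Corollary~\ref{cor:zhou-geq-beta}, these inequalities pin down $\Zh(\lmp{S}(\zeta+1)) = \zeta$ and $\Zh(\lmp{S}(\beta)) = \beta$ for limit $\beta$.
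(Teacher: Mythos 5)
Your proposal is correct and follows essentially the same route as the paper: both bounds are read off from Lemma~\ref{lem:V_times_alpha_Sigma_R_alpha} (instantiated at $\alpha=\zeta$ for the successor case) together with the decreasing chain $R_\beta\subseteq R_\alpha$ and the fact that ${\sim_s}$ is the identity, with limitness of $\beta$ used exactly as in the paper to absorb any pair of points into some $I\times(\alpha+1)$ with $\alpha<\beta$. Your fiber-by-fiber phrasing of the limit case merely spells out the paper's one-line argument.
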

\begin{proof}
  The first statement follows directly from 
  Lemma~\ref{lem:V_times_alpha_Sigma_R_alpha} by taking 
  $\alpha=\zeta$. For the second one, if $\beta$ is a limit ordinal 
  $\forall \alpha<\beta$ 
  $R_\alpha\supseteq R_\beta$, then if $R_\alpha\restriction 
  I\times(\alpha+1)$ equals the identity, $R_\beta\restriction 
  I\times(\alpha+1)$ is also the identity. Therefore 
  $R_\beta\restriction I\times\beta$ is the identity and 
  $\Zh(S(\beta))\leq \beta$.
\end{proof}
\begin{theorem}\label{distancia-contable}
  For a limit $\beta\leq\omega_1$,  $\Zh(\lmp{S}(\beta))=\beta$; and
  if  $\beta<\omega_1$, $\Zh(\lmp{S}(\beta+1))=\beta$.\qed
\end{theorem}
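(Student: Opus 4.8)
The plan is to read off both equalities directly from the two bounds already in hand, namely Corollary~\ref{cor:zhou-geq-beta} for the lower bounds and Corollary~\ref{cor:zhou-leq-beta} for the upper bounds. All of the substantive analysis has been carried out in the lemmas preceding those corollaries (the computation of $\sim_s$, the containment $\sigma(\sem{\Logic})\subseteq\Lambda$, the behaviour of $\G^\xi(\Lambda)$ on the fibers, and the fiber-by-fiber description of the $R_\alpha$), so at this point only a sandwiching argument between the two corollaries remains.

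For a limit ordinal $\beta\leq\omega_1$, Corollary~\ref{cor:zhou-geq-beta} gives $\Zh(\lmp{S}(\beta))\geq\beta$ while Corollary~\ref{cor:zhou-leq-beta} gives $\Zh(\lmp{S}(\beta))\leq\beta$, so $\Zh(\lmp{S}(\beta))=\beta$. For the successor case with $\beta<\omega_1$, I would instantiate both corollaries at $\zeta=\beta$: the former yields $\Zh(\lmp{S}(\beta+1))\geq\beta$ and the latter $\Zh(\lmp{S}(\beta+1))\leq\beta$, whence $\Zh(\lmp{S}(\beta+1))=\beta$ as well.

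Since the theorem is nothing more than the conjunction of these two matching pairs of inequalities, there is no genuine obstacle remaining; the one point that needs care is purely bookkeeping, namely reading the parameter $\zeta+1$ appearing in the corollaries as the argument $\beta+1$ of the process in the statement, so that the lower and upper bounds are compared for the very same LMP. This is why the statement can be closed immediately with \qed.
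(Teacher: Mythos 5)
Your proposal is correct and matches the paper exactly: the theorem is stated there with an immediate \qed precisely because it is the conjunction of the lower bounds from Corollary~\ref{cor:zhou-geq-beta} and the upper bounds from Corollary~\ref{cor:zhou-leq-beta}, instantiated as you describe (with $\zeta=\beta$ in the successor case). Nothing further is needed.
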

From this we have another proof of Corollary~\ref{cor:Zh-seplmp},
since it is elementary to check that for countable $\beta$,
$\lmp{S}(\beta)$ is separable.

We close this section by studying the possible dependency of the value
of $\Zh(\seplmp)$ under different set-theoretical hypotheses
(consistent relative to Zermelo-Fraenkel set theory).

Assuming the Continuum Hypothesis ($\CH$) we have
$2^{\aleph_0}=\aleph_1<2^{\aleph_1}$ and in consequence the space
$(I\times\omega_1,\Borel_V\otimes\Power(\omega_1))$ is not 
separable. 
Indeed, there are $2^{\aleph_1}=\left\vert{\Power(\omega_1)}\right\vert$
subsets of  $\omega_1$ and each of them defines a different measurable
set in the product $\sigma$-algebra, but the  cardinal of a countably generated
$\sigma$-algebra is at most $2^{\aleph_0}$.

On the other hand, if we assume \emph{Martin's Axiom} ($\mathit{MA}$)
and the negation of $\mathit{CH}$, any uncountable subset $X\subseteq\mathbb{R}$
with cardinality less than $2^{\aleph_0}$ is a \emph{$Q$-set} (see
e.g.\ Miller \cite{miller}),
viz.\ one
such that all of its subsets are relative $G_\delta$. If we choose $X$
such that $\card{X}=\aleph_1$, the relative topology has a countable
base (the relativization of such a base for $\mathbb{R}$) and it
generates $\Power(X)$ as a $\sigma$-algebra since $X$ is a
$Q$-set. Then, $\Power(X)\cong\Power(\omega_1)$ is separable and in
consequence $(I\times\omega_1,\Borel_V\otimes\Power(\omega_1))$, the
state space of $\lmp{S}(\omega_1)$, also is. In
this context the bound in Corollary~\ref{cor:Zh-seplmp} is attained.

\section{Conclusion}\label{sec:conclusion}

The Zhou ordinal provides a measure of the failure of the
Hennessy-Milner property on labelled Markov processes over general
measurable spaces. The general study of this ordinal on processes over
separable metrizable spaces has opened several questions.

The proof of Theorem~\ref{th:zhou-is-limit} shows in particular that
given a process $\lmp{S}$ with $\Zh(\lmp{S})\geq\alpha$, we can
construct a second one $\lmp{S'}$ with 
$\Zh(\lmp{S'})=\alpha+1$,
whenever $\alpha$ is a \emph{successor} ordinal. But we actually do
not know how to obtain this result for general $\alpha$. This is even
clearer for $\alpha=0$: The construction of the initial counterexample
from \cite{Pedro20111048}
does not follow the pattern of what we do in the passage from
$\alpha+1$ to $\alpha +2$.
The same happens with the proof of the
Theorem~\ref{th:cf-Zhou-gt-omega}. Given
$\{\lmp{S}_\alpha\}_{\alpha<\beta}$ with $\beta$ limit, the second
general question is to construct in a natural way some
$\lmp{S}_\beta$ such that  
$\Zh(\lmp{S}_\beta)=\sup_{\alpha<\beta} \Zh(\lmp{S}_\alpha)$.
For the case of countable $\beta$, we obtain a process $\lmp{T}$ with 
$\Zh(\lmp{T})\geq \sup_{\alpha<\beta} \Zh(\lmp{S}_\alpha)$ (actually,
strictly greater).

It is to be noted that the last inequality can be
upgraded to an equality by passing to an appropriate quotient. We know
how to perform this construction to get a process with Zhou ordinal
a limit  $\beta$ from one with a larger ordinal, but this needs
further study in general. Another avenue to pursue is the 
characterization of event bisimilarity on the processes 
$\lmp{S}(\alpha)$. It can be proved that our proposed bound $\Lambda$ 
is indeed equal to $\sigma(\sem{\Logic})$ for countable $\alpha$; 
also $\sigma(\sem{\Logic})$ is always countably generated. But 
consistently, $\Lambda$ on $\lmp{S}(\omega_1)$ is not.

As the main open question, we did not settle if 
$\Zh(\seplmp)$ is
actually a (regular) cardinal. An early conjecture was that
$\Zh(\seplmp) =\omega_1$ (unconditionally), but this is now counterintuitive
in view of the existence of a separable LMP with such ordinal under $\MA+\neg\CH$.
It is also to be noted that if we were able  to pass from any 
process with ordinal $\alpha$ to one with ordinal $\alpha+1$, the
same set-theoretical assumptions would let us conclude 
$\Zh(\seplmp) \geq \omega_1\cdot 2$ by 
Theorem~\ref{th:cf-Zhou-gt-omega}. To put it in focus, (consistently)
finding a 
separable $\lmp{S}$ with $\Zh(\lmp{S})\geq \omega_1+1$ is the next
question to address.

\providecommand{\noopsort}[1]{}

\end{document}